\crefname{figure}{Figure}{Figures}
\newcolumntype{P}[1]{>{\centering\arraybackslash}p{#1}}
\renewcommand\nomgroup[1]{%
  \item[\bfseries
  \ifstrequal{#1}{A}{Function Spaces}{%
  \ifstrequal{#1}{B}{Finite Element Arrays, Domains, and Spaces}{%
  \ifstrequal{#1}{C}{Quantum Computing}{%
  \ifstrequal{#1}{D}{Other Symbols}{
  }}}}%
]}
\newtheorem{thm}{Theorem}[section]
\newtheorem{defn}[thm]{Definition}
\newtheorem{prop}[thm]{Proposition}
\newtheorem{cor}[thm]{Corollary}
\newtheorem{lem}[thm]{Lemma}
\newtheorem{rem}[thm]{Remark}
\newtcolorbox[auto counter, number within=section, use counter=thm]{examplebox}[2][]{%
  colback=black!5!white,      %
  colframe=blue!25!black,         %
  coltitle=white,         %
  fonttitle=\bfseries,    %
  coltext=black,          %
  boxrule=0.5mm,          %
  width=\linewidth,       %
  before=\par\medskip\noindent,  %
  after=\par\medskip\noindent,            %
  title={Example~\thethm: #2}, %
  breakable,
  enhanced jigsaw,
  before upper={\addtocounter{thm}{-1}\refstepcounter{thm}},
  #1,
}
\DeclarePairedDelimiter\abs{\lvert}{\rvert}%
\DeclarePairedDelimiter\norm{\lVert}{\rVert}%
\newcommand{\ketbra}[2]{\ket{#1}\!\bra{#2}}
\newcommand{\NOT}{\mathrm{NOT}}
\newcommand{\PREP}{\textsc{prep}}
\newcommand{\numel}{\textnormal{\texttt{numel}}} %
\newcommand{\numnp}{\textnormal{\texttt{numnp}}} %
\newcommand{\nen}{\textnormal{\texttt{nen}}} %
\newcommand{\numbp}{\textnormal{\texttt{numbp}}} %
\newcommand{\el}{\textnormal{\texttt{el}}} %
\newcommand{\nn}{\textnormal{\texttt{nn}}} %
\newcommand{\np}{\textnormal{\texttt{np}}} %
\newcommand{\mesh}{\mathfrak{mesh}} %
\newcommand{\IX}{\mathrm{IX}}
\newcommand{\zpad}{\operatorname{\mathbf{zpad}}}
\newcommand{\LCU}{\operatorname{LCU}}
\newcommand{\uoi}{\hat{A}^\parallel}
\newcommand{\uoip}{\hat{A}^{\parallel,p}}
\newcommand{\uoim}[1]{\hat{A}^{\parallel,#1}}
\DeclareMathOperator*{\bA}{\pmb{\mathbf{A}}}
\newcommand{\bracfrac}[2]{\left(\frac{#1}{#2}\right)}
\newcommand{\bbR}{\mathbb{R}}
\newcommand{\bbZ}{\mathbb{Z}}
\newcommand{\bbC}{\mathbb{C}}
\newcommand{\bbP}{\mathbb{P}}
\newcommand{\bbD}{\mathbb{D}}
\newcommand{\ellb}{{\bm{\ell}}}
\newcommand{\betab}{\bm{\beta}}
\newcommand{\bb}{\bm{b}}
\newcommand{\eb}{\bm{e}}
\newcommand{\fb}{\bm{f}}
\newcommand{\jb}{\bm{j}}
\newcommand{\nb}{\bm{n}}
\newcommand{\sbm}{\bm{s}}
\newcommand{\ub}{\bm{u}}
\newcommand{\xb}{\bm{x}}
\newcommand{\Ab}{\bm{A}}
\newcommand{\Jb}{\bm{J}}
\newcommand{\Xb}{\bm{X}}
\newcommand{\Dc}{{\mathcal D}}
\newcommand{\Fc}{{\mathcal F}}
\newcommand{\Hc}{{\mathcal H}}
\newcommand{\Kc}{{\mathcal K}}
\newcommand{\Lc}{{\mathcal L}}
\newcommand{\Mc}{{\mathcal M}}
\newcommand{\Nc}{{\mathcal N}}
\newcommand{\Oc}{{\mathcal O}}
\newcommand{\Qc}{{\mathcal Q}}
\newcommand{\Uc}{{\mathcal U}}
\newcommand{\zerob}{{\boldsymbol{0}}}
\newcommand{\xib}{\boldsymbol{\xi}}
\newcommand{\Phib}{\bm{\Phi}}
\newcommand{\lambdab}{\bm{\lambda}}
\newcommand{\Psib}{\bm{\Psi}}
\begin{document}

\begin{center}
    {\textbf{\Large{A Quantum Algorithm for the Finite Element Method}}}\\
    \vspace{0.21in}

    {\large Ahmad M. Alkadri}$^{\hyperlink{email-alkadri}{*}}$\\
    {\small 
    {\emph{Department of Chemical \& Biomolecular Engineering, University of California, Berkeley,}\\
    \emph{Berkeley, California 94720, USA}
    }
    }\\[1em]

    {\large Tyler D. Kharazi}$^{\hyperlink{email-kharazi}{\dag}}$\\
    {\small 
    \emph{Department of Chemistry, University of California, Berkeley,}\\
    \emph{Berkeley, California 94720, USA}
    }\\[1em]

    {\large K. Birgitta Whaley}$^{\hyperlink{email-whaley}{\ddag}}$\\
    {\small
    \emph{Department of Chemistry, University of California, Berkeley,}\\
    \emph{Berkeley, California 94720, USA}\\
    \emph{and Berkeley Center for Quantum Information and Computation, University of California, Berkeley,}\\
    \emph{Berkeley, California 94720, USA}
    }\\[1em]

    {\large Kranthi K. Mandadapu}$^{\hyperlink{email-mandadapu}{\S}}$\\
    {\small
    \emph{Department of Chemical \& Biomolecular Engineering, University of California, Berkeley,}\\
    \emph{Berkeley, California 94720, USA}\\
    \emph{and Chemical Sciences Division, Lawrence Berkeley National Laboratory,}\\
    \emph{Berkeley, California 94720, USA}
    }\\[1em]
\end{center}

\vspace{13pt}

\begin{abstract}
    The finite element method (FEM) is a cornerstone numerical technique for solving partial differential equations (PDEs).
    Here, we present \textbf{Qu-FEM}, a fault-tolerant era quantum algorithm for the finite element method. 
    In contrast to other quantum PDE solvers, Qu-FEM preserves the geometric flexibility of FEM by introducing two new primitives, the \emph{unit of interaction} and the \emph{local-to-global indicator matrix}, which enable the assembly of global finite element arrays with a constant-size linear combination of unitaries. 
    We study the modified Poisson equation as an elliptic problem of interest, and provide explicit circuits for Qu-FEM in Cartesian domains. 
    For problems with constant coefficients, our algorithm admits block-encodings of global arrays that require only $\tilde{\Oc}\left(d^2 p^2 n\right)$ Clifford+$T$ gates for $d$-dimensional, order-$p$ tensor product elements on grids with $2^n$ degrees of freedom in each dimension, where $n$ is the number of qubits representing the $N=2^n$ discrete grid points.
    For problems with spatially varying coefficients, we perform numerical integration directly on the quantum computer to assemble global arrays and force vectors. Dirichlet boundary conditions are enforced via the method of Lagrange multipliers, eliminating the need to modify the block-encodings that emerge from the assembly procedure. 
    This work presents a framework for extending the geometric flexibility of quantum PDE solvers while preserving the possibility of a quantum advantage.
\end{abstract}
\vspace{15pt}

\noindent\rule{4.6cm}{0.4pt}

\noindent\small{%
	\hypertarget{email-alkadri}{$*$\,\href{mailto:ahmadalkadri@berkeley.edu}{ahmadalkadri@berkeley.edu}}\\
	\hypertarget{email-kharazi}{$\dag$\,\href{mailto:kharazitd@berkeley.edu}{kharazitd@berkeley.edu}}\\
	\hypertarget{email-whaley}{$\ddag$\,\href{mailto:whaley@berkeley.edu}{whaley@berkeley.edu}}\\
	\hypertarget{email-mandadapu}{$\S$\,\href{mailto:kranthi@berkeley.edu}{kranthi@berkeley.edu}}
}

\vspace{25pt}

 \newpage
\small \tableofcontents
\vspace{08pt}

\tikzexternaldisable %

\newpage
\section{Introduction}
\label{sec:intro}

Many problems of interest to engineers and scientists are governed by partial differential equations (PDEs). These include the Navier-Cauchy equations that govern the deformation of linear elastic solids in mechanical structures, the Navier-Stokes equations that govern the flow of Newtonian fluids, the reaction-diffusion equation for reactor design, Maxwell's equations for electromagnetism, the Fokker-Planck equation (FPE) for the evolution of probability density functions in statistical mechanics, and even the Schr\"{o}dinger equation in quantum mechanics~\cite{Muskhelishvili1977,thorne2017modern,leal2007advanced,bird2006transport,jackson2021classical,nielsen2010quantum}. 
A great deal of classical computing resources have been dedicated to the numerical solution of PDEs, since their predictive capability can be leveraged by engineers for design, and by scientists to gain physical insights into systems modelled by the PDEs. Even with access to classical high-performance computing (HPC) resources, however, the direct solution of many PDEs remains intractable. 

Problems where direct numerical solution (DNS) of PDEs using classical resources may be infeasible can be broadly characterized into two types: high-dimensional problems, and low-dimensional problems where one requires high-precision calculations. High-dimensional problems include the FPE and the backwards Kolmogorov equation (BKE), which describe the evolution (forward and backwards in time, respectively) of a probability density for a system with $\eta$ particles in $d$-dimensions~\cite{risken1996fokker,E2006,hasyim2022supervised}.
Discretizing space into $N$ grid points in each dimension to solve these PDEs requires a number of grid points that scales exponentially with the number of particles as $N^{d\eta}$, quickly rendering direct simulation of the FPE and BKE infeasible. 

One example of a low-dimensional problem that requires high-precision computing is simulation of the Navier-Stokes (NS) equations in the turbulent regime, where it is crucial to adequately resolve the smallest scales of turbulence. In turbulent flows, these smallest scales are known as the \textit{Kolmogorov microscales}, representing the scales at which viscous dissipation occurs~\cite[Chapter~5]{pope2000turbulent}. The Kolmogorov microscales $L_\eta$ scale with the Reynolds number $(Re)$ of the flow as $L_\eta \sim Re^{-3/4}$~\cite[Chapter~6]{frisch1995turbulence}.
This means that to run a Finite Element Method (FEM) simulation of the (three-dimensional) NS equations with a mesh that is fine enough to capture the Kolmogorov microscales requires a total number of elements that scales as $(Re^{3/4})^3 = Re^{9/4}$.
The least memory-intensive finite element for solving the NS equations is the ``MINI element'', which contains a total of $19$ degrees of freedom (DOF)~\cite{arnold1984stable,Fortin1993}. Additionally, a typical Reynolds number for fluid flow over the wing of a commercial aircraft is $Re \sim 10^7$.
Assuming double precision (i.e., $8$ bytes) for each DOF, to simply store the solution at a single timestep of a turbulent simulation over an airplane wing requires memory on the order of 
\begin{align*}
    (10^7)^{9/4} \,\text{elements} \cdot 19 \,\frac{\text{DOFs}}{\text{element}} \cdot 8 \,\frac{\text{bytes}}{\text{DOF}} \approx 10^3 \,\text{petabytes}
    \,.
\end{align*}
The Frontier supercomputer has a total system memory of $9.2$ petabytes---one of the highest among all known capacities for HPC~\cite{ornl_frontier_user_guide,hpcwire_frontier}---still orders of magnitude smaller than what would be required to simply store the solution, rendering DNS practically infeasible. For this reason, current classical methods circumvent DNS by using alternate methods such as Reynolds-averaged Navier-Stokes (RANS), Large Eddy Simulation (LES), or hybrid RANS-LES methods (such as Detached Eddy Simulation), which significantly reduce computational costs by using effective models for the smallest scales instead of explicitly resolving them~\cite{pope2000turbulent}.
An amplitude encoding of the solution on a quantum computer, however, utilizes a logarithmic number of resources, reopening the potential for DNS of the NS equations in the turbulent regime.

The potential for quantum computers to efficiently carry out the DNS of PDEs that would otherwise require an astronomical (and financially prohibitive) amount of classical resources has spurred the development of several quantum algorithms for PDEs beyond the Schr\"{o}dinger equation of quantum mechanics.
These include quantum adaptations of classical numerical methods such as finite difference methods~\cite{childs2021high} and spectral methods~\cite{childsQuantumSpectralMethods2020}, as well as novel quantum approaches that use ``Schr\"{o}dingerization''~\cite{Jin2024Schrodingerization,jin2025quantumpreconditioningmethodlinear} and Koopman-von Neumann~\cite{lloyd2020quantumalgorithmnonlineardifferential,An2022} theory to map classical PDEs to a Hamiltonian simulation problem. To the authors' knowledge, however, every quantum algorithm for solving classical PDEs in the current literature is limited to rectangular geometries (see Table~\ref{tab:pde-methods}). 
The FEM is one of the most natural paths to solving problems of practical interest---especially low-dimensional ones---in arbitrary geometries.
The FEM can easily handle complex geometries, irregular shapes, and boundaries, making it ideal for real-world problems.
However, explorations of a quantum adaptation of the FEM have been limited so far in the literature.

\begin{table}
	\centering
	\renewcommand{\arraystretch}{1.1} %
	\footnotesize
	\begin{tabular}{|P{0.8cm}|P{2cm}|P{2.7cm}|P{1.85cm}|P{2cm}|P{2cm}|P{1.5cm}|}
		\hline
		& \textbf{Numerical Method} & \textbf{PDE Type} & \textbf{Boundary Conditions} & \textbf{Geometric Flexibility} & \textbf{Time Complexity $\tilde{\Oc}(d,p,N)$} & \textbf{Error Scaling $\epsilon(d,p,N)$} \\
		\hline
		\multirow{4}{0.8cm}{\centering \rotatebox{90}{\parbox{1.3cm}{\centering Classical\\$u$}}} 
		& Spectral \cite{boyd2001chebyshev} & \multirow{4}{1.2cm}{\centering Any} & \multirow{4}{1.85cm}{\centering Any} & Limited & $d N^{d}\log N$ & $\exp(-N)$ \\[0em]
		\cline{2-2}\cline{5-7}
		& FDM \cite{strikwerda2004finite} & & & Limited & \multirow{3}{2cm}{\centering $ p^{d+1} N^{d+1}$} & \multirow{3}{1.5cm}{\centering $ N^{-p/d}$} \\[0em]
		\cline{2-2}\cline{5-5}
		& FVM \cite{eymard2000finite} & & & Moderate/high & & \\[0em]
		\cline{2-2}\cline{5-5}
		& FEM \cite{ciarlet2002finite,brenner2008mathematical} & & & Very high & & \\[0em]
		\hline
		\multirow{5}{0.8cm}{\centering \rotatebox{90}{\parbox{3.5cm}{\centering Quantum\\$\ket{u}$}}}
		& Spectral \cite{childs2021high} & Elliptic & \multirow{5}{1.85cm}{\centering \vfill Periodic, Dirichlet, Neumann} & & $d N^5$ & $\exp(-N)$ \\[0em]
		\cline{2-3}\cline{6-7}
		& FDM \cite{kharazi2024explicitblockencodingsboundary} & Elliptic & & Rectangular & $d^2 p N^2$ & \multirow{4}{1.5cm}{\centering \vfill$N^{-p/d}$} \\[0em]
		\cline{2-3}\cline{6-6}
		& FVM \cite{fillion2019simple} & Hyperbolic & & & \parbox{1.5cm}{\centering $d N^2$ \\ ($p \equiv 1$)} & \\[0em]
		\cline{2-3}\cline{5-6}
		& \multirow{2}{2cm}{\centering Qu-FEM [This work]} & \parbox{2.5cm}{\centering \vspace{2pt} Elliptic\\ (const. coeffs.)} & & \multirow{2}{2cm}{\centering edge-to-edge rectangular tilings} & $d^3 p^2 N^2$ & \\[0em]
		\cline{3-3}\cline{6-6}
		& & \parbox{2.75cm}{\centering \vspace{2pt} Elliptic\\ (black-box\footnotemark{} coeffs.)} & & & $d^3 p^{2d} N^2 $ & \\[0em]
		\hline
	\end{tabular}
	\caption{%
		Comparison of various classical and quantum numerical methods for solving PDEs. Here, $N = 2^n$ is the number of grid points per spatial direction, $d$ is the physical dimension, and $p$ is the order of the element/stencil.
		The scalings are generated assuming: a condition number $\kappa$ of order $\Oc(N^2)$, a classical conjugate gradient linear systems solver with time complexity $\Oc(N^d \sqrt{\kappa})$, and a QLSP solver with a time complexity of $\tilde{\Oc}(d \kappa)$ queries to the block-encoding of the differential operator. The $\tilde{\Oc}$ notation hides any $\operatorname{polylog}$ factors (see Theorems~\ref{thm:query-complexity-Assembly-dD-degree-p} and~\ref{thm:assembly-via-numerical-integration-d-dims} for a full accounting of the complexity scalings).
		Note that these scalings do not account for the effects of preconditioning, adaptive meshing, or the dependence of the scalings on derivatives of the solution.
	}
	\label{tab:pde-methods}
\end{table}

\footnotetext{The exponential scaling in dimension $p^{2d}$ is due to the cost of performing arbitrary state preparation in $d$-dimensions. If additional structure is present, this scaling could be made polynomial (see Ref.~\cite{kharazi2024explicitblockencodingsboundary} for an example).}

Following Harrow, Hassidim and Lloyd's~\cite{harrow2009quantum} (HHL) algorithm for solving the quantum linear system's problem (QLSP)\footnote{Given a system of linear equations $A\xb = \bb$, a QLSP solver prepares a normalized state $\ket{x}$ proportional to the solution $\xb$ using the normalized vector $\ket{b}$ and some access model for the matrix $A$ as inputs. See Sections~\ref{sec:background} and~\ref{sec:block-encoding-framework} for a brief summary of quantum computing notation.} 
in logarithmic time (though with several caveats; see~\cite{aaronson2015read}), Clader et al.~\cite{clader2013preconditioned} explored the FEM as a potential application of the HHL algorithm that could yield a quantum advantage. The authors of~\cite{clader2013preconditioned} present a methodology that uses a one-sparse access oracle to build the FEM arrays, a quantum implementation of the sparse approximate inverse (SPAI) preconditioner, and the HHL algorithm to calculate the radar cross section of a metallic scattering region (by solving Maxwell's equations). 
Claims of a quantum advantage in~\cite{clader2013preconditioned}, however, were challenged by 
Montanaro and Pallister~\cite{montanaro2016quantum}, who used estimates for the condition numbers and convergence rates that emerge from FEM theory to analyze the performance of a QLSP solver applied to the FEM. Montanaro and Pallister find that quantum speedups that arise from such an application can become diminished when performance is analyzed with respect to the desired solution accuracy $\epsilon$ rather than the system size $N$. Nevertheless, they conclude that:\\
\indent ``\textit{{[...]} there are still two types of problem where quantum algorithms for the FEM could achieve a significant advantage over classical algorithms: those where the solution has large higher-order derivatives, and those where the spatial dimension is large.}''\\
We note that in addition to quantum advantage in terms of time complexity, problems where storage of the solution requires an exponential amount of resources---such as DNS of the NS equations at the Kolmogorov length-scale outlined above---constitute another class of ``quantum advantage'' through space complexity.

More recently, Arora et al.~\cite{arora2024implementation} utilized classical FEM procedures to obtain a unitary decomposition of the stiffness matrix and employed the variational quantum linear solver (VQLS) to introduce a method for noisy intermediate-scale quantum (NISQ) computers. 
Deiml and Peterseim~\cite{deiml2024quantum} present a quantum algorithm for solving second-order elliptic PDEs on Cartesian grids that makes use of the BPX preconditioner~\cite{bramble1990parallel} to transform the linear system into a well-conditioned one suitable for quantum computation. The use of a preconditioner is particularly notable, as it is important to maintain a bounded condition number to achieve quantum advantage in low dimensions.
Quantum algorithms for preconditioning have remained relatively underexplored, however, as most classical preconditioning methods rely on explicit operations that are challenging to implement efficiently on quantum computers. In addition to block-encoding finite element arrays in more general geometries, generalizing preconditioners to also fit these geometries presents a challenge.

In this article, we present a framework for implementing the Finite Element Method (FEM) on a fault-tolerant quantum computer---a quantum algorithm that we abbreviate as ``Qu-FEM''. 
Classically, the FEM forms global arrays via an \textit{assembly process} from local elemental contributions~\cite{ciarlet2023finite,papadopoulos2015280a,donea2003finite}. Since the elemental contributions depend only on information that is local to the element, this assembly process can be efficiently parallelized, which has made FEM well-suited for HPC~\cite{brenner2008mathematical,bramble1990parallel}. Here, we seek to emulate this paradigm of parallel assembly on a quantum computer. 

To that end, the Qu-FEM framework utilizes a new device that we call ``the unit of interaction'' (see Definition~\ref{defn:unit-of-interaction}) to obtain efficient block-encodings of finite element arrays for high-dimensional problems with constant elemental contributions. For low-dimensional problems---where the elemental stiffness matrix can be non-constant---we utilize an operator that is closely related to the unit of interaction that we name ``the local-to-global node number indicator matrix'' (see Definition~\ref{defn:local-to-global-node-number-indicator-matrix}). By performing numerical integration directly on the quantum computer, we provide an efficient solution to the general assembly problem 
(Definition \ref{def:general-assembly-problem}) in low dimensions. 
As we introduce each of these concepts, we demonstrate that each component of Qu-FEM can be efficiently implemented for domains that are edge-to-edge rectangular tilings (i.e., rasterized and Cartesian domains).
This matches the geometric flexibility of the quantum algorithm for solving classical PDEs presented by~\cite{kharazi2024explicitblockencodingsboundary}. 
The Qu-FEM framework, however, makes use of oracles that are \textit{a priori} agnostic to the particular geometry of the problem, rendering it a good point of departure for implementing a quantum FEM on more arbitrary geometries.

The remainder of this article is structured as follows. In Section~\ref{sec:background}, we summarize the essential notation, definitions, and results from both the quantum computing and finite element literature that are used in this article. Additional background is presented in Appendices~\ref{sec:analysis-of-FEM} and~\ref{sec:block-encoding-framework} for convenience. In Section~\ref{sec:assembly-of-global-arrays-linear}, we introduce the main tools used to assemble block-encodings of finite element arrays: the unit of interaction (Definition~\ref{defn:unit-of-interaction}) and the local-to-global node number indicator matrix (Definition~\ref{defn:local-to-global-node-number-indicator-matrix}). We also present explicit block-encodings for finite element arrays with constant elemental contributions, and example boxes that carry out sample computations. 
In Section~\ref{sec:numerical-integration}, we give an algorithm for the quantum assembly of finite element arrays with non-constant elemental contributions by performing numerical integration directly on the quantum computer. 
In Section~\ref{sec:constraints}, we demonstrate how to enforce Dirichlet boundary conditions through a quantum algorithm for the method of Lagrange multipliers. 
Additionally, we also assemble the boundary force vector for enforcing non-homogeneous Neumann boundary conditions, as well as the body force vector for non-homogeneous PDEs.
We end with a numerical demonstration of the Qu-FEM algorithm in Section~\ref{sec:demo}.

\section{Background}
\label{sec:background}

Throughout the article, we use notation that is prominent in both the quantum computing and the classical finite element literature. In this section, we summarize the pertinent definitions, notation, and results from both literatures for convenience.
We also provide additional background on the theoretical analysis of FEM in Appendix~\ref{sec:analysis-of-FEM}, and additional background on the quantum block-encoding framework in Appendix~\ref{sec:block-encoding-framework}.

\subsection{Quantum Computing Notation}
\label{sec:notation}

We will denote quantum states and operators using Dirac notation (also called `bra-ket' notation)~\cite{nielsen2010quantum}. In particular, a state $\ket{\psi}_n$ will refer to a (usually normalized) vector in the space $\bbC^N$, where $N := 2^n$ for some number of qubits $n$. Tensor products of states and operators are denoted using the Kronecker product ``$\otimes$'' (e.g., $\ket{\psi}_n \otimes \ket{\phi}_m$ or $\ket{\psi}_n \otimes U_A$ for some operator $U_A$), though we sometimes omit the tensor product symbol between states for convenience (e.g., $\ket{\psi}_n \ket{\phi}_m$). 
For any integer $k$, we denote the set of non-negative integers less than $k$ by $[k] := \{0, 1, \dots, k-1\}$. 
For a matrix $A$, $\|A\| := \sup_{\braket{\psi | \psi} = 1} \|A\ket{\psi}\|$ is the operator norm induced by the vector $2$-norm $\|\ket{\psi}\| := \sqrt{\braket{\psi | \psi}}$ on the Hilbert space that $\ket{\psi}$ originates from (this is also called the ``spectral norm''). We denote the max norm as $\|A\|_\mathrm{max} := \max_{ij} |A_{ij}|$. Additionally, we write $\norm{A}_1 := \sum_{ij} \abs{A_{ij}}$ for the entrywise $\ell_1$-norm (not to be confused with the induced $1$-norm).

Our algorithm is designed to run on a fault-tolerant quantum computer, for which the Clifford $+$ $T$ gates form a universal gate set for quantum computing~\cite{nielsen2010quantum} (see Figure~\ref{fig:Clifford+T-gate-set}). With stabilizer quantum error correction codes (such as the surface code) in mind, we anticipate that circuits designed with these gates will be most compatible with error correction protocols in the era of fault-tolerant quantum computing \cite{wang2024comprehensive}. 
In particular, the Toffoli gate, and other multi-control gates, will be utilized extensively in our constructions. Ref.~\cite{nielsen2010quantum} gives a circuit for the Toffoli gate in terms of Clifford $+$ $T$ gates, which we summarize in Figure~\ref{fig:Toffoli-gate}. 
The circuits in this article use mostly standard quantum circuit notation~\cite{nielsen2010quantum}. Additionally, with $b$ as a binary string of size $k-j+1$, we will use the notation $C_{b}^{j:k}(U_m)$ to denote the controlled application of unitary $U$ on qubit number $m$, conditioned on qubits $j$ through $k$ being in the $\ket{b}$ state.
The $\oslash$ symbol is used to indicate a \textit{multiplexer}, which controls on all states in an ancilla register (see Eq.~\eqref{eq:LCU-circuit} for an example).

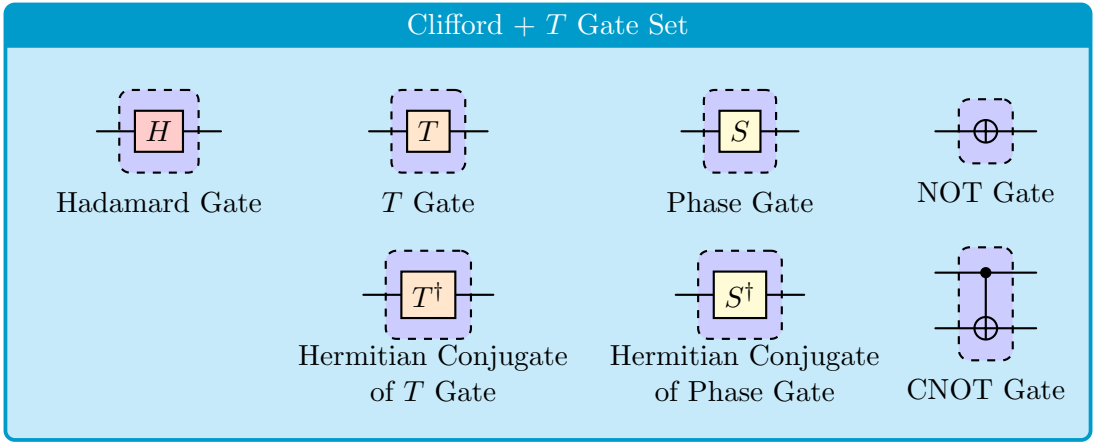
\begin{figure}
	\centering
	\begin{minipage}[t]{0.9\linewidth}
		\begin{tcolorbox}[colback=cyan!20, colframe=cyan!80!black, title=Clifford $+$ $T$ Gate Set, center title]
			\setlength{\tabcolsep}{2pt}
			\renewcommand{\arraystretch}{1.5}
			\begin{tabular}{cccc}
				\begin{tabular}{c}
					\begin{quantikz}        
						& \gate[style={fill=red!20}]{H} 
						\gategroup[1,steps=1,style={dashed,rounded
							corners,fill=blue!20, inner
							xsep=2pt},background]{}
						&
					\end{quantikz}\\ Hadamard Gate
				\end{tabular}
				& 
				\begin{tabular}{c}
					\begin{quantikz}        
						& \gate[style={fill=orange!20}]{T} 
						\gategroup[1,steps=1,style={dashed,rounded
							corners,fill=blue!20, inner
							xsep=2pt},background]{}
						&
					\end{quantikz}\\ $T$ Gate
				\end{tabular}
				& 
				\begin{tabular}{c}
					\begin{quantikz}        
						& \gate[style={fill=yellow!20}]{S} 
						\gategroup[1,steps=1,style={dashed,rounded
							corners,fill=blue!20, inner
							xsep=2pt},background]{}
						&
					\end{quantikz}\\ Phase Gate
				\end{tabular}
				& 
				\begin{tabular}{c}
					\begin{quantikz}        
						& \targ{}
						\gategroup[1,steps=1,style={dashed,rounded
							corners,fill=blue!20, inner
							xsep=2pt},background]{}
						&
					\end{quantikz}\\ NOT Gate
				\end{tabular}
				\\ & 
				\begin{tabular}{c}
					\begin{quantikz}        
						& \gate[style={fill=orange!20}]{T^\dag} 
						\gategroup[1,steps=1,style={dashed,rounded
							corners,fill=blue!20, inner
							xsep=2pt},background]{}
						&
					\end{quantikz}\\
					\renewcommand{\arraystretch}{1}
					\begin{tabular}{c} Hermitian Conjugate \\ of $T$ Gate \end{tabular}
				\end{tabular}
				& 
				\begin{tabular}{c}
					\begin{quantikz}        
						& \gate[style={fill=yellow!20}]{S^\dag} 
						\gategroup[1,steps=1,style={dashed,rounded
							corners,fill=blue!20, inner
							xsep=2pt},background]{}
						&
					\end{quantikz}\\
					\renewcommand{\arraystretch}{1}
					\begin{tabular}{c} Hermitian Conjugate \\ of Phase Gate \end{tabular}
				\end{tabular}
				& 
				\begin{tabular}{c}
					\begin{quantikz}        
						& \ctrl{1}
						\gategroup[2,steps=1,style={dashed,rounded
							corners,fill=blue!20, inner
							xsep=2pt},background]{}
						& \\
						& \targ{} & 
					\end{quantikz}\\ CNOT Gate
				\end{tabular}
			\end{tabular}
		\end{tcolorbox}
	\end{minipage}
	\caption{Circuit symbols for the Clifford $+$ $T$ gate set. This gate set is used as a universal gate set for fault-tolerant quantum computing.}
	\label{fig:Clifford+T-gate-set}
\end{figure}

\subsection{Quantum Primitives: Block-Encoding and Linear Combination of Unitaries}
\label{sec:BE-and-LCU}

In general, the matrices that arise in FEM are not unitary. The block-encoding access model presents a method of accessing a general $n$-qubit matrix $A$ on the quantum computer by embedding it into a larger, unitary matrix $U_A$, and then post-selecting the subspace that the desired matrix lies in.

\begin{defn}[$(\alpha,m,\epsilon)$-Block-Encoding \cite{gilyenQuantumSingularValue2019a,lin2022lecture}]
    For an $n$-qubit matrix $A$, we say that the $(n+m)$-qubit unitary matrix $U_A$ is an $(\alpha,m,\epsilon)$-block-encoding of $A$ if
    \begin{equation}
        \|A - \alpha (\bra{0}^{\otimes m}\otimes I_n) U_A (\ket{0}^{\otimes m}\otimes I_n)\| \leq \epsilon \,,
    \end{equation}
    in which case we write $U_A \in (\alpha,m,\epsilon)\mathrm{-BE}(A)$. 
    The parameter $\alpha > 0$ is called the \textbf{subnormalization factor}, while the integer $m \ge 0$ signifies the number of ancilla needed to implement the block-encoding.
    If we can implement the above unitary $U_A$ exactly, then we call it an $(\alpha,m)$-block-encoding of $A$, and write $U_A \in (\alpha,m)\mathrm{-BE}(A)$.
    \label{def:block-encoding}
\end{defn}
Henceforth in this work, we assume that block-encodings are implemented exactly, and only analyze the gate cost, subnormalization, and number of ancilla qubits needed to implement the finite element method. For a block-encoding of a matrix to exist, we require that its spectral norm satisfy $\|A\| \le 1$. We represent the block-encoding as a $2^m \times 2^m$ block matrix
\begin{equation}
    U_A = \begin{bmatrix}
        A/\alpha & * & \cdots & *\\
        * & * & \cdots & *\\
        \vdots & \vdots & \ddots & \vdots\\
        * & * & \cdots & *
    \end{bmatrix}\,,
\end{equation}
where the $*$ represent $n$-qubit blocks whose entries are inconsequential. The block-encoding is said to succeed when the ancilla are all measured to be in the $\ket{0}$ state after application of $U_A$. As a circuit, a successful application of the block-encoding is denoted by
\tikzexternalenable
\begin{align}
\begin{quantikz}
    \lstick{$\ket{0}^{\otimes m}$} & \gate[2]{U_A} & \meter{} \rstick{$\bra{0}^{\otimes m}$}\\
    \lstick{$\ket{\psi}_n$} & & \rstick{$\frac{1}{\alpha}A\ket{\psi}_n$}
\end{quantikz}
\,.
\end{align}
\tikzexternaldisable
The subnormalization is related to the success probability in that the probability of measuring the ancilla in the all-zero (i.e., $0^m$) state is 
\begin{align}
    p(0^m) &= \frac{1}{\alpha^2} \| A \ket{\psi}_n \|^2 \,.
\end{align}

\definecolor{lavenderpurple}{rgb}{0.59, 0.48, 0.71}
\begin{figure}
    \centering
    \begin{minipage}[t]{0.9\linewidth}
    \begin{tcolorbox}[colback=lavenderpurple!20, colframe=lavenderpurple!80!black, title=Toffoli Gate, center title]
    \setlength{\tabcolsep}{2pt}
    \renewcommand{\arraystretch}{1.5}
    \begin{equation*}
    \begin{tabular}{c}
        \begin{quantikz}        
            & \ctrl{2}
            \gategroup[3,steps=1,style={dashed,rounded
            corners,fill=blue!20, inner
            xsep=2pt},background]{}
            & \\
            & \ctrl{1} & \\
            & \targ{} & 
        \end{quantikz}\\ Toffoli Gate
    \end{tabular}
    = 
    \begin{quantikz}[column sep=6pt]
        & & & & \ctrl{2} & & & & \ctrl{2} & & \ctrl{1} & & \ctrl{1} & \gate{T} &\\
        & & \ctrl{1} & & & & \ctrl{1} & & & \gate{T^\dag} & \targ{} & \gate{T^\dag} & \targ{} & \gate{S} & \\
        & \gate{H} & \targ{} & \gate{T^\dag} & \targ{} & \gate{T} & \targ{} & \gate{T^\dag} & \targ{} & \gate{T} & \gate{H} & & & &
    \end{quantikz}
    \end{equation*}
    \end{tcolorbox}
    \end{minipage}
    \caption{The Toffoli gate expressed in terms of Clifford $+$ $T$ gates, as given by \cite{nielsen2010quantum}.}
    \label{fig:Toffoli-gate}
\end{figure}
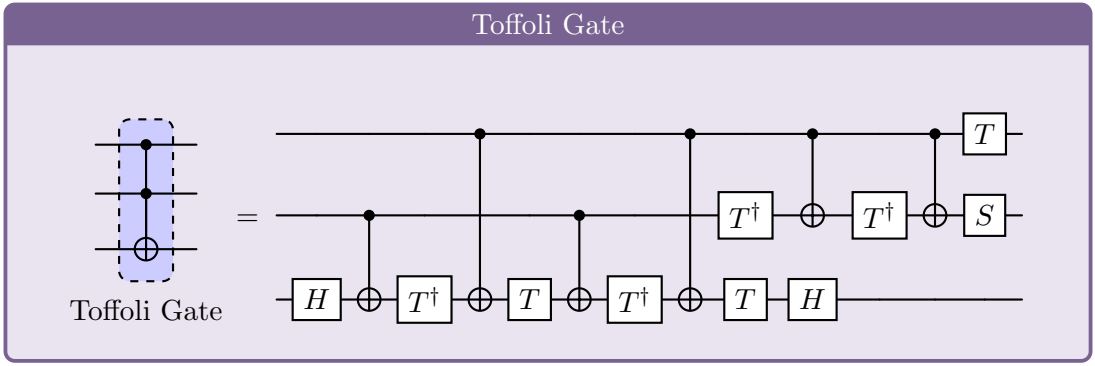

The assembly of global finite element arrays typically involves 
summing elemental contributions that couple the degrees of freedom (or nodes) within each element. The technique of the Linear Combination of Unitaries (LCU)~\cite{childsHamiltonianSimulationUsing2012} is a powerful quantum primitive that will enable us to perform many tasks such as assembly and numerical integration on the quantum computer. In what follows, we summarize LCU briefly, with more details regarding its proof given in Appendix~\ref{sec:block-encoding-framework}.

Let $L := \sum_{j \in [J]} \beta_j U_j$ be a linear combination of unitaries. For simplicity, suppose that $J = 2^p$, and for each $\beta_j \in \bbC$ write $\beta_j = r_j e^{i\theta_j}$ with $r_j > 0$ and $\theta_j \in [0,2\pi)$. We define the ``prepare oracle'' as any unitary \textsc{prep} that satisfies
\begin{align}
    \textsc{prep} \ket{0}^{\otimes p} &= \sum_{j \in [J]} \sqrt{\beta_j} \ket{j} \,,
    \label{eq:prep-R}
\end{align}
where we take the principal square root $\sqrt{\beta_j} = \sqrt{r_j} e^{i\theta_j/2}$. That is, $\textsc{prep}$ satisfies 
\begin{align}
    \textsc{prep} &= \frac{1}{\sqrt{\|\beta\|_1}} \begin{bmatrix}
        \sqrt{\beta_{0}} & * & \cdots & *\\
        \sqrt{\beta_{1}} & * & \cdots & *\\
        \vdots & \vdots & \ddots & \vdots\\
        \sqrt{\beta_{J-1}} & * & \cdots & *
    \end{bmatrix}
    \,,
\end{align}
where $\|\beta\|_1 := \sum_{j \in [J]} |\beta_j|$. 
Similarly, we require another ``prepare'' oracle, $\widetilde{\textsc{prep}}$, that prepares the LCU coefficients as a superposition of row vectors:
\begin{align}
    \bra{0}^{\otimes p} \widetilde{\textsc{prep}} &= \sum_{j \in [J]} \sqrt{\beta_j} \bra{j} \,,
    \label{eq:prep-L-dag}
\end{align}
or in matrix form 
\begin{align}
    \widetilde{\textsc{prep}} &= \frac{1}{\sqrt{\|\beta\|_1}} \begin{bmatrix}
        \sqrt{\beta_{0}} & \sqrt{\beta_{1}} & \cdots & \sqrt{\beta_{J-1}}\\
        * & * & \cdots & *\\
        \vdots & \vdots & \ddots & \vdots\\
        * & * & \cdots & *
    \end{bmatrix}
    \,.
\end{align}
Note that when $\theta_j = 0$ for all $j \in [J]$, then we can take $\widetilde{\textsc{prep}} := \textsc{prep}^\dag$. In addition, we assume access to the ``select'' oracle 
\begin{equation}
    \textsc{sel}:= \sum_{j \in [J]}\ket{j}\bra{j}\otimes U_j
    \,.
    \label{eq:select-oracle-defn}
\end{equation}
The LCU block-encoding $U_L \in (\|\beta\|_1,p)\mathrm{-BE}(L)$ can then be expressed as the $(n+q)$-qubit unitary
\begin{equation}    
    U_L := \left(\widetilde{\textsc{prep}} \otimes I_n\right)\cdot \textsc{sel} \cdot \left(\textsc{prep} \otimes I_n\right).
\end{equation}
In circuit form, we have 
\tikzexternalenable
\begin{align}
\begin{quantikz}
        \lstick{$\ket{0}^{\otimes p}$}& \gate{\textsc{prep}}
        \gategroup[2,steps=3,style={dashed,rounded
        corners,fill=blue!20, inner
        xsep=2pt},background,label style={label
        position=below,anchor=north,yshift=-0.2cm}]{{$(\|\beta\|_1, p)\mathrm{-BE}(L)$}}
        &\qw \mathlarger{\mathlarger{\mathlarger{\mathlarger{\oslash}}}} \vqw{1} &\gate{\widetilde{\textsc{prep}}} &\qw\rstick{$\bra{0}^{\otimes p}$}\\
        \lstick{$\ket{\psi}$}& \qw  &\gate{U_j} &\qw&\qw\rstick{$\frac{1}{\|\beta\|_1}L\ket{\psi}$} 
    \end{quantikz}
    \label{eq:LCU-circuit}
\end{align}
\tikzexternaldisable
where the $\oslash$ notation indicates control on all states in the ancilla register (i.e., this symbol represents the $\ketbra{j}{j}$ factor of the select oracle defined in Eq.~\eqref{eq:select-oracle-defn}).
As a shorthand for the LCU circuit (Eq.~\eqref{eq:LCU-circuit}), we will write
\begin{align}
    U_L &= \LCU\left((U_j)_{j \in [J]}, \beta \right) \,.
\end{align}
If instead we have that $\textsc{prep}$ and $\widetilde{\textsc{prep}}$ are implemented so that $U_L \in (\|\beta\|_1,p,\epsilon)\mathrm{-BE}(L)$, we shall write 
\begin{align}
    U_L &= \LCU_\epsilon\left((U_j)_{j \in [J]}, \beta \right) \,.
\end{align}

\subsection{Position Operators and Cartesian Coordinates}
\label{sec:position-operators-and-Cartesian-coordinates}

Given a bounded $d$-dimensional domain $\Omega \subset \bbR^d$, we want to encode a set of coordinates $x = (x^1,\dots,x^d) \in \bbR^d$ as amplitudes of some quantum state. Assume that $\Omega \subseteq [0,1]^d$. Since we eventually evaluate functions on a discrete set of nodal points and quadrature points of a mesh of $\Omega$ (in Section~\ref{sec:numerical-integration}), it is sufficient to be able to describe the coordinates on some lattice of $[0,1]^d$.

Suppose that we have $N = 2^n$ nodal points in each dimension. We will describe the positions on this lattice using a similar convention to Ref.~\cite{kharazi2024explicitblockencodingsboundary}. 
In one dimension, the domain $[0,1]$ becomes discretized by the lattice $\{i/(N-1) \mid i \in [N]\}$. This corresponds to some set of basis states $\ket{0},\dots,\ket{N-1}$. We want access to the \textit{position operator}\footnote{The position operator $X$ should not be confused with the NOT gate. In circuit diagrams, we will use cross-hairs for the NOT gate to avoid ambiguity (see Figure~\ref{fig:Clifford+T-gate-set}).} $X$, which satisfies 
\begin{align}
    X\ket{i} &= i \ket{i} \,,
    \label{eq:position-operator}
\end{align}
where $i \in [N]$.
We can block-encode the position operator using the following lemma.
\begin{lem}[LCU of diagonal position matrix, adapted from {\cite[Lemma~15]{mukhopadhyay2024PauliFierz}}]
\label{lem:position-operator}
Let the ``position operator" $X = \operatorname{diag}(0, 1, \ldots, N-1)$ be an $N\times N$ matrix with $N = 2^n$. Then the position operator can be written as the summation 
\begin{equation}
    X = \frac{N-1}{2}I - \frac{1}{2}\sum_{i=0}^{n-1}2^i Z^{(i)}
\end{equation}
where 
\begin{equation}
    Z^{(i)} = 
    \underbrace{ I \otimes\cdots\otimes I }_{(n-i-1)\text{-times}}
    \otimes Z \otimes 
    \underbrace{ I \otimes \cdots \otimes I }_{i\text{-times}}
    \,.
\end{equation}
Implementing this sum as an LCU (see \cref{lem:linear-combination-of-block-encodings}) defines a block-encoding of the discrete position operator $X$ with subnormalization $N-1$, and requiring $\log(n)$ ancilla qubits (i.e.,  $(N-1,\log(n))\mathrm{-BE}(X)$).
\end{lem}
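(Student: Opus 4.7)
The plan is to prove the lemma in two stages: first verify the algebraic identity for $X$ by direct computation on the computational basis, and then invoke the LCU construction from Lemma~\ref{lem:linear-combination-of-block-encodings} (the earlier LCU lemma referenced in the excerpt) to read off the subnormalization and ancilla count.

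For the first stage, I would expand an arbitrary basis state $\ket{k}$ for $k\in[N]$ in its binary representation as $\ket{k} = \ket{k_{n-1}}\otimes\cdots\otimes\ket{k_0}$, where $k = \sum_{i=0}^{n-1} 2^i k_i$ with $k_i\in\{0,1\}$. Since $Z\ket{b}=(1-2b)\ket{b}$ for $b\in\{0,1\}$, the single-qubit-$Z$-on-qubit-$i$ operator satisfies $Z^{(i)}\ket{k} = (1-2k_i)\ket{k}$. Summing with the binary weights gives
\begin{equation}
    \sum_{i=0}^{n-1} 2^i Z^{(i)}\ket{k} = \left(\sum_{i=0}^{n-1} 2^i - 2\sum_{i=0}^{n-1} 2^i k_i\right)\ket{k} = (N-1-2k)\ket{k},
\end{equation}
so that $\left(\tfrac{N-1}{2}I - \tfrac{1}{2}\sum_{i=0}^{n-1} 2^i Z^{(i)}\right)\ket{k} = k\ket{k}$, which is precisely $X\ket{k}$ by Eq.~\eqref{eq:position-operator}. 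Since the identity holds on every element of the computational basis and both operators are diagonal in this basis, the operator identity follows.

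For the second stage, I would view the identity as the linear combination of $n+1$ unitaries $\{I, Z^{(0)}, Z^{(1)}, \ldots, Z^{(n-1)}\}$ with real coefficients $\beta = \left(\tfrac{N-1}{2}, -\tfrac{1}{2}, -\tfrac{2}{2}, \ldots, -\tfrac{2^{n-1}}{2}\right)$. Its $\ell_1$ norm is
\begin{equation}
    \|\beta\|_1 = \frac{N-1}{2} + \frac{1}{2}\sum_{i=0}^{n-1} 2^i = \frac{N-1}{2} + \frac{N-1}{2} = N-1,
\end{equation}
so by the generic LCU construction the resulting block-encoding has subnormalization factor $N-1$. The ancilla register for the $\textsc{prep}$/$\widetilde{\textsc{prep}}$ pair must index the $n+1$ terms, requiring $\lceil\log_2(n+1)\rceil = \Theta(\log n)$ qubits, matching the stated $\log(n)$ ancilla count.

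There is no real obstacle here: the only subtlety is the bookkeeping of signs and the qubit-ordering convention in the definition of $Z^{(i)}$, which must be chosen so that qubit $i$ is the one carrying the binary weight $2^i$. Everything else is a direct application of the already-stated LCU lemma to an explicit sum of Pauli strings.
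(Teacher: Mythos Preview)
Your proof is correct. The paper does not actually supply a proof of this lemma---it is stated without proof and attributed to another reference---so there is nothing to compare against; your direct verification on the computational basis together with the $\ell_1$ bookkeeping for the LCU subnormalization is exactly the standard argument one would expect and is complete as written.
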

For $d$-dimensions, we represent the computational basis on $dn$ qubits using the tensor product 
\begin{align}
    \ket{\jb} &:= \ket{j_{d-1}}\cdots\ket{j_1}\ket{j_{0}} \,,
\end{align}
where $\jb = (j_0,\dots,j_{d-1})$ is a multi-index with $j_i \in [N]$ for all $i \in [d]$. The position operator for the $i$th coordinate is then denoted and defined by 
\begin{align}
    X^{(i)} &:= \underbrace{ I \otimes\cdots\otimes I }_{(d-i-1)\text{-times}}
    \otimes X \otimes 
    \underbrace{ I \otimes \cdots \otimes I }_{i\text{-times}}
    \,.
\end{align}
Notice that all of the position operators $X^{(i)}$ commute with one another. Additionally, $X^{(i)}$ plays the role of picking out the $i$th coordinate in the basis vector $\ket{\jb}$, i.e., 
\begin{align}
    X^{(i)}\ket{\jb} &= j_i \ket{\jb} \,.
\end{align}

Given a function $f \colon [0,1]^d \to \bbR$, we represent the function on the lattice using the position basis as the vector
\begin{align}
    \ket{f} &:= \frac{1}{\Nc_f} \sum_{\jb \in [N]^d} f\left(\jb/(N-1)\right) \ket{\jb} \,,
    \label{eq:f-vector-representation}
\end{align}
where $\Nc_{f} := \sqrt{\sum_{\jb \in [N]^d} |f\left(\jb/(N-1)\right)|^2}$ is a normalizing factor.
Alternatively, we can represent $f$ as an operator $\hat{f}$ that is diagonalized by the position basis as 
\begin{align}
    \hat{f}\ket{\jb} &= \frac{1}{\| f \|_{L^\infty([0,1]^d)}} f\left(\jb/(N-1)\right) \ket{\jb}
    \iff 
    \hat{f} = \frac{1}{\| f \|_{L^\infty([0,1]^d)}} \sum_{\jb \in [N]^d} f\left(\jb/(N-1)\right) \ket{\jb}\bra{\jb} 
    \,.
    \label{eq:f-operator-representation}
\end{align}
Given access to the operator form in Eq.~\eqref{eq:f-operator-representation}, we can always prepare the vector form in Eq.~\eqref{eq:f-vector-representation} by applying the operator to the uniform superposition state on $nd$ qubits:
\begin{align}
    \hat{f} \left( \ket{+}^{\otimes nd} \right) =
    \hat{f} \left( \frac{1}{N^{d/2}} \sum_{\jb \in [N]^d} \ket{\jb} \right) &= \frac{1}{N^{d/2}} \cdot \frac{1}{\| f \|_{L^\infty([0,1]^d)}} \sum_{\jb \in [N]^d} f\left(\jb/(N-1)\right) \ket{\jb} \,,
    \label{eq:f-hat-to-f}
\end{align}
where $\ket{+} := \frac{1}{\sqrt{2}} \left( \ket{0} + \ket{1} \right)$.
The success probability for a block-encoding $U_{\hat{f}}$ that implements $\hat{f}$ is given by the square of the norm of Eq.~\eqref{eq:f-hat-to-f}, which for large $N$ satisfies~\cite{mcardle2022quantum}
\begin{align}
    \| \hat{f} \ket{+}^{\otimes nd} \| = 
    \Fc_{f}^{[N]} &:= \frac{1}{\| f \|_{L^\infty([0,1]^d)}} \sqrt{ \frac{1}{N^d} \sum_{\jb \in [N]^d} |f\left(\jb/(N-1)\right)|^2 }\\
    &\approx \frac{\| f \|_{L^2([0,1]^d)}}{\| f \|_{L^\infty([0,1]^d)}} =: \Fc_{f}^{[\infty]}
    \,,
\end{align}
where $\Fc_{f}^{[N]}$ is the filling fraction of the function $f$. One may then do $\Oc\left(1/\Fc_{f}^{[N]}\right)$ rounds of exact amplitude amplification (see~\cite[Appendix A]{mcardle2022quantum}) to boost the success probability of the block-encoding $U_{\hat{f}}$ to $1$. 

The filling fraction represents the (square root of the) volume that $|f|^2/\| f \|_{L^\infty([0,1]^d)}^2$ fills in the hypercube $[0,1]^d$. Since the function $f$ is known, the filling fraction can be estimated (or at least lower-bounded) classically. Whenever we use the operator $\hat{f}$ to prepare the state $\ket{f}$, we will require that the filling fraction does not scale exponentially with the problem parameters (in particular, $N$ and $d$). 
Given a polynomial approximation for the function $f$, we can block-encode the operator $\hat{f}$ in one dimension by applying Quantum Signal Processing (reviewed in Section~\ref{sec:QSP}) on the position operator $X$, and more generally in $d$-dimensions using the Multivariate Quantum Eigenvalue Transformation (reviewed in Section~\ref{sec:MQET}) on the position operators $\Xb := \{X^{(i)}\}_{i \in [d]}$.

\subsection{Weak Formulation of the Modified Poisson's Equation as a Quantum Linear Systems Problem}
\label{subsec:weak-formulation-modified-Poisson}

Consider the modified Poisson equation
\begin{subequations}
\label{eq:modified-Poisson-system}
\begin{align}
    -\nabla \cdot (D \nabla u) + k u &= f 
    \quad \text{on } \Omega
    \,, \label{eq:modified-Poisson-eqn} \\
    u &= g 
    \quad \text{on } \Gamma_D \,, \label{eq:modified-Poisson-eqn-Dirichlet-bc}\\
    D\nabla u \cdot \nb  &= h
    \quad \text{on } \Gamma_N \,, \label{eq:modified-Poisson-eqn-Neumann-bc}
\end{align}
\end{subequations}
where $\Gamma_D$ and $\Gamma_N$ satisfy $\partial\Omega = \overline{\Gamma}_D \cup \overline{\Gamma}_N$ and are, respectively, the portions of the boundary where Dirichlet and Neumann boundary conditions are prescribed. Additionally, $D$, $\rho$, and $f$ are scalar fields over~$\Omega$, and $\nb$ is the outward unit normal to $\Omega$. If $D$ is constant and $f \equiv 0$, this reduces to the Helmholtz equation, so this equation may also be labelled as an inhomogeneous Helmholtz equation. Physically, we can also view this as a steady-state reaction-diffusion equation with a diffusive flux $\Jb := -D \nabla u$, a source $f$, and a first-order (consumption) reaction with the rate law $ku$. Other examples of physical phenomena modelled by this equation include steady-state heat diffusion, elastic deformation of beams and plates, electrostatic equilibrium, and the propagation of acoustic waves~\cite{ciarlet2023finite}.

The weak formulation of the modified Poisson equation can be obtained by multiplying Eq.~\eqref{eq:modified-Poisson-eqn} by a test function $\delta u$ and integrating by parts, whereupon we arrive at
\begin{align}
    \int_\Omega D\nabla u \cdot \nabla \delta u \,d\Omega + \int_{\Omega} k u \delta u \,d\Omega &= \int_\Omega \delta u f \,d\Omega + \int_{\Gamma_N} \delta u h \,d\Gamma
    \,.
    \label{eq:modified-Poisson-weak-form}
\end{align}
Note that the test function is assumed to satisfy $\delta u = g$ on $\Gamma_D$, which corresponds to the Dirichlet boundary conditions being enforced strongly in the system. This means that instead of lying in the function space $H^1(\Omega)$, the solution $u$ is in the space of \textit{admissible solutions} $\Uc := \left\{ u \in H^1(\Omega) \mid u = g \text{ on } \Gamma_D \right\}$, and the test function $\delta u$ belongs to the space of \textit{admissible variations} $\Uc_0 := \left\{ \delta u \in H^1(\Omega) \mid \delta u = 0 \text{ on } \Gamma_D \right\}$ (see Appendix~\ref{sec:analysis-of-FEM} and Ref.~\cite{papadopoulos2015280a} for more information). 
We will modify the system of equations in Section~\ref{sec:constraints} to ensure that these constraints on the function space are satisfied.

We can obtain a finite-dimensional Bubnov-Galerkin approximation of \cref{eq:modified-Poisson-weak-form} if we approximate the solution $u$ and its variation $\delta u$ as linear combinations
\begin{align}
    u &= \sum_{j \in [\numnp{}]} u_j N_j 
    \quad \text{and} \quad
    \delta u = \sum_{j \in [\numnp{}]} \delta u_j N_j 
    \,,
    \label{eq:u-PG-discretization}
\end{align}
where $u_j, \delta u_j \in \bbC$ are the values of the degrees of freedom, $\numnp{}$ is the total number of degrees of freedom, and $\{N_0, \dots, N_{\numnp{}-1}\}$ are basis (or interpolation) functions. Equations~\eqref{eq:modified-Poisson-weak-form} and~\eqref{eq:u-PG-discretization} lead to the linear system 
\begin{align}
    \Lc \ub &= \fb \,,
    \label{eq:classical-FE-linear-systems-problem}
\end{align}
where $\Lc := K +M$, and
\begin{align}
    K &:= \sum_{j,k \in [\numnp{}]}\left( \int_\Omega D \nabla N_i \cdot \nabla N_j \,d\Omega \right) \ketbra{i}{j} \,, \label{eq:K-global-integral-defn} \\
    M &:= \sum_{j,k \in [\numnp{}]}\left( \int_\Omega k N_i N_j \,d\Omega \right) \ketbra{i}{j} \,, \label{eq:M-global-integral-defn} \\
    \fb &:= \sum_{j \in [\numnp{}]} \left( \int_\Omega N_j f \,d\Omega + \int_{\Gamma_N} N_j h \,d\Gamma \right) \ket{j} \,, \label{eq:f-integral-defn}\\
    \ub &= \sum_{j \in [\numnp{}]} u_j \ket{j} \,.
\end{align}

Presently, $\ub$ and $\fb$ are unnormalized vectors, and thus cannot be prepared as quantum states. 
To turn this into a Quantum Linear Systems Problem (QLSP), we need to normalize the right-hand-side of \cref{eq:classical-FE-linear-systems-problem} and instead solve for 
\begin{equation}
    \left(\frac{\|\ub\|}{\|\fb\|}\right) \Lc \ket{u} = \ket{f} \,,
    \label{eq:FE-LSP-normalized}
\end{equation}
where without loss of generality we have assumed that $\fb \ne \zerob$, and
\begin{align}
    \ket{u} := \ub / \|\ub\|
    \quad \text{and} \quad 
    \ket{f} := \fb / \|\fb\| \,.
\end{align}

\begin{defn}[Quantum Linear Systems Problem (QLSP)]
\label{def:QLSP}
    Let $\Lc \in \bbC^{2^n \times 2^n}$ be an $n$-qubit matrix, and $\ket{f} \in \bbC^{2^n}$ an $n$-qubit quantum state. The QLSP is to prepare an $n$-qubit quantum state $\ket{u}$ such that 
    \begin{align*}
        \Lc \ket{u} \text{ is equal to } \ket{f} \text{ up to normalization.}
    \end{align*}
    We will compactly represent the QLSP as 
    \begin{align}
        \Lc \ket{u} = \ket{f}
        \,,
        \label{eq:QLSP-general-system}
    \end{align}
    with the understanding that since quantum states appear on both sides of the equation, this equality is up to normalization.
\end{defn}
\noindent
Comparing Eqs.~\eqref{eq:FE-LSP-normalized} and~\eqref{eq:QLSP-general-system}, we see that Eq.~\eqref{eq:FE-LSP-normalized} defines a QLSP for the FEM formulation of the Modified Poisson's equation. Upon preparing the ``solution'' $\ket{u}$, however, how does one recover the unnormalized vector $\ub$ (i.e., the \textit{solution})?

In this article, we will access $\Lc$ using a block-encoding $U_\Lc \in (\alpha,m)\mathrm{-BE}(\Lc)$, where $\alpha \ge \| \Lc \|$ (see Appendix~\ref{sec:block-encoding-framework} for a review of the block-encoding framework). Choose a lower bound for the singular values $\beta \le \sigma_\text{min}(\Lc)$. Using the QLSP solver in Ref.~\cite{An_2022}, we can transform $U_\Lc$ to a block-encoding of $U_{\Lc^{-1}} \in (\frac{\alpha}{\beta}, m+1)\mathrm{-BE}(\Lc)$ with time complexity $\Oc(\kappa n / \epsilon)$. Here, $\kappa \le \frac{\alpha}{\beta}$ is the condition number of $\Lc$, and $\epsilon$ is the desired precision. The success probability of applying this block-encoding to the state $\ket{f}$ is precisely 
\begin{align}
    p_\text{QLSP} &:= \left(\frac{\beta}{\alpha}\right)^2 \| \Lc^{-1} \ket{f} \|^2
    \,.
\end{align}
Recognizing that $\Lc^{-1} \ket{f} = \frac{\ub}{\| \fb \|}$, we can rearrange this for 
\begin{align}
    \| u \| &= \left(\frac{\alpha}{\beta}\right) \|\fb\| \sqrt{p_\text{QLSP}}
    \,.
    \label{eq:u-solution-norm-recovery}
\end{align}
As the quantum algorithm for the QLSP is run, the success probability $p_\text{QLSP}$ can be simultaneously estimated to precision $\epsilon$ using $\Oc(1/\epsilon^2)$ iterations of the quantum circuit.
The remaining quantities on the right-hand side of Eq.~\eqref{eq:u-solution-norm-recovery} are known, so that the norm of the solution may be recovered\footnote{The assumption that $\|\fb\|$ is known or that it can be computed efficiently will be necessary several times throughout this work. Here, since this vector corresponds to the force applied on the system and the boundary conditions prescribed on the problem, this is known data, and so this assumption is valid in this case.}.

\subsection{Tensor Products of Lagrange Elements}
\label{subsec:tensor-products-of-Lagrange-elements}

In this section, we outline the construction of $d$-rectangular elements from $1$-rectangles (i.e., line elements). The basis functions on these elements will be Lagrange interpolation polynomials. Consider a point $x \in \bbR^d$. We will write $x = (x^0, \dots, x^{d-1})$ for the Cartesian coordinates of this point. We first construct the one-dimensional element, then build higher-dimensional elements in the next section by making use of the tensor product structure of rectangular Lagrange elements.

\begin{defn}[Standard One-Dimensional Lagrange Finite Element of Order $p$]
    Consider the one-dimensional interval $\Omega^e := [0,1]$. Let $p \ge 1$ be an integer. Then the standard Lagrange basis functions of order $p$ are given by 
    \begin{align}
        N_j^e(x) &:= \frac{(x - x_0)}{(x_j - x_0)} \dots \frac{(x - x_{j-1})}{(x_j - x_{j-1})} \frac{(x - x_{j+1})}{(x_j - x_{j+1})} \dots \frac{(x - x_p)}{(x_j - x_p)} = \prod_{\substack{m \in [p+1] \\ m \ne j}} \frac{(x - x_m)}{(x_j - x_m)} \,,
        \label{eq:Lagrange-basis-function-defn}
    \end{align}
    where $x_m := \frac{m}{p}$ for all $m \in [p + 1]$. The set $\{x_m\}_{m \in [p+1]}$ are referred to as the local nodal points of the element. For convenience, we denote the (unnormalized) vector of these basis functions as 
    \begin{align}
        \ket{N^e(x)} &:= \begin{bmatrix}
            N_1^e(x) & \hdots & N_p^e (x)
        \end{bmatrix}^\dag \,.
    \end{align}
    We denote the function space spanned by these basis functions by $\Qc_p(\Omega^e) := \operatorname{span}_\bbR\{N_j^e(x)\}_{j \in [p+1]}$. The triple $(\Omega^e, \{x_m\}_{m \in [p+1]}, \ket{N^e(x)})$ consisting of the domain $\Omega^e$, the nodal points $\{x_m\}_{m \in [p+1]}$ (also called the degrees of freedom), and the basis functions $\ket{N^e(x)}$ is referred to as a Lagrange finite element of order $p$.
\end{defn}

For convenience, we will identify a finite element with the symbol for its function space (e.g., we will use $\Qc_p(\Omega^e)$ in lieu of $(\Omega^e, \{x_m\}_{m \in [p+1]}, \ket{N^e(x)})$). The Lagrange basis functions satisfy the fundamental property 
\begin{align}
    N_j^e(x_k) = \delta_{jk} \,,
    \label{eq:fundamental-property}
\end{align}
which renders the basis functions interpolatory~\cite{papadopoulos2015280a}. Interpolatory basis functions are desirable because the value of their corresponding degree of freedom is exactly equal to the value of the solution at that node. This simplifies the extraction of observables of interest (see \cref{sec:demo} for an example).

\begin{lem}[Partition of unity]
\label{lem:partition-of-unity}
    Consider a Lagrange element $\Qc_p(\Omega^e)$ with basis functions $\ket{N^e(x)}$. Then the basis functions form a partition of unity at each point of the element, i.e., 
    \begin{align}
        \sum_{j \in [p+1]} N_j^e(x) &= 1 \ \forall x \in \Omega^e \,.
        \label{eq:partition-of-unity}
    \end{align}
\end{lem}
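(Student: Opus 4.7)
The plan is to leverage the interpolatory property \eqref{eq:fundamental-property} together with a dimension-counting argument for polynomials. Define the auxiliary function
\begin{equation*}
    P(x) := \sum_{j \in [p+1]} N_j^e(x) - 1 \,.
\end{equation*}
Since each $N_j^e$ is, by Eq.~\eqref{eq:Lagrange-basis-function-defn}, a product of $p$ linear factors in $x$, each basis function is a polynomial of degree exactly $p$, and therefore $P \in \Qc_p(\Omega^e)$ extended as a polynomial to all of $\bbR$ is of degree at most $p$. This is the single object on which the whole argument will turn.

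Next I would evaluate $P$ at each of the $p+1$ local nodal points $x_0,\dots,x_p$. Applying the fundamental property $N_j^e(x_k) = \delta_{jk}$ yields $\sum_{j \in [p+1]} N_j^e(x_k) = \sum_{j \in [p+1]} \delta_{jk} = 1$ for every $k \in [p+1]$, so $P(x_k) = 0$. Hence $P$ is a polynomial of degree at most $p$ that vanishes at $p+1$ distinct points, and by the fundamental theorem of algebra $P \equiv 0$ on all of $\bbR$, and in particular on $\Omega^e$. This gives Eq.~\eqref{eq:partition-of-unity} directly.

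There is really no ``hard step'' here; the only thing to be mildly careful about is justifying that $P$ has degree at most $p$ (rather than possibly higher) so that the $p+1$ vanishing conditions actually force $P \equiv 0$. An alternative phrasing that sidesteps this bookkeeping is to invoke the Lagrange interpolation identity: every $f \in \Qc_p(\Omega^e)$ satisfies $f(x) = \sum_{j \in [p+1]} f(x_j)\, N_j^e(x)$, since both sides lie in $\Qc_p(\Omega^e)$ and agree at the $p+1$ unisolvent nodes $\{x_m\}_{m \in [p+1]}$. Applying this representation to the constant function $f \equiv 1 \in \Qc_p(\Omega^e)$ immediately yields $1 = \sum_{j \in [p+1]} N_j^e(x)$ for all $x \in \Omega^e$, which is the desired partition-of-unity identity.
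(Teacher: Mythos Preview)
Your proof is correct and follows essentially the same approach as the paper: define the auxiliary polynomial $P(x)=\sum_j N_j^e(x)-1$, note that $\deg P\le p$, use the fundamental property to show $P$ vanishes at the $p+1$ distinct nodes, and conclude $P\equiv 0$. The alternative Lagrange-interpolation phrasing you give at the end is a nice repackaging but amounts to the same argument.
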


\begin{proof}
    Let $f(x) := \sum_{j \in [p+1]} N_j^e(x) - 1 $. Since each $N_j^e$ is a polynomial of degree $p$, $f$ must be a polynomial with $\deg f \le p$. By the fundamental property (Eq.~\eqref{eq:fundamental-property}), $f$ has $p+1$ distinct zeros at the nodes $\{x_m\}_{m \in [p+1]}$ of the Lagrange element. It follows that $f \equiv 0$, which yields Eq.~\eqref{eq:partition-of-unity}.
\end{proof}

We can form higher-dimensional ($d$-rectangular) elements through the Cartesian product
\begin{align}
    \Omega^{e,d} := \underbrace{\Omega^e \times \dots \times \Omega^e}_{d\text{-times}} \,.
\end{align}
The corresponding higher-dimensional Lagrange basis functions can be obtained by taking tensor products of $\ket{N^e(x)}$.

\begin{defn}[Tensor Product Element]
    Let $(\Omega^e, \{x_m\}_{m \in [p+1]})$ be an element with basis functions $\ket{N^e(x)}$ satisfying the fundamental property $N_j^e(x_k) = \delta_{jk}$. The tensor product element is the domain $\Omega^{e,d}$ together with the nodal points 
    \begin{align}
        \{x_{m}\}_{m \in [p+1]^d} := 
        \{x_{m_0}\}_{m_0 \in [p+1]} \times \dots \times \{x_{m_{d-1}}\}_{m_{d-1} \in [p+1]} \,,
        \label{eq:tensor-product-nodes}
    \end{align}
    where $m := (m_0, \dots, m_{d-1})$ is a $d$-dimensional multi-index, and the basis functions
    \begin{align}
        \ket{N^{e,d}(x^0,\dots,x^{d-1})} &:= \ket{N^e(x^{d-1})} \otimes \dots \otimes \ket{N^e(x^{0})} \,.
    \end{align}
\end{defn}

More general element types may consist of arbitrary convex polygons in $d$-dimensions; however, this work focuses specifically on quadrilateral elements. Tensor product elements naturally inherit their node numbering convention from the Kronecker product structure, as described above and illustrated in Example~\ref{ex:2d-basis}.

\begin{examplebox}{Two-dimensional Lagrange tensor product element $Q_1(\Omega^{e,2})$}\label{ex:2d-basis}
	\begin{minipage}[t]{1.\linewidth}
		\vspace*{0pt}
		\begin{center}
			\includegraphics[width=0.4\linewidth]{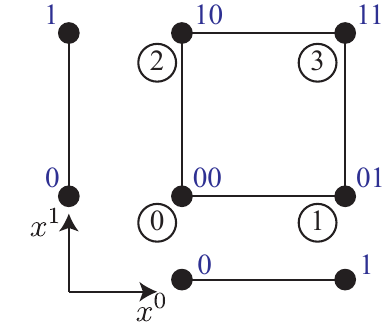}
			\captionof{figure}{Two-dimensional tensor product element.}\label{fig:2d-element}
		\end{center}
	\end{minipage}
\begin{minipage}[t]{1.\linewidth}
	First, observe that for one-dimensional elements $Q_1(\Omega^e)$, the basis functions that linearly interpolate between the two nodes at $x=0$ and $x=1$ are (see the definition in \cref{eq:Lagrange-basis-function-defn} and the fundamental property \cref{eq:fundamental-property})
    \begin{align}
        \ket{N^e(x)} &= \begin{bmatrix}
            1 - x \\ x
        \end{bmatrix} \,.
        \label{eq:Ne-1d}
    \end{align}
    Then the two-dimensional tensor product element has the basis functions 
    \begin{align}
        \ket{N^{e,2}(x^0,x^1)} &:= \ket{N^e(x^1)} \otimes \ket{N^e(x^0)}\\
        &= \begin{bmatrix}
            (1 - x^0)(1 - x^1)\\
            x^0 (1 - x^1)\\
            (1 - x^0) x^1\\
            x^0 x^1
        \end{bmatrix} \,.
        \label{eq:2d-Lagrange-basis-functions}
    \end{align}
    For the nodal points, taking the Cartesian product (\cref{eq:tensor-product-nodes}) of the set of one-dimensional nodal points $\{x_i\}_{i \in [2]} = \{0,1\}$ yields 
    \begin{align}
        \{x_{i,j} \}_{i,j \in [2]} 
        &= \{x_i\}_{i \in [2]} \times \{x_j\}_{j \in [2]}\\
        &= \{0,1\} \times \{0,1\}\\
        &= \left((0,0), (0,1), (1,0), (1,1)\right) \,.
        \label{eq:2d-Lagrange-nodes}
    \end{align}
    See Figure~\ref{fig:2d-element} for a visual representation of this Cartesian product. By evaluating the basis functions (\cref{eq:2d-Lagrange-basis-functions}) at these nodal points (\cref{eq:2d-Lagrange-nodes}), we can verify that the basis functions automatically satisfy the fundamental property,  
    \begin{align}
    	N_{i,j}^{e,2}(x_{k,l}) 
    	&= N_i^e(x_k) N_j^e(x_l)
    	= \delta_{ik}\delta_{jk}
    	\,.
    \end{align}
    
    \end{minipage}\hfill%

\end{examplebox}
Recalling Eqs.~\eqref{eq:K-global-integral-defn} and~\eqref{eq:M-global-integral-defn}, the elemental stiffness matrix for a linear PDE for a one-dimensional element is given by 
\begin{align}
    K_{ij}^e &= \int_{\Omega^e} dx\, \nabla N_i^{e} \cdot \nabla N_j^{e} \,,
    \label{eq:Kij-e}
\end{align}
while the elemental mass matrix for a linear PDE for a one-dimensional element is given by 
\begin{align}
    M_{ij}^e &= \int_{\Omega^e} dx\,  N_i^{e} N_j^{e} \,.
    \label{eq:Mij-e}
\end{align}
For higher-dimensional elements built out of a tensor product structure, the integrals that define the stiffness and mass matrices can be rewritten as iterated integrals. We illustrate this in the following example.

\begin{examplebox}{Two-dimensional Stiffness and Mass Matrices}\label{ex:2d-Ke}
    Recall the example of the tensor product element $Q_1(\Omega^{e,2})$ in Example~\ref{ex:2d-basis}. In two-dimensions, the elemental stiffness matrix is 
    \begin{align}
        K_{ij}^{e,2} &= \int_{\Omega^{e,2}} d^2 x\, \nabla N_i^{e,2} \cdot \nabla N_j^{e,2}\\
        &= \int_{\Omega^{e,2}} dx^0 dx^1\, 
        \frac{\partial}{\partial x^k} \left( N_{i_1}^e(x^1) N_{i_0}^e(x^0)  \right) \cdot \frac{\partial}{\partial x^k} \left( N_{j_1}^e(x^1) N_{j_0}^e(x^0)  \right) \,,
    \end{align}
    where $i = (i_1,i_0)$ and $j = (j_1,j_0)$ have been written as multi-indices. Rewriting the double integral as an iterated integral, 
    \begin{align}
    \begin{split}
        K_{ij}^{e,2} &= \int_{\Omega^{e}} \int_{\Omega^{e}} dx^0 dx^1\, 
        \big( {N_{i_1}^e}'(x^1) N_{i_0}^e(x^0) {N_{j_1}^e}'(x^1) N_{j_0}^e(x^0)\\ 
        &\hspace{1.5in} + N_{i_1}^e(x^1) {N_{i_0}^e}'(x^0) N_{j_1}^e(x^1) {N_{j_0}^e}'(x^0) \big)
    \end{split}\\
    \begin{split}
        &=
        \left(\int_{\Omega^{e}} dx^1\, {N_{i_1}^e}'(x^1) {N_{j_1}^e}'(x^1) \right)
        \left(\int_{\Omega^{e}} dx^0\, N_{i_0}^e(x^0) N_{j_0}^e(x^0) \right)\\
        &\hspace{0.5in} + 
        \left(\int_{\Omega^{e}} dx^1\, {N_{i_1}^e}(x^1) {N_{j_1}^e}(x^1) \right)
        \left(\int_{\Omega^{e}} dx^0\, 
        {N_{i_0}^e}'(x^0) {N_{j_0}^e}'(x^0) \right) \,,
        \label{eq:Kij-e-2d-last-line}
    \end{split}\\
    &= K_{i_1 j_1} M_{i_0 j_0} + M_{i_1 j_1} K_{i_0 j_0} 
    \,,
    \end{align}
    where we have used \cref{eq:Kij-e,eq:Mij-e} in the last line. \Cref{eq:Kij-e-2d-last-line} is exactly the $(i,j)$ component of the tensor 
    \begin{align}
        K^{e,2} &= K^e \otimes M^e + M^e \otimes K^e \,.
    \end{align}
    Similarly, the two-dimensional mass matrix is given by 
    \begin{align}
        M^{e,2} &= M^e \otimes M^e \,.
    \end{align}
\end{examplebox}

The procedure in Example~\ref{ex:2d-Ke} generalizes to $d$-dimensions as:
\begin{gather}
    K^{e,d} = K^e \otimes \underbrace{M^e \otimes \dots \otimes M^e}_{(d-1)\text{-times}}
    + M^e \otimes K^e \otimes  \underbrace{M^e \otimes \dots \otimes M^e}_{(d-2)\text{-times}} + \dots + M^e \otimes \dots \otimes M^e \otimes K^e
    \,,
    \label{eq:Ke-tensor-product-structure}
    \\
    M^{e,d} = \underbrace{M^e \otimes \dots \otimes M^e}_{d\text{-times}}
    \,.
    \label{eq:Me-tensor-product-structure}
\end{gather}
In general, the elemental stiffness matrix $K^{e,d}$ and mass matrix $M^{e,d}$ are dense $(p+1)^d \times (p+1)^d$ matrices for $\Qc_p(\Omega^{e,d})$ elements. Using the constant-size $(p+1) \times (p+1)$ $K^e$ and $M^e$ matrices, however, in the next section we will efficiently construct the higher-dimensional global arrays using the same tensor product structure as above. That is, by exploiting the structure of Cartesian mesh types, we will show that \cref{eq:Ke-tensor-product-structure,eq:Me-tensor-product-structure} hold not just for the local arrays, but for the global arrays as well.

In cases where the elemental arrays are constant throughout the domain, we can compute the entries just once using a classical computer, and then construct oracles to access their entries on the quantum computer.
Henceforth, we assume we have access to an oracle that prepares the entries of both $K^e$ and $M^e$. That is, we assume that we have access to amplitude oracles (see \cref{defn:amplitude-oracle}) $O_{K^e}$ and $O_{M^e}$ such that 
\begin{align}
    O_{K^e}\ket{0}\ket{i}\ket{j} &= \left(\sqrt{K_{ij}^e}\ket{0} + \sqrt{1-K_{ij}^e}\ket{1}\right) \ket{i}\ket{j}
    \,,
    \label{eq:Ke-oracle}\\
    O_{M^e}\ket{i}\ket{j} &= \left(\sqrt{M_{ij}^e}\ket{0} + \sqrt{1-M_{ij}^e}\ket{1}\right) \ket{i}\ket{j}
    \,.
    \label{eq:Me-oracle}
\end{align}
Note that for the standard $\Qc_p(\Omega^e)$ element, we will always have $\norm{K^e}_\mathrm{max} < 1$ and $\norm{M^e}_\mathrm{max} < 1$, so that the above oracles make sense as quantum operators.
Alternatively, we assume access to prepare oracles satisfying 
\begin{align}
    \textsc{prep}_{K^e} \ket{0}_{\log\nen{}}\ket{0}_{\log\nen{}} &= \frac{1}{\sqrt{\|K^e\|_1}} \sum_{i,j \in [\nen{}]} \sqrt{K_{ij}^e} \ket{i}\ket{j}
    \,,
    \label{eq:prep-Ke}\\
    \textsc{prep}_{M^e} \ket{0}_{\log\nen{}}\ket{0}_{\log\nen{}} &= \frac{1}{\sqrt{\|M^e\|_1}} \sum_{i,j \in [\nen{}]} \sqrt{M_{ij}^e} \ket{i}\ket{j}
    \,,
    \label{eq:prep-Me}
\end{align}
with analogous definitions for $\widetilde{\textsc{prep}}_{K^e}$ and $\widetilde{\textsc{prep}}_{M^e}$ (see Section~\ref{sec:BE-and-LCU} for definitions and the convention for calculating $\sqrt{K_{ij}^e}$ when $K_{ij}^e < 0$). Since the entries of the mass matrix $M^e$ are all non-negative, we can take $\widetilde{\textsc{prep}}_{M^e} = \textsc{prep}_{M^e}^\dag$. Additionally, the mass matrix will have unit subnormalization, as we see below.

\begin{prop}
\label{prop:mass-matrix-subnormalization}
    The subnormalization of the mass matrix is identically equal to one, i.e., 
    \begin{align}
        \alpha_M := \|M^{e,d}\|_1 &= 1
        \,.
    \end{align}
\end{prop}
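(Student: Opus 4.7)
The plan is to reduce the $d$-dimensional claim to a one-dimensional computation via the Kronecker product structure $M^{e,d} = (M^e)^{\otimes d}$ from Eq.~\eqref{eq:Me-tensor-product-structure}, and then to evaluate $\|M^e\|_1$ using the partition of unity of Lemma~\ref{lem:partition-of-unity}. The reduction is straightforward: for any matrices $A, B$, the Kronecker-product entry $(A\otimes B)_{(i,k),(j,l)} = A_{ij}B_{kl}$ factorizes under absolute values, so
\begin{align*}
\|A\otimes B\|_1 = \sum_{i,j,k,l} |A_{ij}|\,|B_{kl}| = \|A\|_1\,\|B\|_1,
\end{align*}
and iterating yields $\|M^{e,d}\|_1 = \|M^e\|_1^d$. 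It thus suffices to prove $\|M^e\|_1 = 1$ for the one-dimensional elemental mass matrix.

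Next, I would substitute the partition-of-unity identity $\sum_i N_i^e(x) = 1$ into the defining integrals for the entries of $M^e$ and swap the finite sum with the integral:
\begin{align*}
\sum_{i,j \in [p+1]} M^e_{ij}
= \int_0^1 \!\left(\sum_i N_i^e(x)\right)\!\left(\sum_j N_j^e(x)\right) dx
= \int_0^1 1\,dx = 1.
\end{align*}
To upgrade this signed sum to the entrywise $\ell_1$-norm I would then argue that $M_{ij}^e \ge 0$ for each pair $(i,j)$. For the first-order Lagrange basis recorded in Eq.~\eqref{eq:Ne-1d} of Example~\ref{ex:2d-basis}, the functions $N_0^e(x) = 1-x$ and $N_1^e(x) = x$ are both non-negative on $[0,1]$, so every integrand $N_i^e N_j^e$ is pointwise non-negative and $|M_{ij}^e| = M_{ij}^e$. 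Combining the two displays then gives $\|M^e\|_1 = \sum_{ij} M^e_{ij} = 1$, and hence $\|M^{e,d}\|_1 = 1$.

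The main subtlety is this last identification: partition of unity only controls the signed sum $\sum_{ij} M^e_{ij}$, whereas the entrywise $\ell_1$-norm uses absolute values. Closing the gap requires pointwise non-negativity of the products $N_i^e N_j^e$ (or at minimum non-negativity of each integrated entry $M_{ij}^e$), which holds cleanly for the linear Lagrange basis from its explicit formulas, and is the step I would watch most carefully when extending the argument to higher-order bases where individual $N_j^e$ can change sign between nodes.
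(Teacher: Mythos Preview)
Your approach matches the paper's: both proofs use the partition-of-unity identity to collapse $\sum_{j,k} M_{jk}^{e,d}$ to $\operatorname{vol}(\Omega^{e,d}) = 1$, and both invoke pointwise non-negativity of the basis functions to equate this signed sum with the entrywise $\ell_1$-norm. The only structural difference is that you first reduce to one dimension via the multiplicativity $\|A\otimes B\|_1 = \|A\|_1\,\|B\|_1$, whereas the paper works directly in $d$ dimensions---a cosmetic change, since non-negativity of the one-dimensional basis is equivalent to non-negativity of the tensor-product basis.

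Your caution about higher-order bases is well placed and in fact more careful than the paper, which simply asserts $N_i^{e,d} \ge 0$ without remark. For $p \ge 2$ the Lagrange basis functions do change sign on $[0,1]$ (e.g.\ for $p=2$, $N_0^e(x) = 2(x-\tfrac12)(x-1)$ is negative on $(\tfrac12,1)$), and a direct computation gives $M^e_{02} = \int_0^1 N_0^e N_2^e\,dx = -\tfrac{1}{30}$, so $\|M^e\|_1 > \sum_{ij} M^e_{ij} = 1$. Both your argument and the paper's are therefore valid only for $p=1$; the step you singled out is exactly where the general-$p$ statement breaks.
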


\begin{proof}
    Noting that $N_i^{e,d} \ge 0$ and $\operatorname{vol}\left(\Omega^{e,d}\right) = \operatorname{vol}\left( [0,1]^d \right) = 1$, we have from Lemma~\ref{lem:partition-of-unity} that 
    \begin{align}
        \alpha_M &= \sum_{j,k \in [p+1]^d} |M_{jk}|
        = \sum_{j,k \in [p+1]^d} \int_{\Omega^{e,d}} d\Omega^{e,d} \,  N_j^{e}(x) N_k^{e}(x)\\
        &= \int_{\Omega^{e,d}} d\Omega^{e,d} \left( \sum_{j \in [p+1]^d} N_j^e(x) \right) \left( \sum_{k \in [p+1]^d} N_k^e(x) \right)\\
        &= \int_{\Omega^{e,d}} d\Omega^{e,d} = 1 \,,
    \end{align}
    as desired.
\end{proof}

In the next section, we will see that this tensor product structure carries over to the global stiffness and mass arrays. Given access to oracles that prepare the entries of $K^e$ and $M^e$, and utilizing a block-encoding of what we call the `unit of interaction', we will demonstrate an efficient block-encoding of the global mass and stiffness matrices defined in Eqs.~\eqref{eq:K-global-integral-defn} and~\eqref{eq:M-global-integral-defn}.

\section{Quantum Assembly of Global Element Arrays}
\label{sec:assembly-of-global-arrays-linear}

In this section, we introduce our framework for assembling global finite element arrays on a quantum computer. 
The assembly of global arrays in the classical finite element method (FEM) requires knowledge of the mesh connectivity ``$\IX{}$'', which maps the local node numbering convention of each element to the global node numbering convention for the degrees of freedom in the mesh. Here, we show how a quantum oracle $\Oc_{\IX{}}$ that encodes the mesh connectivity can be used to perform the same assembly procedure on a quantum computer. 
The key step in this procedure is to reverse the order of summation during assembly so that the summation over the elements is the ``inner'' summation, while the summation over the local element nodes is the ``outer'' summation. This renders the number of terms in the LCU that carries out this summation constant with respect to the system size, leading to an efficient assembly procedure on the quantum computer.
To more effectively highlight this step, we consider only finite element arrays with constant elemental contributions in this section, and address the case of non-constant elemental contributions in Section~\ref{sec:numerical-integration}, as the latter case requires additional tools from numerical integration.
We conclude this section by giving an explicit implementation for the assembly of first-order quadrilateral Lagrange elements in multiple dimensions, and generalize this implementation to higher-order elements. We also provide a complexity analysis for the number of Toffoli or simpler gates required to block-encode the assembled finite element arrays.

\begin{figure}
    \centering
    \includegraphics[width=0.7\linewidth]{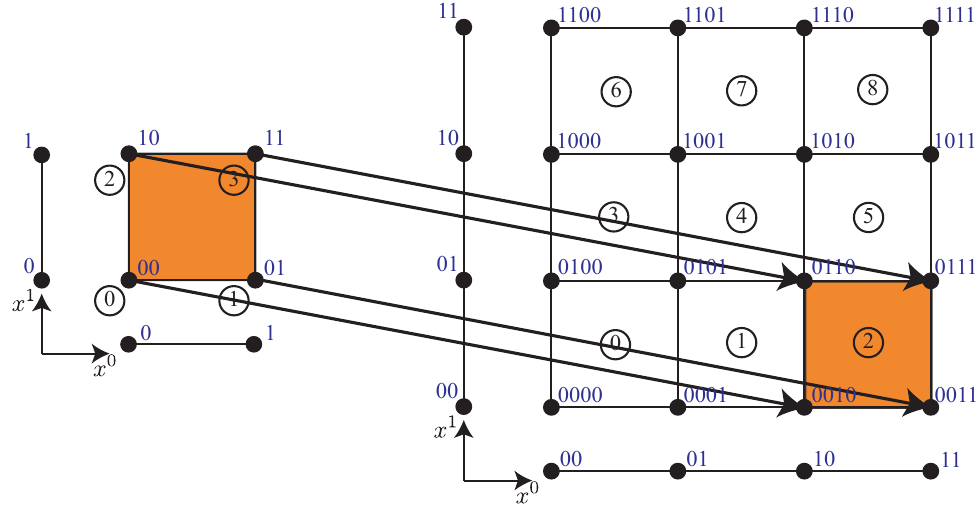}
    \caption{A two-dimensional mesh of first-order quadrilateral Lagrange elements with $\numel{} = 9$ elements, $\nen{} = 4$ elemental nodes, and $\numnp{} = 16$ total nodal points. A representative element is highlighted in orange. The local and global node numbers are written in binary, and the numbering convention is induced from the tensor product of one-dimensional elements.}
    \label{fig:element-insertion}
\end{figure}

\subsection{Connectivity Arrays}

To describe a mesh in FEM, we require at least two data structures: one that encodes the spatial positions of each node in the mesh, and one that stores the local and global node numbering conventions of the nodes.
In this work, we will follow the notation of Ref.~\cite{papadopoulos2015280a} for the computational parameters and arrays that are used in FEM assembly.
For parameters that describe the mesh size (see Figure~\ref{fig:element-insertion} for an example), we denote: 
\begin{itemize}
    \item \numel{} as the total number of elements in the mesh,
    \item \numnp{} as the total number of nodal points in the mesh, and
    \item \nen{} as the number of nodal points per element.
\end{itemize}
Note that once $\numel{}$ and $\nen{}$ are specified, the total number of nodal points $\numnp{}$ is also specified (i.e., only two out of three quantities in the set $\{\numel{},\nen{},\numnp{}\}$ are independent).

Classical access to a mesh can be quantumly abstracted as access to an oracle $\mesh{}$, which given an element index $\el{} \in [\numel{}]$ and a local node index $\nn{} \in [\nen{}]$, returns the coordinates of the corresponding node. For example, rescaling a $d$-dimensional domain $\Omega \subset \bbR^d$ so that it lies within the unit box $[0,1]^d$ (i.e., $\Omega \subseteq [0,1]^d$), the mesh oracle satisfies 
\begin{equation}
    \mesh{}_h \ket{\el{}} \ket{\nn{}} \underbrace{\ket{0} \cdots \ket{0}}_{d\mathrm{-times}} = \ket{\el} \ket{\nn{}} \ket{\bar{x}^0} \cdots \ket{\bar{x}^{d-1}}
    \,,
    \label{eq:mesh-oracle-defn}
\end{equation}
where $\ket{\bar{x}^j}$ is a basis state with $j \in [d]$ and $\bar{x}^j = 2^r [0.b_1 \dots b_r]$ an $r$-bit binary string encoding the $j$-th coordinate of the nodal point. 
In Eq.~\eqref{eq:mesh-oracle-defn}, we use the subscript ``$h$'' to denote dependence on a mesh-refinement parameter, which is typically the side-length or diameter of an element.
This oracle enables the evaluation of several important mesh-related functions, such as isoparametric mappings and Jacobian evaluations that arise from pull-backs to a reference element.
For low-dimensional problems in complex geometries, mesh generation can be performed classically, and subsequently, the $\mesh{}$ oracle can be implemented as a quantum read-only memory (QROM)~\cite{phalak2022optimization} circuit. Achieving a quantum advantage for solving PDEs in complex geometries with high-precision computation will likely require a combination of QROM access to a (classically generated) coarse version of the mesh, followed by a quantum implementation of a mesh refinement scheme to transform the oracle into a mesh fine enough to simulate at the desired precision. In this work, however, we consider only meshes whose $\mesh{}$ oracles we can give explicit quantum circuits for.
Additionally, we will restrict our attention to structured meshes of the same element type that carry one degree of freedom for each node.

We now turn our attention to the mapping between the local and global degree of freedom numbers.
The connectivity array $\mathrm{IX}$ is an $\nen{} \times \numel{}$ matrix of integers whose columns correspond to element numbers, and rows correspond to local node numbers. The entry $J := \IX{}(j,\el{})$ of the connectivity matrix is the global node number $J \in [\numnp{}]$ of local degree of freedom $j \in [\nen{}]$ of element $\el{} \in [\numel{}]$ (see Figure~\ref{fig:IX-injective} for an example). Quantumly, we implement the connectivity matrix through an oracle $\Oc_\IX{}$ that satisfies
\begin{equation}
    \Oc_\mathrm{IX} \ket{j} \ket{\el{}} \ket{0} = \ket{j} \ket{\el{}} \ket{J} 
    \,.
    \label{eq:IX-quantum-oracle}
\end{equation}
This oracle encodes the map from the local node numbers to the global node numbers, and is completely reversible. In the remaining subsections, however, we work to construct a version of this map that maps the element number $\ket{e}$ directly to the global node number $\ket{J}$.

\subsection{Tools for Parallel Assembly: The Unit of Interaction and the Local-to-Global Node Number Indicator Matrix}
\label{subsec:unit-of-interaction-and-indicator-matrix}

To outline our strategy for assembling global finite element arrays on a quantum computer, consider the stiffness matrix $K$ (see Eq.~\eqref{eq:K-global-integral-defn}) assembled from elemental contributions $K^e$ (given by Eq.~\eqref{eq:Kij-e}). 
One strategy for assembly is to embed the dense $\nen{} \times \nen{}$ array $K^e$ into an $\numnp{} \times \numnp{}$ array, shift the contributions in the correct locations using the connectivity oracle $\Oc_\mathrm{IX}$, and then LCU over all elements $\el{} \in [\numel{}]$. This follows the classical assembly procedure, which can be summarized as:
\begin{align}
    \xrightarrow{\text{Compute elemental contribution}}&\, 
    K^{\el{}} = \sum_{j,k \in [\nen{}]} K_{jk}^\el{} \ketbra{j}{k} \,,\\
    \xrightarrow{\text{Map contribution to its position in the global array}}&\,
    \sum_{j,k \in [\nen{}]} K_{jk}^\el{} \ketbra{\mathrm{IX}(j,\el{})}{\mathrm{IX}(k,\el{})} \,,\\
    \xrightarrow{\text{Sum over all elements}}&\,
    K = \sum_{\el{} \in [\numel{}]} \sum_{j,k \in [\nen{}]} K_{jk}^\el{} \ketbra{\mathrm{IX}(j,\el{})}{\mathrm{IX}(k,\el{})} \,.
    \label{eq:K-classical-summation-order}
\end{align}
On a quantum computer, however, the outer summation in \cref{eq:K-classical-summation-order} is an LCU of size $\numel{} = \Oc(\numnp{}) = \Oc(2^n)$, and will carry an exponentially large subnormalization factor for an $n$-qubit grid. Quantum computers will thus require an alternate strategy for assembly, which we introduce in this section. A visual summary of the difference between the classical and quantum paradigms for the parallel assembly of finite element arrays is given in Figure~\ref{fig:quantum-vs-classical}.

For many linear PDEs of interest---such as the modified Poisson equation given in Section~\ref{subsec:weak-formulation-modified-Poisson}---the elemental stiffness matrix is constant throughout the computational domain. Since the elemental stiffness matrix contains a constant number of entries (that don't vary with the system size), they can be computed classically, and accessed through a $(\textsc{prep}_{K^e},\widetilde{\textsc{prep}}_{K^e})$ state-preparation pair (see Eq.~\eqref{eq:prep-Ke}). 
Switching the order of the summation in Eq.~\eqref{eq:K-classical-summation-order}, we compute
\begin{align}
    K = \sum_{j,k \in [\nen{}]} K_{jk}^\el{} \left( \sum_{\el{} \in [\numel{}]} \ketbra{\mathrm{IX}(j,\el{})}{\mathrm{IX}(k,\el{})} \right) \,.
    \label{eq:K-quantum-summation-order}
\end{align}
The matrix that appears in the inner sum of Eq.~\eqref{eq:K-quantum-summation-order} is the opening wedge for performing efficient parallel assembly on a quantum computer. We formalize this in the following definition.

\begin{figure}
    \centering
    \begin{subfigure}[b]{1.\textwidth}
         \centering
         \includegraphics[scale=0.7]{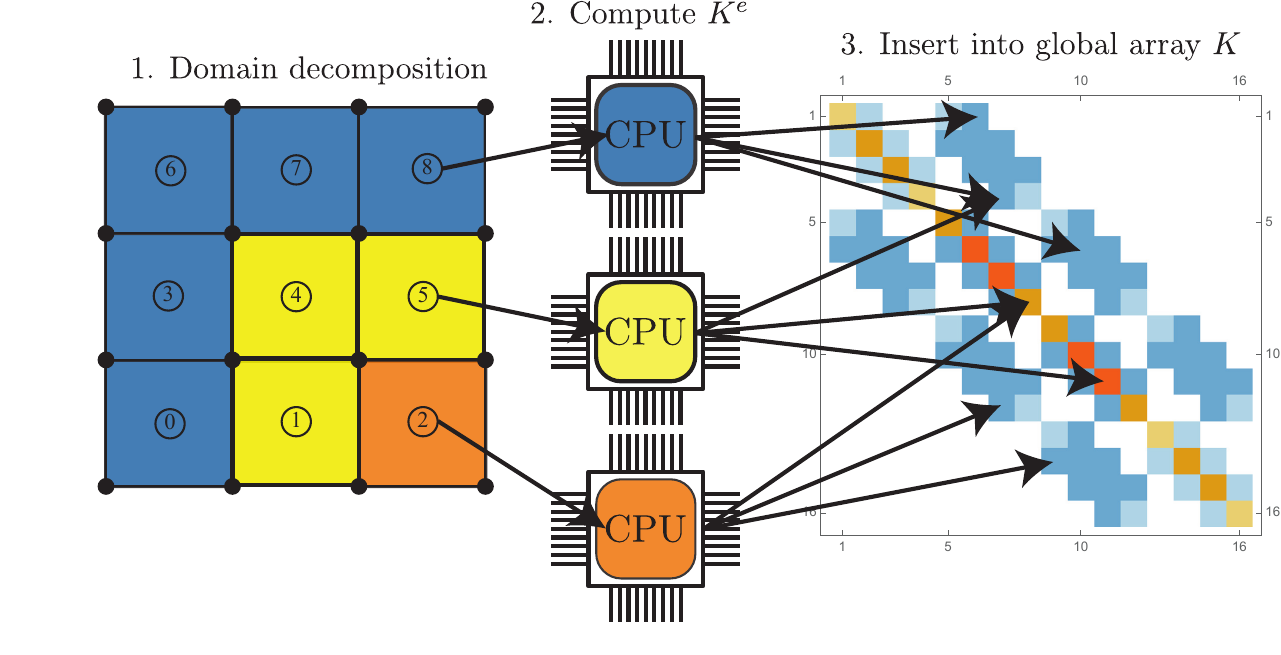}
         \vspace{-10pt}
         \caption{Classical assembly.}
         \label{fig:quantum-vs-classical-a}
    \end{subfigure}
    \\[5pt]
    \begin{subfigure}[b]{1.\textwidth}
         \centering
         \includegraphics[scale=0.7]{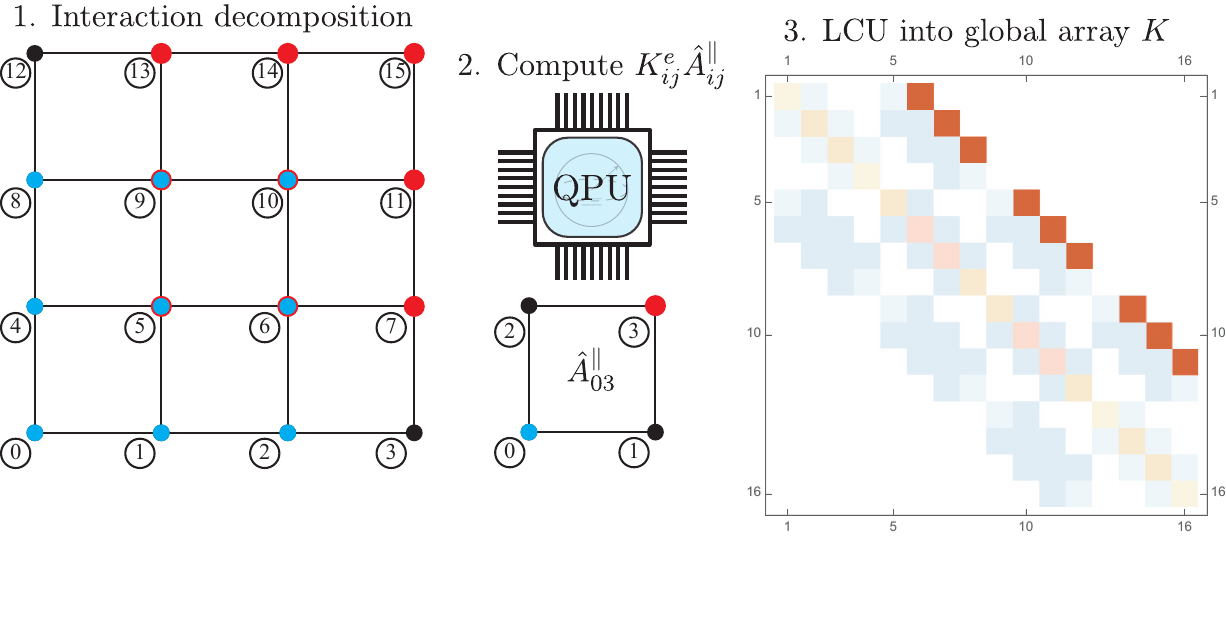}
         \vspace{-30pt}
         \caption{Quantum assembly.}
         \label{fig:quantum-vs-classical-b}
    \end{subfigure}
    \caption{Comparison between classical and quantum strategies for the parallel assembly of finite element arrays. On top (a), we depict how a typical classically assembly process takes place in parallel. An array of classical processors (CPU's) computes the elemental contributions from different parts of the mesh, and inserts the contributions directly into the global array in memory. On the bottom (b), we depict our proposed method for the quantum assembly in parallel. Instead of partitioning the mesh into groups of elements (i.e., domain decomposition), we group local $ij$ interactions and compute the contributions arising from all such interactions simultaneously using a quantum computer (depicted abstractly as a `QPU'). All pairs of interactions are then summed together in an LCU to obtain the assembled global array. As an example, we highlight the interactions between local node numbers $0$ and $3$ in (blue and red, respectively) in a two-dimensional mesh. The sparsity pattern for the corresponding contributions to the global array are shown in the right-hand matrix.}
    \label{fig:quantum-vs-classical}
\end{figure}

\begin{defn}[Unit of Interaction]
\label{defn:unit-of-interaction}
    Let $\mathrm{IX}$ be an $\nen{} \times \numel{}$ connectivity matrix of a mesh consisting of elements with $\nen{}$ nodes locally. We denote and define the unit of interaction between nodes $j,k \in [\nen{}]$ as 
    \begin{equation}
        \uoi_{jk} := \sum_{\el{} \in [\numel{}]} \ketbra{\mathrm{IX}(j,\el{})}{\mathrm{IX}(k,\el{})} \,.
        \label{eq:uoi-defn}
    \end{equation}
\end{defn}
Practically, the unit of interaction identifies the entries in the global stiffness matrix that couple local nodes $j$ and $k$ across all elements simultaneously. We refer to these couplings as `interactions'. In order for the entries of $\uoi_{jk}$ to be exactly $1$, we require that the local-to-global degree of freedom map $\mathrm{IX}(j,\bullet) \colon [\numel{}] \to [\numnp{}]$ be injective for all $j \in [\nen{}]$. The existence of such a map is tantamount to demanding that the mesh does not have any ``extraordinary vertices''---nodal points that are shared by more elements than the number of element nodes $\nen{}$ (see Figure~\ref{fig:IX-injective}). For other mesh types (such as triangular meshes), we need to partition the columns of $\IX{}$ into several matrices $\IX{}^{(0)}, \dots, \IX{}^{(t-1)}$ where each of the $\IX{}^{(s)}(j,\bullet)$ is injective for all $j \in [\nen{}]$ and $s \in [t]$, and $t$ is the number of partitions. All sums involving the connectivity array $\IX{}$ must then involve a sum ``$\sum_{s \in [t]}$'' over each partition, i.e.,
\begin{align}
    K = \sum_{s \in [t]} \sum_{j,k \in [\nen{}]} K_{jk}^\el{} \left( \sum_{\el{} \in [\numel{}^{(s)}]} \ketbra{\mathrm{IX}^{(s)}(j,\el{})}{\mathrm{IX}^{(s)}(k,\el{})} \right) \,,
    \label{eq:K-quantum-summation-order-with-partition}
\end{align}
where $\numel{}^{(s)}$ is the number of elements in each partition. The minimal number of partitions is $t=1$, whereas in the worst case, the maximal number of partitions is  $t = \numel$.
Note that choosing $t = \numel{}$ reverts the summation back to the classical order of assembly given by Eq.~\eqref{eq:K-classical-summation-order} (indeed, the number of elements in each partition in this case is exactly one, i.e., $\numel{}^{(s)} = 1$).
For simplicity, in this work we will assume that the mesh does not contain any extraordinary vertices, and that the node numbering scheme is chosen so that $t \equiv 1$.

Instead of mapping from the local degree of freedom $j \in [\nen{}]$ to its global degree of freedom number, one may equivalently consider the inverse map from a global degree of freedom $J \in [\numnp{}]$ to its local degree of freedom number. This map will be multi-valued whenever a degree of freedom is shared by more than one element. 
One may also consider an inverse map that is more similar to the unit of interaction by considering the interaction between global degrees of freedom $J,K \in [\numnp{}]$, and mapping to the local elemental degree of freedom numbers $j,k \in [\nen{}]$ where such an interaction takes place. This map, however, can still be multi-valued for self-interactions (and indeed, the assembly procedure requires precisely that one sum over all such multiplicities). 

In particular, the multi-valued nature of the map from global interactions to local interactions has been overlooked in the prominent quantum algorithm developed by Ref.~\cite{clader2013preconditioned}. In the supplementary material of Ref.~\cite{clader2013preconditioned}, the authors implement the system of linear equations that emerges from a FEM discretization of Maxwell's equations using degree one N\'{e}d\'{e}lec edge elements~\cite{nedelec1980mixed}. These elements yield a $7$-sparse matrix, which the authors decompose into a sum of one-sparse access oracles. The authors then supply an analytical formula for the one-sparse quantum oracles (Equation~(75) in Ref.~\cite[supplementary material]{clader2013preconditioned}). This formula, however, does not multiply the self-interaction of an edge by the number of elements that it belongs to. 
This bookkeeping could be integrated into the algorithm of~\cite{clader2013preconditioned} for order one elements in a straightforward manner, though it becomes more involved if one wishes to increase the order of the element, or the dimensionality of the problem. 
Our ``unit of interaction'' device circumvents this issue by looping over all interactions at the element level, and partitioning the connectivity array whenever such interactions on the local level do not correspond to a unique set of interactions on a global level.
Ultimately, the finite element assembly procedure is designed precisely to automate this bookkeeping---a feature that is preserved in the Qu-FEM algorithm.

\begin{figure}
     \centering
     \begin{subfigure}[b]{0.2\textwidth}
         \centering
         \includegraphics[scale=0.8]{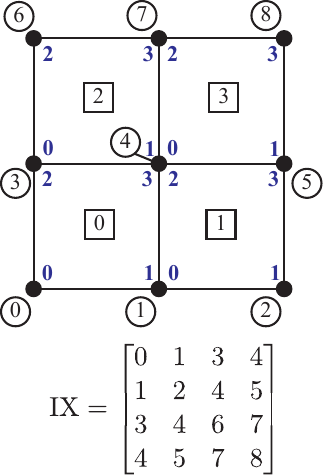}
         \caption{Injective $\IX{}$ map.}
         \label{fig:IX-injective-a}
     \end{subfigure}
     \hfill
     \begin{subfigure}[b]{0.7\textwidth}
         \centering
         \includegraphics[scale=0.8]{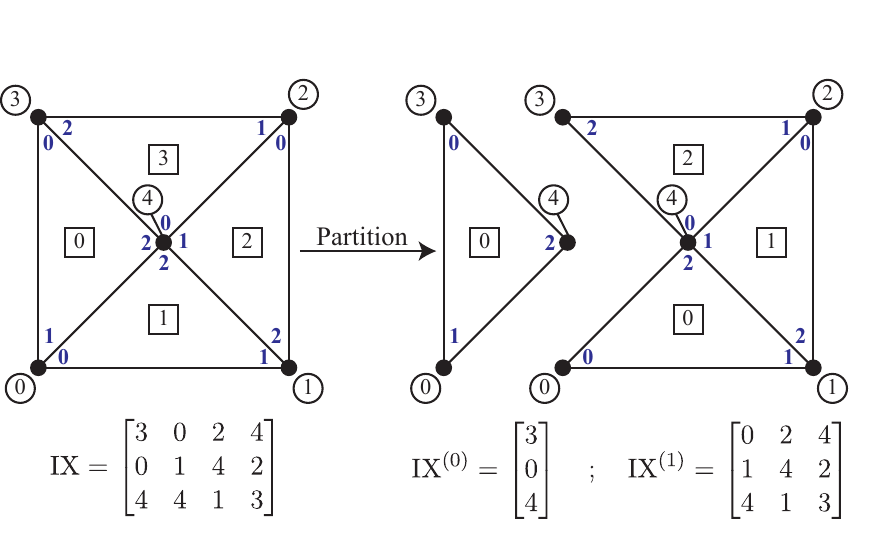}
         \caption{Non-injective $\IX{}$ partitioned into two injective maps $\IX{}^{(0)}$ and $\IX{}^{(1)}$.}
         \label{fig:IX-injective-b}
     \end{subfigure}
    \caption{A square domain is partitioned into four square elements in~(\ref{fig:IX-injective-a}), and four triangular elements in~(\ref{fig:IX-injective-b}). The element numbers are boxed, the global degree of freedom numbers are circled, and the local degree of freedom numbering is given in blue inside each element. In (\ref{fig:IX-injective-a}), we see that the local-to-global degree of freedom map $\IX{}(j,\bullet)$ is injective in the sense that no global degree of freedom that is shared between multiple elements is associated with the same local degree of freedom number. In (\ref{fig:IX-injective-b}), we see that the global degree of freedom ``$4$'' is necessarily associated with a repeated local degree of freedom number. This is remedied by partitioning the mesh into two pieces, each with an injective connectivity matrix.}
    \label{fig:IX-injective}
\end{figure}

Another useful device for assembling finite element arrays is a matrix that simply maps the local node numbers to the global node numbers. This is a factorization of the unit of interaction that arises from observing that $\ketbra{\mathrm{IX}(j,\el{})}{\mathrm{IX}(k,\el{})} = (\ketbra{\mathrm{IX}(j,\el{})}{\el{}}) (\ketbra{\mathrm{IX}(k,\el{})}{\el{}})^\dag $. Taking the superposition of all such projectors $\ketbra{\mathrm{IX}(j,\el{})}{\el{}}$ will enable the efficient assembly of finite element arrays with non-constant elemental contributions, which we will see in Section~\ref{sec:numerical-integration}.

\begin{defn}[Local-to-Global Node Number Indicator Matrix]
\label{defn:local-to-global-node-number-indicator-matrix}
    Let $\mathrm{IX}$ be a $\nen{} \times \numel{}$ connectivity matrix of a mesh (also called the local-to-global node number mapping) consisting of elements with $\nen{}$ nodes locally. We denote and define the local-to-global node number indicator matrix for $j \in [\nen{}]$ as 
    \begin{equation}
        \uoi_{j} := \sum_{\el{} \in [\numel{}]} \ketbra{\mathrm{IX}(j,\el{})}{\el{}} \,.
    \end{equation}
\end{defn}
\noindent
Observe that the unit of interaction $\uoi_{jk}$ is related to the local-to-global node number indicator matrix $\uoi_{j}$ by 
\begin{align}
    \uoi_{jk} &=  \uoi_j \left( \uoi_k \right)^\dag \,.
    \label{eq:Ajk-factoring}
\end{align}
As we demonstrate in the coming sections, however, it is often easier to identify explicit block-encodings for $\uoi_{jk}$ directly rather than as a product of block-encodings from $\uoi_j$.

Finally, we give a block-encoding of $\uoi_j$ in terms of a modified version of a connectivity oracle $\Oc_\IX$. Under the previously stated injectivity condition of the $\IX(j,\bullet)$ map (i.e., for a fixed local degree of freedom, the map between the element numbers and global degrees of freedom is one-to-one; see \cref{fig:IX-injective}), there exists a unitary $\Oc_\IX$ that satisfies 
\begin{align}
	\Oc_\IX \ket{j}\ket{\el} &= \ket{j}\ket{\IX(j,\el)}
	\,,
\end{align}
where $\el{} \in [\numel]$.
Define the circuit
\begin{align}
	\Oc_{< \numel}\ket{\el}\ket{0} &:= \begin{cases}
		\ket{\el}\ket{1} & \text{if } \el < \numel \,,\\
		\ket{\el}\ket{0} & \text{otherwise.}
	\end{cases}
	\,,
\end{align}
which compares the input element number $\el$ to the number of elements $\numel$. This oracle is needed to ensure that we are always referencing an element number that is within the bounds of the set $[\numel]$. 
Then the following circuit gives a $(1,m+1)$-block-encoding of $\uoi_j$ 
\tikzexternalenable
\begin{align}
	\begin{quantikz}
		\lstick{$\ket{j}$} &  
		\gategroup[3,steps=2,style={dashed,rounded
			corners,fill=blue!15, inner
			xsep=2pt},background,label style={label
			position=below,anchor=north,yshift=-0.2cm}]{{$(1,1)$-$\mathrm{BE}(\uoi_{j})$}}
		& \gate[2]{\Oc_\IX} & \rstick{$\ket{j}$} \\
		\lstick{$\ket{\el}$} & \gate[2]{\Oc_{< \numel}} &  & \rstick{$\ket{\IX(j,\el)}$} \\
		\lstick{$\ket{0}$} & & \targ{} & \meter{}
	\end{quantikz}
	\,.
	\label{eq:Aj-general-IX-BE}
\end{align}
\tikzexternaldisable
Note that the qubits holding the $\ket{j}$ state may be discarded at the end of the circuit above.

\subsection{Block-Encoding of Global Arrays for One-Dimensional Lagrange Elements}
\label{subsec:BE-global-arrays-Lagrange-elements}

\begin{figure}
    \centering
    \includegraphics[width=0.7\linewidth]{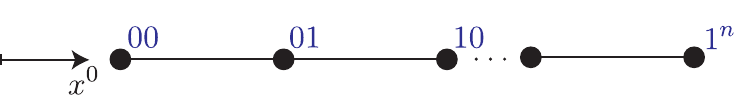}
    \caption{One-dimensional mesh consisting of a total number of global nodal points of~$\numnp{} = 2^n$. 
    }
    \label{fig:1d-mesh}
\end{figure}

Let us begin by considering a one-dimensional mesh (see Figure~\ref{fig:1d-mesh}). The number of nodal points $\numnp{}$ and the number of elements $\numel{}$ for a series of elements with $\nen{}$ degrees of freedom are related by 
\begin{align}
    \numnp{} &= \numel{}(\nen{}-1) + 1 \,.
    \label{eq:numnp-relation}
\end{align}
For $\Qc_p(\Omega^e)$ elements, we have that $\nen{} = p + 1$. Classically, one chooses $p$ and $\numel{}$, from which the number of global nodal points $\numnp{}$ follows. Here, we wish to have both $\numnp{} = 2^n$ nodal points and $\nen{} = 2^m$ local element nodes as powers of two for convenience. The more general case of $\nen{}$ not being a power of two can be dealt with by proper bookkeeping, but we nevertheless want to show that it is possible to achieve $\numnp{} = 2^n$ and $\nen{} = 2^m$, while still satisfying the constraint given by Eq.~\eqref{eq:numnp-relation}. From Eq.~\eqref{eq:numnp-relation}, the number of elements is given by 
\begin{align}
    \numel{} (2^m - 1) &= 2^n -1 \,.
    \label{eq:numel-relation}
\end{align}
This implies that we must have that $(2^m - 1)$ divides $(2^n - 1)$ in order to have an integer number of elements, which first-order elements (for which $\nen{} = 2 \Rightarrow m = 1$) always satisfy. For higher order elements, we may choose $n$ to be a multiple of $m$, say $n = km$ for some integer $k \ge 1$, from which we have 
\begin{align}
    2^n - 1 &= \left(2^m\right)^k - 1\\
    &= \left( 2^m - 1 \right) \left[ \left(2^m\right)^{k-1} + \left(2^m\right)^{k-2} + \dots + \left(2^m\right) + 1\right] \,, \\
    \Rightarrow \numel{} &= \left(2^m\right)^{k-1} + \left(2^m\right)^{k-2} + \dots + \left(2^m\right) + 1\,,
\end{align}
so that $\numel$ is an integer. It follows that arbitrarily large values of $\nen{}$ may be chosen to achieve higher order elements (i.e., $p$-refinement).

\subsubsection{\texorpdfstring{Order one elements ($\Qc_1(\Omega^e)$)}{Order one elements (Q1(Ωe))}}
\label{subsec:BE-1D-degree-1}

We first demonstrate how to assemble the global stiffness matrix $K$ (Eq.~\eqref{eq:K-global-integral-defn}) and the global mass matrix $M$ (Eq.~\eqref{eq:M-global-integral-defn}) for $\Qc_1(\Omega^e)$ elements on a quantum computer. 
We do this by way of Example~\ref{ex:1d-K-global}. There, we also demonstrate an explicit block-encoding for the unit of interaction for meshes of $\Qc_1(\Omega^e)$ elements. 
In addition to enabling the solution of one-dimensional problems, the $p=1$ case will be used when we generalize this procedure to $\Qc_p(\Omega^e)$ elements in the next subsection.

\begin{examplebox}{Parallel Assembly of Global Arrays for $Q_1(\Omega^e)$ Elements}
\label{ex:1d-K-global}
Consider the one-dimensional mesh in Figure~\ref{fig:1d-mesh} with $\numnp{} = 2^n$. 
In one dimension, Eq.~\eqref{eq:uoi-defn} for the unit of interaction $\uoi_{00}$ is given by the $2^n \times 2^n$ matrix
\begin{align}
    \uoi_{00} &= \begin{bmatrix}
        1 & 0 & \cdots & 0 & 0\\
        0 & 1 & \cdots & 0 & 0\\
        \vdots & \vdots & \ddots & \vdots & \vdots\\
        0 & 0 & \cdots & 1 & 0 \\
        0 & 0 & \cdots & 0 & 0
    \end{bmatrix}
    \,.
\end{align}
In this case, the unit of interaction is a projector that omits the last basis state (i.e., the self-interaction of the right-most node in the mesh).
The $(1,1)$-block-encoding of this unit of interaction is closely related to the reversible quantum logical OR gate on $n$ qubits, which for two qubits can be implemented as 
\tikzexternalenable
\begin{align}
    \begin{quantikz}
        & \gate[3]{\mathrm{OR}^{(2)}} &\\
        && \\
        \lstick{$\ket{0}$} && 
    \end{quantikz}
    :=
    \begin{quantikz}
        & \targ{} & \ctrl{2} & \targ{} & \\
        & \targ{} & \ctrl{0} & \targ{} & \\
        \lstick{$\ket{0}$}& \targ{} & \targ{} & & 
    \end{quantikz}
    \,.
\end{align}
\tikzexternaldisable
In the computational basis, the flag qubit at the bottom of the circuit is equal to $\ket{1}$ if and only if all other qubits are in the $\ket{0}$ state.
For $n$ qubits, the $\mathrm{OR}$ gate is 
\tikzexternalenable
\begin{align}
\begin{quantikz}
    \lstick[4]{$\ket{0^n}$}& \targ{} 
    \gategroup[5,steps=3,style={dashed,rounded
    corners,fill=blue!15, inner
    xsep=2pt},background,label style={label
    position=below,anchor=north,yshift=-0.2cm}]{{$\mathrm{OR}^{(n)}$}}
    & \ctrl{4} & \targ{} & \\
    & \targ{} & \ctrl{0} & \targ{} & \\
    & \wireoverride{n} & \gate[nwires=1,style={fill=blue!15,draw=blue!15,text height=1cm}]{\vdots} \wireoverride{n}& \wireoverride{n}& \wireoverride{n}\\
    & \targ{} & \ctrl{0} & \targ{} & \\
    \lstick{$\ket{0}$}& \targ{} & \targ{} & & 
\end{quantikz}
\,.
\label{eq:qOR-n-qubit}
\end{align}
\tikzexternaldisable
Using the $\mathrm{OR}^{(n)}$ along with Pauli-$X$ gates, we obtain the first unit of interaction $\uoi_{00}$. The flag qubit for the $\mathrm{OR}^{(n)}$ gate is also the ancilla qubit (i.e., the `success flag') for the block-encoding. The remaining units of interaction for $\Qc_1(\Omega^e)$ elements can be represented by shifted versions of the first unit of interaction as:

\begin{minipage}[t]{0.5\linewidth}
    \centering
    \tikzexternalenable
    \begin{align}
    \scalebox{0.89}{%
    \begin{quantikz}
        \lstick[3]{$\ket{i}$} & \targ{} 
        \gategroup[4,steps=3,style={dashed,rounded
        corners,fill=blue!15, inner
        xsep=2pt},background,label style={label
        position=below,anchor=north,yshift=-0.2cm}]{{$(1,1)$-$\mathrm{BE}(\uoi_{00})$}}
        & \gate[4]{\mathrm{OR}^{(n)}} & \targ{} & \\[-0.5cm]
        \phantom{a}\vdots\wireoverride{n} &\wireoverride{n}&\wireoverride{n}&\wireoverride{n}&\wireoverride{n} \\[-0.4cm]
        & \targ{} && \targ{}& \\
        \lstick{$\ket{0}$} & \targ{} &&& \meter{}
    \end{quantikz}
	}
    \label{eq:BE-1D-A00}
    \end{align}
    \tikzexternaldisable
\end{minipage}
\begin{minipage}[t]{0.5\linewidth}
    \centering
    \tikzexternalenable
    \begin{align}
    \scalebox{0.89}{%
    \begin{quantikz}
        \lstick[3]{$\ket{i}$} &  \gate[3]{S^{-1}}
        \gategroup[4,steps=2,style={dashed,rounded
        corners,fill=blue!15, inner
        xsep=2pt},background,label style={label
        position=below,anchor=north,yshift=-0.2cm}]{{$(1,1)$-$\mathrm{BE}(\uoi_{01})$}}
        & \gate[4]{\mathrm{BE}(\uoi_{00})} & \\[-0.5cm]
        \phantom{a}\vdots\wireoverride{n}\\
        &&& \\
        \lstick{$\ket{0}$} &&& \meter{}
    \end{quantikz}
	}
    \end{align}
    \tikzexternaldisable
\end{minipage}
\begin{minipage}[t]{0.5\linewidth}
    \centering
    \tikzexternalenable
    \begin{align}
    \scalebox{0.89}{%
    \begin{quantikz}
        \lstick[3]{$\ket{i}$} &  \gate[4]{\mathrm{BE}(\uoi_{00})}
        \gategroup[4,steps=2,style={dashed,rounded
        corners,fill=blue!15, inner
        xsep=2pt},background,label style={label
        position=below,anchor=north,yshift=-0.2cm}]{{$(1,1)$-$\mathrm{BE}(\uoi_{10})$}}
        & \gate[3]{S^1} & \\[-0.5cm]
        \phantom{a}\vdots\wireoverride{n}\\
        &&& \\
        \lstick{$\ket{0}$} &&& \meter{}
    \end{quantikz}
	}
    \end{align}
    \tikzexternaldisable
\end{minipage}
\begin{minipage}[t]{0.5\linewidth}
    \centering
    \tikzexternalenable
    \begin{align}
    \scalebox{0.89}{%
    \begin{quantikz}
        \lstick[3]{$\ket{i}$} &  \gate[3]{S^{-1}}
        \gategroup[4,steps=3,style={dashed,rounded
        corners,fill=blue!15, inner
        xsep=2pt},background,label style={label
        position=below,anchor=north,yshift=-0.2cm}]{{$(1,1)$-$\mathrm{BE}(\uoi_{11})$}}
        & \gate[4]{\mathrm{BE}(\uoi_{00})} & \gate[3]{S^1} & \\[-0.5cm]
        \phantom{a}\vdots\wireoverride{n}\\
        &&&& \\
        \lstick{$\ket{0}$} &&&& \meter{}
    \end{quantikz}
	}
    \end{align}
    \tikzexternaldisable
\end{minipage}
In the above circuit, $S^1$ and $S^{-1}$ are the modular left and right shift operators, which implement the following transformation on $N = 2^n$ states
\begin{equation}
	\begin{split}
		S^{-1}\ket{i} &= \ket{(i-1)\mod N} \,,\\
		S^1\ket{i} &= \ket{(i+1)\mod N}.
	\end{split}
\end{equation}
Note that $S^{-1} = (S^1)^\dag$. 
In matrix form, we have that 
\begin{align}
	S^1 &= \begin{bmatrix}
		0 & 0 & \cdots & 0 & 1\\
		1 & 0 & \cdots & 0 & 0\\
		0 & 1 & \cdots & 0 & 0\\
		\vdots & \vdots & \ddots & \vdots & \vdots\\
		0 & 0 & \cdots & 1 & 0
	\end{bmatrix}
	\,.
\end{align}\mbox{}

During circuit compilation, the Pauli $X$ gates in Eq.~\eqref{eq:BE-1D-A00} cancel out with the Pauli $X$ gates in Eq.~\eqref{eq:qOR-n-qubit}, resulting in a simple multi-controlled NOT gate $C^n(\NOT)$:
\tikzexternalenable
\begin{align}
\begin{quantikz}
    \lstick[3]{$\ket{i}$} & \ctrl{2} 
    \gategroup[4,steps=1,style={dashed,rounded
    corners,fill=blue!15, inner
    xsep=2pt},background,label style={label
    position=below,anchor=north,yshift=-0.2cm}]{{$(1,1)$-$\mathrm{BE}(\uoi_{00})$}}
    & \\
    \phantom{a}\vdots\wireoverride{n}\\
    & \ctrl{1} & \\
    \lstick{$\ket{0}$} & \targ{} &
\end{quantikz}
\,.
\label{eq:BE-1D-A00-simple}
\end{align}
\tikzexternaldisable

The block-encodings for the remaining units of interaction need not be shifted versions of $\uoi_{00}$. For example, a $(1,1)$-block-encoding of $\uoi_{11}$ with a reduced Toffoli count is given by 
\tikzexternalenable
\begin{align}
\begin{quantikz}
    \lstick[3]{$\ket{i}$} &  
    \gategroup[4,steps=2,style={dashed,rounded
    corners,fill=blue!15, inner
    xsep=2pt},background,label style={label
    position=below,anchor=north,yshift=-0.2cm}]{{$(1,1)$-$\mathrm{BE}(\uoi_{11})$}}
    & \gate[4]{\mathrm{OR}^{(n)}} & \\[-0.5cm]
    \phantom{a}\vdots\wireoverride{n}\\
    &&& \\
    \lstick{$\ket{0}$} & \targ{} && \meter{}
\end{quantikz}
\,.
\end{align}
\tikzexternaldisable

Finally, we can assemble the global stiffness matrix by performing the LCU 
\begin{align}
    K &= \sum_{i,j \in [2]} K^e_{ij} \uoi_{ij} 
    \,,
\end{align}
which may be represented as the circuit 

\tikzexternalenable
\begin{align}
\scalebox{0.9}{%
\begin{quantikz}[column sep=0.1cm]
    \lstick{$\ket{0}$}& \qwbundle{\log(p+1)} &&& \gate[2]{\textsc{prep}_{K^e}} 
    \gategroup[6,steps=10,style={dashed,rounded
    corners,fill=blue!15, inner
    xsep=2pt},background,label style={label
    position=below,anchor=north,yshift=-0.2cm}]{{$\left( \sum_{j,k \in [p+1]} |K_{jk}^e|, \Oc(\log(p+1)) \right)\mathrm{-BE}(K)$}}
    & \octrl{2} && \octrl{2} && \ctrl{2} && \ctrl{2} && \gate[2]{\widetilde{\textsc{prep}}_{K^e}} & \meter{} \rstick{$\bra{0}$}\\
    \lstick{$\ket{0}$} & \qwbundle{\log(p+1)} &\hphantomgate{}&\hphantomgate{}& & \octrl{0} && \ctrl{0} && \octrl{0} && \ctrl{1} && & \meter{} \rstick{$\bra{0}$}\\
    \lstick{$\ket{i}$} & \qwbundle{n} &&& & \gate[2]{U_{\uoi_{00}}} && \gate[2]{U_{\uoi_{01}}} && \gate[2]{U_{\uoi_{10}}} && \gate[2]{U_{\uoi_{11}}} && & \\
    \lstick{$\ket{0}$} &&&&&& \octrl{1} && \octrl{1} && \octrl{1} && \octrl{1} && \meter{} \rstick{$\bra{0}$}\\
    \lstick[1]{$\ket{1}$} &&&&&& \gate[2]{S^{-1}} && \gate[2]{S^{-1}} && \gate[2]{S^{-1}} && \gate[2]{S^{-1}} && \meter{} \rstick{$\bra{0}$}\\
    \lstick{$\ket{0}$}& \qwbundle{\log((p+1)^2)} &&&&&&&&&&&&& \meter{} \rstick{$\bra{0}$}
\end{quantikz}
}
\label{eq:BE-K-circuit}
\end{align}
\tikzexternaldisable
Here, we have used the compression gadget from Ref.~\cite{fang2023time} (utilized in the bottom three wires in \cref{eq:BE-K-circuit}) to reduce the number of ancilla needed to successfully apply all four block-encodings. This will be useful when we generalize $p^\text{th}$ order elements, where we will have $\nen{}^2$ terms in the LCU.
\Cref{eq:BE-K-circuit} block-encodes the global stiffness matrix $K$ for $\Qc_1(\Omega^e)$ elements.\\

Since 
\begin{align}
    M &= \sum_{i,j \in [2]} M^e_{ij} \uoi_{ij} 
    \,,
\end{align}
the block-encoding of the global mass matrix $M$ is exactly the same as the above circuit, but with the prepare oracles replaced with those for the elemental mass matrix $M^e$. 
Using the basis functions for the $\Qc_1(\Omega^e)$ element given by Eq.~\eqref{eq:Ne-1d}, the elemental stiffness and mass matrices can be computed analytically as 
\begin{align}
    K^e &= \begin{bmatrix}
        1 & -1 \\
        -1 & 1 \\
    \end{bmatrix}
    \quad \text{and} \quad
    M^e = \begin{bmatrix}
        \frac{1}{3} & \frac{1}{6} \\[3pt]
        \frac{1}{6} & \frac{1}{3} \\
    \end{bmatrix}
    \,.
\end{align}
From these elemental arrays, we see that Eq.~\eqref{eq:BE-K-circuit} provides a $(4,6)$-block-encoding of $K$, and a $(1,6)$-block-encoding of $M$. 
Explicit block-encodings for the prepare oracles can be given as follows:
\begin{align}
    \textsc{prep}_{K^e} &= S^{-1}\left((S H Z) \otimes H\right) = \frac{1}{2} 
    \begin{bmatrix}
        1 & -1 & -1 & 1 \\
        i & i & i & i \\
        i & -i & i & -i \\
        1 & 1 & -1 & -1 \\
    \end{bmatrix} \,,\\
    \widetilde{\textsc{prep}}_{K^e} &= \left((Z H S) \otimes H\right)S^{1} = \frac{1}{2}
    \begin{bmatrix}
        1 & i & i & 1 \\
        -1 & i & -i & 1 \\
        -1 & i & i & -1 \\
        1 & i & -i & -1 \\
    \end{bmatrix} \,, \\
    \textsc{prep}_{M^e} &= S^{-1} \left( R_y(\theta_M) \otimes H \right) = 
    \begin{bmatrix}
        \frac{1}{\sqrt{3}} & -\frac{1}{\sqrt{3}} & -\frac{1}{\sqrt{6}} & \frac{1}{\sqrt{6}} \\
        \frac{1}{\sqrt{6}} & \frac{1}{\sqrt{6}} & \frac{1}{\sqrt{3}} & \frac{1}{\sqrt{3}} \\
        \frac{1}{\sqrt{6}} & -\frac{1}{\sqrt{6}} & \frac{1}{\sqrt{3}} & -\frac{1}{\sqrt{3}} \\
        \frac{1}{\sqrt{3}} & \frac{1}{\sqrt{3}} & -\frac{1}{\sqrt{6}} & -\frac{1}{\sqrt{6}} \\
    \end{bmatrix} \,, \\
    \widetilde{\textsc{prep}}_{M^e} &= \textsc{prep}_{M^e}^\dag = \left( R_y(\theta_M)^\dag \otimes H \right) S^{1} \,,
\end{align}
where $S := \ketbra{0}{0} + i \ketbra{1}{1}$ is the phase gate (not to be confused with the shift matrices $S^1$ and $S^{-1}$), and $R_y(\theta) := e^{-i\theta Y/2}$ is the rotation of a qubit by angle $\theta$ around the $y$-axis of the Bloch sphere. Here, we choose $\theta_{M} := 2\arccos\left(\sqrt{2/3}\right)$ so that 
\begin{align}
    R_y(\theta_M) &= 
    \begin{bmatrix}
        \cos(\theta_M/2) & -\sin(\theta_M/2)\\
        \sin(\theta_M/2) & \cos(\theta_M/2)
    \end{bmatrix}
    = \begin{bmatrix}
        \sqrt{\frac{2}{3}} & -\frac{1}{\sqrt{3}} \\[3pt]
        \frac{1}{\sqrt{3}} & \sqrt{\frac{2}{3}} \\
    \end{bmatrix}
    \,.
\end{align}

\end{examplebox}

The LCU circuit of the units of interaction with the elemental stiffness matrix $K^e$ in Eq.~\eqref{eq:BE-K-circuit} yields the following result for assembling global arrays for PDEs with constant elemental arrays.

\begin{thm}[Query Complexity for the Assembly of $\Qc_1(\Omega^e)$ Elements in $1$D]
\label{thm:query-complexity-Assembly-1D-degree-1}
    Let $A$ be a finite element array of size of size $2^n \times 2^n$ assembled from constant elemental contributions $A_{ij}^e$ using first-order Lagrange elements $\Qc_1(\Omega^e)$, i.e., $A = \sum_{j,k \in [2]} A^e_{jk} \uoi_{jk}$ where $\uoi_{jk}$ are the units of interaction for a one-dimensional mesh of $\Qc_1(\Omega^e)$ elements with $\numnp{} = 2^n$. Then one can implement a 
    \begin{align}
        \left( \sum_{j,k \in [2]} |A_{jk}^e| , 6 \right)\mathrm{-BE}(A)
        \label{eq:BE-cost-A}
    \end{align}
    using $\Oc\left( n \right)$ Toffoli or simpler gates.
\end{thm}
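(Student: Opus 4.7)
The plan is to verify that the LCU circuit displayed in Eq.~\eqref{eq:BE-K-circuit} (with $K^e$ replaced by the generic elemental contribution $A^e$) realizes the claimed block-encoding within the stated resource budget. Since $A = \sum_{j,k \in [2]} A^e_{jk}\,\uoi_{jk}$ by the assembly identity, it suffices to (i) exhibit $(1,O(1))$-block-encodings of each of the four units of interaction $\uoi_{00},\uoi_{01},\uoi_{10},\uoi_{11}$ with Toffoli count $O(n)$, (ii) combine them via LCU using prepare oracles built from the coefficients $A^e_{jk}$, and (iii) apply the compression gadget of Ref.~\cite{fang2023time} to collapse the per-term flag qubits into a single shared success flag so that the total ancilla count is 6.

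For step~(i), I would take the circuit in Eq.~\eqref{eq:BE-1D-A00-simple} as the block-encoding of $\uoi_{00}$, which is just a multi-controlled NOT $C^n(\NOT)$. Standard decompositions (e.g.\ the Toffoli-ladder construction reviewed in Figure~\ref{fig:Toffoli-gate} together with one clean borrowed ancilla) give an implementation using $O(n)$ Toffoli and Clifford$+T$ gates. The remaining three units of interaction are obtained by pre- and/or post-composing with the modular shift $S^{\pm 1}$ defined in Example~\ref{ex:1d-K-global}. A standard quantum incrementer realizes $S^{\pm 1}$ with an $O(n)$ cascade of multi-controlled NOTs, again costing $O(n)$ Toffolis. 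Thus each of the four $\uoi_{jk}$ admits a $(1,O(1))$-block-encoding with gate count $O(n)$.

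For step~(ii), the LCU requires a $(\textsc{prep}_{A^e},\widetilde{\textsc{prep}}_{A^e})$ pair acting on $\log(4)=2$ prepare ancillas. Since $A^e$ is a fixed $2\times 2$ array that can be computed classically once and for all, these prepare unitaries act only on a constant-sized register and cost $O(1)$ gates by standard state preparation. The resulting LCU block-encoding has subnormalization $\|A^e\|_1 = \sum_{j,k\in[2]}|A^e_{jk}|$, matching the factor in Eq.~\eqref{eq:BE-cost-A}. For step~(iii), each $\uoi_{jk}$ individually contributes one flag qubit; naively summing four of them would exceed the six-ancilla budget, but the compression gadget of Ref.~\cite{fang2023time} (depicted in the bottom three wires of Eq.~\eqref{eq:BE-K-circuit}) merges all per-term success flags into a single amalgamated flag while adding only $O(1)$ ancillas and $O(1)$ Toffolis. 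Accounting for the $2$ prepare ancillas, the $\uoi_{jk}$ flag, and the $O(1)$ compression ancillas, the total is bounded by~$6$.

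The only real obstacle, and where I would spend the most care, is the bookkeeping in step~(iii): one must check that the compression gadget truly packs the four success flags together with the prepare registers into exactly six qubits, and that it does so without changing the subnormalization or introducing off-diagonal cross-terms between the units of interaction. Once the ancilla ledger is verified, summing the costs from (i)--(iii) yields the claimed $O(n)$ Toffoli bound, so the only asymptotic contribution comes from the four multi-controlled NOTs and the two incrementers used to realize the units of interaction.
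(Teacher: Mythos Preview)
Your proposal is correct and follows essentially the same route as the paper: both use the $C^n(\NOT)$ realization of $\uoi_{00}$, obtain the other three units of interaction by pre/post-composing with $S^{\pm 1}$, combine via LCU with a constant-size $(\textsc{prep}_{A^e},\widetilde{\textsc{prep}}_{A^e})$ pair, and invoke the compression gadget of Ref.~\cite{fang2023time} to share the per-term flag qubits so that the total ancilla count lands at exactly~$6$. The paper's proof simply cites Lemma~\ref{lem:linear-combination-of-block-encodings} for the LCU step and Ref.~\cite{barenco1995elementary} for the $O(n)$ Toffoli cost of both $C^n(\NOT)$ and the shift operators, without belaboring the ancilla ledger you flagged in step~(iii); the six-qubit count can be read directly off the wires of Eq.~\eqref{eq:BE-K-circuit}.
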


\begin{proof}
    The circuit in Eq.~\eqref{eq:BE-K-circuit} implements $\LCU\left( (U_{\uoi_{jk}})_{j,k \in [2]}, (A_{jk}^e)_{j,k \in [2]} \right)$ using $6$ ancilla qubits, so that Eq.~\eqref{eq:BE-cost-A} follows from Lemma~\ref{lem:linear-combination-of-block-encodings}. Additionally, the multi-qubit controlled NOT gate $C^n(\NOT)$ is a $(1,1)$-block-encoding for the $2^n \times 2^n$ unit of interaction $\uoi_{00}$ (see Eq.~\eqref{eq:BE-1D-A00}), which uses $\Oc(n)$ Toffoli gates~\cite{barenco1995elementary}. The remaining units of interaction are given by $\uoi_{jk} = S^{j} \uoi_{00} S^{-k}$, so that a $(1,1)$-block-encoding for $\uoi_{jk}$ is given by $(S^j \otimes I)C^n(\NOT)(S^{-k} \otimes I)$. Since the shift matrix on $n$ qubits can be implemented with $\Oc(n)$ Toffoli gates\footnote{The implementations of $C^n(\NOT)$ and $S^1$ using $\Oc(n)$ Toffoli gates requires one additional ancilla qubit~\cite{barenco1995elementary}. Without additional ancilla, this requires $\Oc(n^2)$ Toffoli gates. In this work, we only count ancilla that require post-selection in the block-encoding cost; hence, we choose the Toffoli-optimal $\Oc(n)$ constructions.
    }~\cite{kharazi2024explicitblockencodingsboundary}, it follows that all the $(1,1)\mathrm{-BE}(\uoi_{jk})$ require $\Oc(n)$ Toffoli gates. Finally, we note that the state-preparation pair $(\PREP_{A^e},\widetilde{\PREP}_{A^e})$ for the classically precomputed $2 \times 2$ elemental matrix $A^e$ can be implemented using $\Oc(1)$ Toffoli gates~\cite{babbush2018encoding}.
\end{proof}

\begin{rem}
    For periodic boundary conditions, the unit of interaction simplifies to $\uoi_{00} \equiv I$. The remaining units of interaction are then $\uoi_{jk} = S^{j-k}$, so that Theorem~\ref{thm:query-complexity-Assembly-1D-degree-1} holds for periodic boundary conditions as well. 
    More general boundary conditions---such as Dirichlet and Neumann boundary conditions---are addressed in Section~\ref{sec:constraints}.
\end{rem}

\noindent
Examples of finite element arrays that follow the complexity in \cref{thm:query-complexity-Assembly-1D-degree-1} are the assembled $1$D stiffness $K = \sum_{i,j \in [p+1]} K^e_{ij} \uoi_{ij}$ and mass $M = \sum_{i,j \in [p+1]} M^e_{ij} \uoi_{ij}$ matrices of size $N \times N$ given in Example~\ref{ex:1d-K-global}.
This is a special case of the more general result for $\Qc_p(\Omega^e)$ elements derived in the next section; we nevertheless state it separately here due to its correspondence with Example~\ref{ex:1d-K-global}. Additionally, the Toffoli count for the order one elements scales linearly with the number of qubits $n$, demonstrating that a direct block-encoding of the unit of interaction can lead to more optimal circuit costs.

\subsubsection{\texorpdfstring{Order $p$ elements ($\Qc_p(\Omega^e)$)}{Order $p$ elements (Qp(Ωe))}}

\label{subsec:BE-1D-degree-p}

We now turn our attention to implementing higher order elements (i.e., $\Qc_p(\Omega^e)$ elements) in one dimension. Instead of increasing the number of elements $\numel$ in the mesh, it is sometimes desirable to increase the order $p$ of the elements in order to increase the accuracy of the solution~\cite{brenner2008mathematical,Solin2003,Babuska1981}. The assembly of $\Qc_p(\Omega^e)$ elements corresponds to assembling global arrays of the form 
\begin{align}
    A &= \sum_{j,k \in [p+1]} A_{jk}^e \uoip_{jk} \,,
\end{align}
where $A^e$ is the elemental array, and $\uoip_{jk}$ are the units of interaction for $\Qc_p(\Omega^e)$ elements. Recall from Eq.~\eqref{eq:numnp-relation} that to guarantee $\numnp{}$ is a power of two and $\numel{}$ is an integer, we choose 
\begin{align}
    p = 2^m - 1 \quad\text{and}\quad \numnp{} = 2^{mk}\,,
    \label{eq:p-numnp-powers-of-two}
\end{align}
where $1 \le m,k \in \bbZ$. 
Observe that the global array is assembled by looping through all $\nen{}^2 = (p+1)^2$ interactions present in a $\Qc_p(\Omega^e)$ element. 
Similar to Section~\ref{subsec:BE-1D-degree-1}, the strategy will be to find a block-encoding $U_{\uoip_{00}} \in (1,m)\mathrm{-BE}(\uoip_{00})$, from which block-encodings for the remaining units of interaction are given by 
\begin{align}
    U_{\uoip_{jk}} &= \left( S^j \otimes I^{\otimes m} \right) U_{\uoip_{00}} \left( S^{-k} \otimes I^{\otimes m} \right)
    \,.
    \label{eq:Ajk-relation-to-A00}
\end{align}
Thus, without loss of generality, we will focus on block-encoding $\uoip_{00}$. Before giving the circuit for this block-encoding, it is instructive to examine the structure of $\uoip_{00}$ (for one-dimensional Lagrange elements) for a particular choice of $p > 1$. We do this in Example~\ref{ex:uoi-p-equals-3} below.\\

\begin{examplebox}{The unit of interaction $\uoim{3}_{00}$ for $\Qc_3(\Omega^e)$ elements}
\label{ex:uoi-p-equals-3}

\begin{minipage}[t]{1.\linewidth}
    \vspace*{0pt}
    \begin{center}
        \includegraphics[width=1.\linewidth]{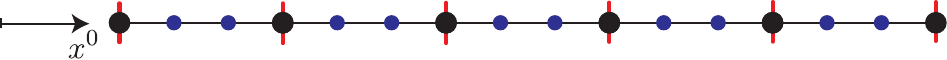}
        \captionof{figure}{A one-dimensional mesh with $m,k=2$ (that yields $p = 3$, $\numnp{} = 16$, and $\numel{} = 5$). All circles denote nodal points, with black circles denoting nodal points at the boundary of an element, and the smaller, blue circles denoting nodal points interior to an element. The vertical red lines delineate the beginning and end of an element.}\label{fig:1d-mesh-p-3}
    \end{center}
\end{minipage}
\vspace{5pt}

Consider the mesh of Lagrange elements in \cref{fig:1d-mesh-p-3} with $m,k=2$ so that $p = 3$ and $\numnp{} = 16$ (see Eq.~\eqref{eq:p-numnp-powers-of-two}). The number of (degree $3$) elements in this mesh is then $\numel{} = \frac{\numnp{} - 1}{p} = 5$. Numbering the (global) degrees of freedom sequentially from left-to-right and using Eq.~\eqref{eq:uoi-defn} yields 
\setcounter{MaxMatrixCols}{16}
\begin{align}
    \uoim{3}_{00} &= 
    \begin{bmatrix}
        1& & & & & & & & & & & & & & &\\
        & 0& & & & & & & & & & & & & &\\
        & & 0& & & & & & & & & & & & &\\
        & & & 1& & & & & & & & & & & &\\
        & & & & 0& & & & & & & & & & &\\
        & & & & & 0& & & & & & & & & &\\
        & & & & & & 1& & & & & & & & &\\
        & & & & & & & 0& & & & & & & &\\
        & & & & & & & & 0& & & & & & &\\
        & & & & & & & & & 1& & & & & &\\
        & & & & & & & & & & 0& & & & &\\
        & & & & & & & & & & & 0& & & &\\
        & & & & & & & & & & & & 1& & &\\
        & & & & & & & & & & & & & 0& &\\
        & & & & & & & & & & & & & & 0&\\
        & & & & & & & & & & & & & & &0
    \end{bmatrix}
    \,,
\end{align}

with all off-diagonal elements being zero. This matrix is a projector, for which the general block-encoding strategy is to introduce an ancilla and perform controlled flips on the basis states where the matrix is supported~\cite{camps2023explicitquantumcircuitsblock}. To extract the pattern in which the ones appear along the diagonal, we can rewrite this matrix as
\begin{align}
    \uoim{3}_{00} &= 
    \begin{bmatrix}
        \begin{bmatrix}
            1 \\ &1\\ &&1\\ &&&1\\ &&&&1
        \end{bmatrix} 
        \otimes 
        \begin{bmatrix}
            1 \\ &0\\ &&0
        \end{bmatrix}  &\\
        & 0
    \end{bmatrix}
    = \begin{bmatrix}
        I_{\log 5} \otimes \ket{0}_{\log 3}\bra{0}_{\log 3} &\\
        & 0
    \end{bmatrix}
    \label{eq:uoi-p-3}
    \,.
\end{align}
In other words, the $i^\text{th}$ entry along the diagonal of $\uoim{3}_{00}$ can be obtained from the function  
\begin{align}
    f(i) &= \begin{cases}
        1 & \text{if } i \mod 3 = 0 \text{ and } i \ne 15\\
        0 & \text{otherwise}
    \end{cases} 
    \,.
\end{align}

\end{examplebox}

The result from Example~\ref{ex:uoi-p-equals-3} in Eq.~\eqref{eq:uoi-p-3} generalizes to $p^\text{th}$ order elements as 
\begin{align}
    \uoip_{00} &= 
    \begin{bmatrix}
        I_{\log\numel{}} \otimes \ket{0}_{\log p}\bra{0}_{\log p} &\\
        & 0
    \end{bmatrix}
    \,,
    \label{eq:A00-p-general-form}
\end{align}
where once again, all off-diagonal elements are zero. That is, the $i^\text{th}$ entry along the diagonal of $\uoip_{00}$ corresponds to the function
\begin{align}
    f(i) &= \begin{cases}
        1 & \text{if } i \pmod p = 0 \text{ and } i \ne \numnp{}-1\\
        0 & \text{otherwise}
    \end{cases} 
    \,.
    \label{eq:uoi-diagonal-function-order-p}
\end{align}
To block-encode $\uoip_{00}$, we thus need to compute the remainder $r := i \mod p$, and apply a controlled flip on an ancilla qubit if $r = 0$. Note that there are several quantum circuits that implement classical arithmetic operations efficiently (see Ref.~\cite{wang2024comprehensive} for a review).

In particular, a circuit for ``quantum division'' can be used to compute $i \mod p$. We represent this circuit as a unitary $U_{\div}$ that takes in a dividend $\ket{i}_n$, a divisor $\ket{p}_m$, a set of ancilla $\ket{0}_n$, and returns the quotient $\ket{Q}_n$, the divisor $\ket{p}_m$, and the remainder $\ket{r}_n$, i.e., 
\begin{align}
    U_{\div} \ket{i}_n \ket{p}_m \ket{0}_n &= \ket{Q}_n \ket{p}_m \ket{(i \mod p)}_n
    \,.
    \label{eq:quantum-division-unitary}
\end{align}
Ref.~\cite{orts2024quantum} gives a circuit for quantum division that has a Toffoli count of $46mn - 46m^2 + 48m - 2n - 2$. 
Using an additional set of ancilla $\ket{0}_m$ and a $C^n(\NOT)$ gate, we can form a unitary $U_{\% p}$ that utilizes $\Oc(n m)$ Toffoli gates and acts as\footnote{Internal to the $U_{\% p}$ circuit are an additional $n + m$ ancilla qubits that are restored to the state $\ket{0}_{n+m}$ at the end of the computation. Since none of these qubits need to be post-selected, we do not include them in the block-encoding cost.} 
\begin{align}
    U_{\% p} \ket{i}_n \ket{0}_m &= \ket{i}_n \ket{(i \mod p)}_m 
    \,.
\end{align}
To represent this as quantum circuitry, write $p$ in binary as $p_{m-1} \cdots p_0$, and let $U_p := \bigotimes_{i=1}^m \NOT^{p_{m-i}}$ be the unitary that prepares $\ket{p}$ (i.e., $U_p\ket{0}_m = \ket{p}$). The modular arithmetic unitary $U_{\% p}$ is then given by
\tikzexternalenable
\begin{align}
\begin{quantikz}
    \lstick{$\ket{i}_n$}& \qwbundle{n} & 
    \gategroup[4,steps=8,style={dashed,rounded
    corners,fill=blue!20, inner
    xsep=2pt},background,label style={label
    position=below,anchor=north,yshift=-0.2cm}]{{$U_{\% p}$}}
    && & \gate[3]{U_{\div}} & & \gate[3]{U_{\div}^\dagger} & && \rstick{$\ket{i}_n$}\\
    \wireoverride{n}&\wireoverride{n}& 
    \ket{0}_m\wireoverride{n}& \qwbundle{m} & \gate{U_p} & & & & \gate{U_p^\dagger} & \ket{0}_m \\
    \wireoverride{n}&\wireoverride{n}&  
    \ket{0}_n\wireoverride{n}& \qwbundle{n} & & & \ctrl{1} & & & \ket{0}_n\\
    \lstick{$\ket{0}_m$} & \qwbundle{m} & & & & & \targ{} & & && \rstick{$\ket{(i \mod p)}_m$}
\end{quantikz}
\,.
\label{eq:U-mod-p}
\end{align}
\tikzexternaldisable
From Eq.~\eqref{eq:uoi-diagonal-function-order-p}, a block-encoding for the unit of interaction $\uoip_{00}$ is then given by 
\tikzexternalenable
\begin{align}
\begin{quantikz}
    \lstick{$\ket{i}_n$}& \qwbundle{n} & 
    \gategroup[4,steps=7,style={dashed,rounded
    corners,fill=blue!20, inner
    xsep=2pt},background,label style={label
    position=below,anchor=north,yshift=-0.2cm}]{{$U_{\uoip_{00}} \in (1,2)\mathrm{-BE}(\uoip_{00})$}}
    & & \gate[2]{U_{\% p}} & & \gate[2]{U_{\% p}^\dagger} & \ctrl{3} && \rstick{$\ket{i}_n$}\\
    \wireoverride{n}&\wireoverride{n}& 
    \ket{0}_m\wireoverride{n}& \qwbundle{m}  & & \octrl{1} & & & \ket{0}_m \\
    \lstick{$\ket{0}$} && & & \targ{} & \targ{} & & && \meter{} \rstick{$\bra{0}$}\\
    \lstick{$\ket{0}$} && & & & & & \targ{} && \meter{} \rstick{$\bra{0}$}
\end{quantikz}
\,.
\label{eq:BE-uoi-pth-order}
\end{align}
\tikzexternaldisable
To verify this algebraically, observe that\footnote{Recall that with $b$ as a binary string of size $k-j+1$, the notation $C_{b}^{j:k}(U_m)$ is used to denote the controlled application of unitary $U$ on qubit number $m$, conditioned on qubits $j$ through $k$ being in the $\ket{b}$ state.} 
\begin{align}
    \ket{i}_n \ket{0}_m \ket{0} \ket{0}
    &\xrightarrow{U_{\% p} \otimes \NOT \otimes I} \ket{i}_n \ket{(i \mod p)}_m \ket{1} \ket{0}\\
    &\xrightarrow{I^{\otimes n} \otimes C_{0^m}^{n:n+m-1}(\NOT_{n+m}) \otimes I}
    \ket{i}_n \ket{(i \mod p)}_m \ket{\textsc{bool}(i \mod p \ne 0)} \ket{0}\\
    &\xrightarrow{U_{\% p}^\dagger \otimes I \otimes I} 
    \ket{i}_n \ket{0}_m \ket{\textsc{bool}(i \mod p \ne 0)} \ket{0}\\
    &\xrightarrow{ C_{1^n}^{0:n-1}(\NOT_{n+m+1}) }
    \ket{i}_n \ket{0}_m \ket{\textsc{bool}(i \mod p \ne 0)} \ket{\textsc{bool}(i = 2^n -1)}
    \,,
\end{align}
where $\textsc{bool}(\bullet) \in \{0,1\}$ evaluates the logical boolean. 
It then follows from Eq.~\eqref{eq:uoi-diagonal-function-order-p} that 
\begin{align}
    (I^{\otimes n} \otimes \bra{0} \otimes \bra{0}) U_{\uoip_{00}} (I^{\otimes n} \otimes \ket{0} \otimes \ket{0}) &= \sum_{i \in [\numnp{}]} f(i) \ketbra{i}{i}
    = \uoip_{00}
    \,.
\end{align}
Block-encodings for the remaining units of interaction $\uoip_{jk}$ are given by 
\tikzexternalenable
\begin{align}
\begin{quantikz}
    \lstick[3]{$\ket{i}_n$} &  \gate[3]{S^{-k}}
    \gategroup[4,steps=3,style={dashed,rounded
    corners,fill=blue!15, inner
    xsep=2pt},background,label style={label
    position=below,anchor=north,yshift=-0.2cm}]{{$U_{\uoip_{jk}} \in (1,2)$-$\mathrm{BE}(\uoip_{jk})$}}
    & \gate[4]{U_{\uoip_{00}}} & \gate[3]{S^j} & \\[-0.5cm]
    \phantom{a}\vdots\wireoverride{n}\\
    &&&& \\
    \lstick{$\ket{0}_2$} &&&& \meter{}
\end{quantikz}
\,.
\label{eq:BE-uoi-pth-order-shifts}
\end{align}
\tikzexternaldisable
We summarize this result in the following proposition.

\begin{prop}[Block-encoding the unit of interaction for $\Qc_p(\Omega^e)$ elements]
\label{prop:BE-unit-of-interaction-p}
    Equations~\eqref{eq:BE-uoi-pth-order} and~\eqref{eq:BE-uoi-pth-order-shifts} give $(1,2)$-block-encodings for the units of interaction $\uoip_{jk}$ using $\Oc(n m)$ Toffoli or simpler gates.
\end{prop}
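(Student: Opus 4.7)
The plan is to verify the $\uoip_{00}$ block-encoding first, then bootstrap to $\uoip_{jk}$ via the shift relation in Eq.~\eqref{eq:Ajk-relation-to-A00}, and finally tally the gate cost by summing the Toffoli contributions of each sub-circuit.

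First, I would confirm that the target matrix has the diagonal structure displayed in Eq.~\eqref{eq:A00-p-general-form}, i.e.\ that its $i$-th diagonal entry is precisely the indicator $f(i)$ of Eq.~\eqref{eq:uoi-diagonal-function-order-p}. Granted this, it suffices to show that the circuit $U_{\uoip_{00}}$ in Eq.~\eqref{eq:BE-uoi-pth-order} realizes a two-ancilla projector that flags exactly those basis states $\ket{i}_n$ with $i\bmod p = 0$ and $i\ne 2^n-1$. I would do this by tracking the computational-basis evolution (essentially the same walk-through sketched after Eq.~\eqref{eq:BE-uoi-pth-order}): the first block $U_{\%p}$ writes $\ket{i\bmod p}_m$ into the size-$m$ register; the zero-controlled flip followed by $U_{\%p}^\dagger$ (acting as uncomputation) leaves the first register as $\ket{i}_n$, the size-$m$ register reset to $\ket{0}_m$, and the first flag ancilla storing $\textsc{bool}(i\bmod p=0)$; the final $C^{n}(\NOT)$ conditioned on $i=2^n-1$ stores the exclusion bit in the second ancilla. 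Projecting with $\bra{0}\bra{0}$ on the two flag ancillas extracts exactly $\sum_i f(i)\ketbra{i}{i}=\uoip_{00}$, proving the $(1,2)$-block-encoding claim.

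Next, for the remaining units $\uoip_{jk}$ with $(j,k)\ne(0,0)$, I would invoke the identity $\uoip_{jk} = S^{j}\,\uoip_{00}\,S^{-k}$ noted in Eq.~\eqref{eq:Ajk-relation-to-A00} (which holds because shifting the local-to-global map by $k$ columns before, and $j$ rows after, the diagonal projector $\uoip_{00}$ reproduces the off-diagonal pattern of Eq.~\eqref{eq:uoi-defn}). Conjugating the verified $U_{\uoip_{00}}$ with the $n$-qubit shift operators yields the circuit in Eq.~\eqref{eq:BE-uoi-pth-order-shifts}, and since neither shift touches the two flag ancillas, the same $(1,2)$-block-encoding guarantee is preserved.

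For the gate count, I would add the costs component by component. The quantum divider $U_{\div}$ from Ref.~\cite{orts2024quantum} costs $46mn - 46m^2 + 48m - 2n - 2 = \Oc(nm)$ Toffolis, and $U_{\%p}$ wraps it with a $U_p$ (consisting of $\Oc(m)$ NOTs) and a single $C^n(\NOT)$, which itself takes $\Oc(n)$ Toffolis using one borrowed ancilla~\cite{barenco1995elementary}. Hence each of $U_{\%p}$ and $U_{\%p}^\dagger$ contributes $\Oc(nm)$, the zero-controlled flip and the final $C^n(\NOT)$ each contribute $\Oc(n)$, and the shift operators $S^{\pm k},S^{j}$ in Eq.~\eqref{eq:BE-uoi-pth-order-shifts} contribute $\Oc(n)$ Toffolis each. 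Summing gives $\Oc(nm)$ overall.

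The only mild obstacle is the bookkeeping: I would have to make sure that the internal ancillas used inside $U_{\%p}$ (the $n+m$ qubits that are uncomputed) are indeed restored to $\ket{0}$ so that they need not be post-selected, and that the exclusion condition $i\ne 2^n-1$ is correctly enforced by the second ancilla rather than being tacitly absorbed into the first. Everything else is a direct assembly of already-established primitives.
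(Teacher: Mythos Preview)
Your proposal is correct and mirrors the paper's own argument: the proposition is essentially a summary of the construction laid out in the preceding paragraphs, and you have reproduced the same verification (diagonal-projector structure via Eq.~\eqref{eq:uoi-diagonal-function-order-p}, computational-basis walk-through, extension to $\uoip_{jk}$ via the shift relation, and the $\Oc(nm)$ gate tally from the divider plus $\Oc(n)$ primitives). One minor slip to tighten: the first flag ancilla in Eq.~\eqref{eq:BE-uoi-pth-order} actually ends in $\ket{\textsc{bool}(i\bmod p\neq 0)}$ (note the leading unconditional $\NOT$ on that wire), not $\textsc{bool}(i\bmod p=0)$, though your conclusion about the $\bra{0}\bra{0}$ projection is unaffected.
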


With the units of interaction in hand, the assembly of global arrays follows the LCU  
\begin{align}
    A &= \sum_{j,k \in [p+1]} A^e_{jk} \uoip_{jk} 
    \,.
\end{align}
We utilize the compression gadget~\cite{fang2023time} to share the ancilla qubit for the block-encoding of all the units of interactions in the following circuit:

\tikzexternalenable
\begin{align}
	\begin{quantikz}
		\lstick{$\ket{0}$}& \qwbundle{\log(\mathrm{nen^2})} &&& \gate{\textsc{prep}_{A^e_{jk}}}
		\gategroup[5,steps=4,style={dashed,rounded
			corners,fill=blue!20, inner
			xsep=2pt},background,label style={label
			position=below,anchor=north,yshift=-0.2cm}]{{$\mathrm{BE}(A)$}}
		&\qw \mathlarger{\mathlarger{\mathlarger{\mathlarger{\oslash}}}} \vqw{1} 
		&\qw \mathlarger{\mathlarger{\mathlarger{\mathlarger{\oslash}}}} \vqw{3}
		& \gate{\widetilde{\textsc{prep}}_{A^e_{jk}}} &\meter{} \rstick{$\bra{0}^{\otimes \log(\mathrm{nen^2})}$}\\
		\lstick{$\ket{i}$}& \qwbundle{\log(\mathrm{numnp})}  &&& & \gate[2]{U_{\uoip_{jk}}} & & & \\
		\lstick{$\ket{0}$}& \qwbundle{2} & \qw  &&&& \octrl{1} && \meter{} \rstick{$\bra{0}^{\otimes 2}$}\\
		\lstick{$\ket{1}$} && \qw &&&& \gate[2]{S^{-1}} & &  \meter{} \rstick{$\bra{0}$}\\
		\lstick{$\ket{0}$} & \qwbundle{\log(\mathrm{nen}^2)} &&&& & & & \meter{} \rstick{$\bra{0}^{\otimes\log(\nen{}^2)}$}
	\end{quantikz}
	\,.
	\label{eq:A-BE-circuit-pth-order}
\end{align}
\tikzexternaldisable

\begin{thm}[Query Complexity for the Assembly of $\Qc_p(\Omega^e)$ Elements in $1$D]
\label{thm:query-complexity-Assembly-1D-degree-p}
    Let $A = \sum_{i,j \in [p+1]} A^e_{ij} \uoi_{ij}$ be the assembled $1$D stiffness matrix of size $2^n \times 2^n$ for $p^\text{th}$ order Lagrange elements $\Qc_p(\Omega^e)$, where the elemental contributions $A^e$ are constant. Then one can implement a 
    \begin{align}
        \left( \sum_{j,k \in [p+1]} |A_{jk}^e|, 4m+3 \right)\mathrm{-BE}(A)
        \label{eq:eq:BE-cost-A-degree-p}
    \end{align}
    using $\Oc\left( n m (p+1)^2\right)$ Toffoli or simpler gates (here, $m = \lceil \log (p + 1) \rceil$).
\end{thm}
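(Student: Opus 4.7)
The plan is to realize $A$ as a linear combination of the $(p+1)^2$ unit-of-interaction block-encodings $U_{\uoip_{jk}}$ constructed in Proposition \ref{prop:BE-unit-of-interaction-p}, compiled through the compression-gadget circuit shown in Eq. \eqref{eq:A-BE-circuit-pth-order}. Since the coefficients $A^e_{jk}$ are classically precomputed and independent of $n$, the state preparation pair $(\textsc{prep}_{A^e},\widetilde{\textsc{prep}}_{A^e})$ acts on only $2m$ qubits and prepares amplitudes proportional to $\sqrt{A^e_{jk}}$ in accordance with Eqs. \eqref{eq:prep-R}--\eqref{eq:prep-L-dag}. Such an arbitrary state preparation on a fixed $(p+1)^2$-dimensional register requires $\Oc((p+1)^2)$ Clifford+$T$ gates, which will be subdominant to the SELECT cost below.

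For the ancilla accounting, the PREP register contributes $\log((p+1)^2)=2m$ qubits. Each individual $U_{\uoip_{jk}}$ uses $2$ block-encoding ancilla (cf.\ Eq. \eqref{eq:BE-uoi-pth-order}). Rather than allocating a fresh pair of ancilla per term in the SELECT, I would follow Eq. \eqref{eq:A-BE-circuit-pth-order} and invoke the compression gadget of Ref. \cite{fang2023time} to reuse those two ancilla across all $(p+1)^2$ branches, at the cost of introducing an additional $2m+1$ compression qubits (a $2m$-qubit counter that increments conditionally on the block-encoding ancilla being nonzero, together with one flag qubit). The total number of ancilla that must be post-selected in $\ket{0}$ is then $2m+2+(2m+1)=4m+3$, matching the claim.

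For the gate cost, the SELECT portion applies $U_{\uoip_{jk}}$ conditioned on the PREP register holding $\ket{jk}$. Using $U_{\uoip_{jk}}=(S^{j}\otimes I)\,U_{\uoip_{00}}\,(S^{-k}\otimes I)$ from Eq. \eqref{eq:BE-uoi-pth-order-shifts}, each branch costs $\Oc(nm)$ Toffoli or simpler gates, dominated by the $U_{\%p}$ modular-arithmetic block inside $U_{\uoip_{00}}$ (Proposition \ref{prop:BE-unit-of-interaction-p}); the shifts $S^{\pm}$ add only a subdominant $\Oc(n)$. Multiplexing over the $(p+1)^2$ values of $(j,k)$ and folding in the per-step increment cost of $\Oc(m)$ from the compression gadget yields the claimed total of $\Oc(nm(p+1)^2)$. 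The subnormalization $\sum_{jk}|A^e_{jk}|$ follows immediately from the standard LCU-of-block-encodings bound (Lemma \ref{lem:linear-combination-of-block-encodings}).

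The main bookkeeping hazard---rather than a deep obstacle---is verifying that the compression-gadget counter commutes correctly past the multiplexed SELECT, so that post-selecting on the counter having advanced exactly $(p+1)^2$ times at the end is equivalent to demanding that the two shared block-encoding ancilla return to $\ket{0}$ after every branch. Once this is checked by direct inspection of Eq. \eqref{eq:A-BE-circuit-pth-order}, the ancilla-reuse strategy yields exactly the advertised $(4m+3)$-ancilla block-encoding in Eq. \eqref{eq:eq:BE-cost-A-degree-p} without inflating the effective subnormalization.
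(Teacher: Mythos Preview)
Your proposal is correct and follows essentially the same route as the paper: realize $A$ as an LCU of the $(p+1)^2$ block-encodings $U_{\uoip_{jk}}$ from Proposition~\ref{prop:BE-unit-of-interaction-p}, compile via the compression-gadget circuit in Eq.~\eqref{eq:A-BE-circuit-pth-order}, and invoke Lemma~\ref{lem:linear-combination-of-block-encodings} for the subnormalization. Your explicit ancilla breakdown ($2m$ for PREP, $2$ shared block-encoding ancilla, $2m+1$ for the counter) is more detailed than the paper's terse ``using $4m+3$ ancilla qubits,'' but arrives at the same total, and your gate-count argument matches the paper's exactly.
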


\begin{proof}
    The proof is similar to the $p=1$ case in Theorem~\ref{thm:query-complexity-Assembly-1D-degree-1}. Here, the circuit in Eq.~\eqref{eq:A-BE-circuit-pth-order} implements $\LCU\left( (U_{\uoi_{jk}})_{j,k \in [p+1]}, (A_{jk}^e)_{j,k \in [p+1]} \right)$ using $4m + 3$ ancilla qubits, so that Eq.~\eqref{eq:eq:BE-cost-A-degree-p} follows from Lemma~\ref{lem:linear-combination-of-block-encodings}. 
    The state-preparation pair $(\PREP_{A^e},\widetilde{\PREP}_{A^e})$ for the classically precomputed $(p+1) \times (p+1)$ elemental matrix $A^e$ can be implemented using $\Oc((p+1)^2)$ Toffoli gates\footnote{Ref.~\cite{babbush2018encoding} demonstrates a technique for preparing a state with $L$ unique coefficients using $4L + \Oc(\log(1/\epsilon))$ Toffoli or simpler gates. Here, $\epsilon$ is the largest absolute error in the prepared amplitudes, which we take here to be a constant.}.
    Additionally, the $(1,2)$-block-encoding for the units of interaction $\uoip_{jk}$ uses $\Oc(n m)$ Toffoli or simpler gates. Since there are $(p+1)^2$ units of interaction, it follows that the LCU circuit has a $\Oc(n m (p+1)^2)$ Toffoli or simpler complexity.
\end{proof}

\begin{rem}
    Omitting the last multi-controlled $\NOT{}$ gate from Eq.~\eqref{eq:BE-uoi-pth-order} (i.e., omitting $C_{1^n}^{0:n-1}(\NOT_{n+m+1})$) gives the unit of interaction $\uoip_{00}$ for periodic boundary conditions.
\end{rem}

Since quantum circuits for classical arithmetic are still an active area of research~\cite{wang2024comprehensive}, future developments in quantum division circuits could reduce the query complexity for the $\uoip$ block-encoding. 
Henceforth, we will drop the superscript $(\bullet)^p$ on the unit of interaction, as the order of the element will be clear from the number of terms in the LCU.

\subsection{Block-Encoding of Global Arrays for Lagrange Tensor Product Elements}
\label{subsec:BE-global-arrays-tensor-product-elements}

\begin{figure}
    \centering
    \includegraphics[width=0.5\linewidth]{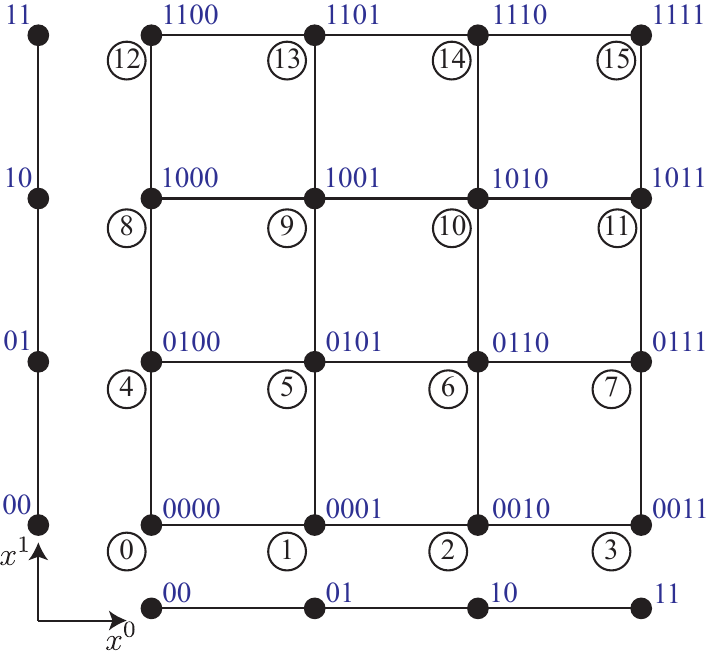}
    \caption{Global node numbering of a two-dimensional mesh induced by the tensor product of two one-dimensional meshes.}
    \label{fig:tensor-product-mesh-numbering}
\end{figure}

For dimensions $d > 1$, elements and meshes are formed as a Cartesian product (see Figure~\ref{fig:tensor-product-mesh-numbering}). The global finite element arrays also inherit a tensor product structure (see Section~\ref{subsec:tensor-products-of-Lagrange-elements}), which we will show can be used to obtain efficient block-encodings. As an example, let us consider a two-dimensional mesh of $\Qc_1(\Omega^{e,2})$ elements. From Eq.~\eqref{eq:K-quantum-summation-order}, the two-dimensional stiffness matrix is assembled as 
\begin{align}
    \Kc^{(2)} &= \sum_{j,k \in [2]^2} K_{jk}^{e,2} \uoi_{jk} \,.
    \label{eq:K-2D-1}
\end{align}
Following the node numbering convention induced by the tensor product (as shown in Figure~\ref{fig:tensor-product-mesh-numbering}), the unit of interaction has the tensor product structure 
\begin{align}
    \uoi_{jk} &= \uoi_{j_1j_0,k_1k_0} = \uoi_{j_1k_1} \otimes \uoi_{j_0k_0} \,,
    \label{eq:uoi-jk-tensor-product-structure}
\end{align}
where we write $j$ and $k$ as the $\nen{}$-ary strings $j = j_1j_0$ and $k = k_1k_0$, with $j_1,j_0,k_1,k_0 \in [\nen{}]$ (note that here, $\nen = p + 1 = 2$). Additionally, the elemental stiffness matrix $K^{e,2}$ inherits a tensor product structure from Eq.~\eqref{eq:Ke-tensor-product-structure}. Equation~\eqref{eq:K-2D-1} can then be factored into a tensor product of one-dimensional arrays as
\begin{align}
    \Kc^{(2)} &= \sum_{j_1,j_0,k_1,k_0 \in [\nen{}]} \left( K^e_{j_1 k_1} M^e_{j_0 k_0} + M^e_{j_1 k_1} K^e_{j_0 k_0} \right) \uoi_{j_1k_1} \otimes \uoi_{j_0k_0}\\
    \begin{split}
    &= \left(\sum_{j_1,k_1 \in [\nen{}]} K^e_{j_1 k_1} \uoi_{j_1k_1}  \right) \otimes \left(\sum_{j_0,k_0 \in [\nen{}]} M^e_{j_0 k_0} \uoi_{j_0k_0}  \right) 
    \\ & \qquad + 
    \left(\sum_{j_1,k_1 \in [\nen{}]} M^e_{j_1 k_1} \uoi_{j_1k_1}  \right) \otimes \left(\sum_{j_0,k_0 \in [\nen{}]} K^e_{j_0 k_0} \uoi_{j_0k_0}  \right)
    \end{split}\\
    &= K \otimes M + M \otimes K \,.
\end{align}
Similarly, one can show using Eq.~\eqref{eq:Me-tensor-product-structure} that the two-dimensional mass matrix $\Mc^{(2)}$ can be written as the tensor product
\begin{align}
    \Mc^{(2)} &= M \otimes M \,.
\end{align}
This procedure further generalizes to $d$-dimensional meshes of order $p$ elements (i.e., $\Qc_p(\Omega^{e,d})$).

\begin{thm}[Block-encoding of the $d$-dimensional Stiffness and Mass Matrices for $\Qc_p(\Omega^{e,d})$ elements]
\label{thm:query-complexity-Assembly-dD-degree-p}
    For a $d$-dimensional Cartesian mesh of $\Qc_p(\Omega^{e,d})$ elements, the stiffness matrix $\Kc^{(d)}$ and mass matrix $\Mc^{(d)}$ are given by 
    \begin{align}
        \Kc^{(d)} 
        &= \sum_{j,k \in [p+1]^d} K_{jk}^{e,d} \uoi_{jk} \label{eq:K-interaction-summation}\\
        &= K \otimes \underbrace{M \otimes \dots \otimes M}_{(d-1)\text{-times}}
        + M \otimes K \otimes  \underbrace{M \otimes \dots \otimes M}_{(d-2)\text{-times}} + \dots + M \otimes \dots \otimes M \otimes K
        \,,
        \label{eq:K-tensor-product-structure}\\
        \Mc^{(d)} 
        &= \sum_{j,k \in [p+1]^d} M_{jk}^{e,d} \uoi_{jk} \label{eq:M-interaction-summation}\\
        &= \underbrace{M \otimes \dots \otimes M}_{d\text{-times}}
        \,.
        \label{eq:M-tensor-product-structure}
    \end{align}
    Define $\alpha_K := \sum_{j,k \in [p+1]} |K_{jk}^e|$. Then there exists a $\left(d \alpha_K, \Oc(d m, \log d) \right)\mathrm{-BE}(\Kc^{(d)})$ that utilizes $\Oc(d^2 p^2 n m)$ Toffoli or simpler gates, and a $\left(1, \Oc(d m, \log d)\right)\mathrm{-BE}(\Mc^{(d)})$
    that utilizes $\Oc(d p^2 n m)$ Toffoli or simpler gates.
\end{thm}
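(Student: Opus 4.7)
The plan is to establish the tensor-product expressions for $\Kc^{(d)}$ and $\Mc^{(d)}$ in Eqs.~\eqref{eq:K-tensor-product-structure} and~\eqref{eq:M-tensor-product-structure}, and then derive the block-encoding costs by combining the 1D result (Theorem~\ref{thm:query-complexity-Assembly-1D-degree-p}) with the standard tensor-product and LCU block-encoding lemmas from Appendix~\ref{sec:block-encoding-framework}.

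For the first step I would generalize the two-dimensional computation shown just above the theorem to arbitrary $d$. The key observation is that under the tensor-product node numbering induced on the Cartesian mesh (Figure~\ref{fig:tensor-product-mesh-numbering}), a $d$-dimensional multi-index $\el = (\el_{d-1},\dots,\el_0)$ labeling the elements and a local index $j = (j_{d-1},\dots,j_0) \in [p+1]^d$ satisfy $\IX(j,\el) = \left( \IX(j_{d-1},\el_{d-1}),\dots,\IX(j_0,\el_0) \right)$ in the induced basis. Substituting this into Eq.~\eqref{eq:uoi-defn} and separating the sums over each dimension gives the factorization $\uoi_{jk} = \uoi_{j_{d-1}k_{d-1}} \otimes \dots \otimes \uoi_{j_0 k_0}$, which is the higher-dimensional analog of Eq.~\eqref{eq:uoi-jk-tensor-product-structure}. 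Combining this factorization with the elemental tensor structure in Eqs.~\eqref{eq:Ke-tensor-product-structure} and~\eqref{eq:Me-tensor-product-structure}, and distributing the multi-index sums term-by-term, yields Eqs.~\eqref{eq:K-tensor-product-structure} and~\eqref{eq:M-tensor-product-structure}.

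For the mass matrix block-encoding, I would apply the tensor product of block-encodings lemma to $d$ copies of the 1D mass block-encoding. Each 1D block-encoding is a $(1,4m+3)\mathrm{-BE}(M)$ using $\Oc(p^2 n m)$ Toffoli gates by Theorem~\ref{thm:query-complexity-Assembly-1D-degree-p} and Proposition~\ref{prop:mass-matrix-subnormalization}; taking the tensor product preserves the unit subnormalization and adds the ancilla counts and gate counts, producing a $(1,\Oc(dm))\mathrm{-BE}(\Mc^{(d)})$ at cost $\Oc(d p^2 n m)$. For the stiffness matrix, I would first build block-encodings of each of the $d$ summands in Eq.~\eqref{eq:K-tensor-product-structure}. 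Each such summand is a tensor product of one $K$ (contributing subnormalization $\alpha_K$) and $(d-1)$ copies of $M$ (each contributing subnormalization $1$), giving a $(\alpha_K, \Oc(dm))$-block-encoding at gate cost $\Oc(d p^2 n m)$. Then an LCU with uniform weights $(1,\dots,1)$ over these $d$ terms requires $\Oc(\log d)$ additional ancilla and $d$ controlled invocations of the tensor-product block-encodings, giving the stated $(d\alpha_K, \Oc(dm,\log d))\mathrm{-BE}(\Kc^{(d)})$ at $\Oc(d^2 p^2 n m)$ total cost.

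The main obstacle is the careful verification of the $d$-dimensional tensor factorization of the unit of interaction, which is what makes the whole construction efficient---without it, a naive assembly over $[\nen{}^d]^2$ local interactions would not collapse into a tensor product. This factorization relies crucially on the global node numbering being inherited from the Kronecker product structure of the Cartesian mesh; once established, the subnormalization bookkeeping is routine, with Proposition~\ref{prop:mass-matrix-subnormalization} providing the key fact that $\alpha_M = 1$ which prevents the overall subnormalization from growing exponentially in $d$.
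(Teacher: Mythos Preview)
Your proposal is correct and follows essentially the same approach as the paper: the paper's proof likewise points to the preceding paragraph for the tensor-product identities (via the factorization of $\uoi_{jk}$ into one-dimensional units of interaction together with Eqs.~\eqref{eq:Ke-tensor-product-structure}--\eqref{eq:Me-tensor-product-structure}), and then invokes Theorem~\ref{thm:query-complexity-Assembly-1D-degree-p}, Corollary~\ref{cor:product-of-block-encodings}, Lemma~\ref{lem:linear-combination-of-block-encodings}, and Proposition~\ref{prop:mass-matrix-subnormalization} to obtain the stated block-encoding costs. Your write-up is slightly more explicit about the $d$-dimensional factorization of $\IX$ and the per-term cost bookkeeping, but the underlying argument is the same.
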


\begin{proof}
    Equations~\eqref{eq:K-tensor-product-structure} and~\eqref{eq:M-tensor-product-structure} follow from the paragraph preceding the theorem. Using these equations and Theorem~\ref{thm:query-complexity-Assembly-1D-degree-p}, the block-encodings for $\Kc^{(d)}$ and $\Mc^{(d)}$ are then implemented as a tensor product of block-encodings (Corollary~\ref{cor:product-of-block-encodings}) and an LCU of block-encodings (Lemma~\ref{lem:linear-combination-of-block-encodings}). The subnormalization cost $\Mc^{(d)}$ is one by virtue of \cref{prop:mass-matrix-subnormalization}.
\end{proof}

The expanded summations in Eqs.~\eqref{eq:K-interaction-summation} and~\eqref{eq:M-interaction-summation} can be interpreted as a sum between all $(p + 1)^{2d}$ interactions of the tensor product element $\Qc_p(\Omega^{e,d})$. A sum of this size, however, scales exponentially with the dimension. Meanwhile, the factored summations (Eqs.~\eqref{eq:K-tensor-product-structure} and~\eqref{eq:M-tensor-product-structure}) contain at most $\Oc(d^2)$ sums of size $\nen{}^2$, which is an exponential improvement.

\section{Assembly via Numerical Integration on the Quantum Computer}
\label{sec:numerical-integration}

In this section, we utilize the tools introduced in Section~\ref{sec:assembly-of-global-arrays-linear} to assemble finite element arrays with non-constant elemental contributions (i.e., PDEs with variable coefficients)---a class of problems that correspond to PDEs of practical interest in engineering and the sciences. For example, in the modified Poisson equation (Eq.~\eqref{eq:modified-Poisson-eqn}), non-constant elemental contributions correspond to the coefficients $D$ and $k$ being non-constant (i.e., the PDE coefficients may be general scalar fields).
In this case, assembly is achieved by performing Gauss-Legendre integration on the quantum computer, which requires the implementation of an additional oracle that specifies the quadrature points in the mesh. We provide an explicit implementation of this oracle for Cartesian meshes, and in so doing, solve the \textit{General Assembly Problem}---namely, the task of assembling global arrays from element-level contributions (see Definition~\ref{def:general-assembly-problem})---for quadrilateral $\Qc_p(\Omega^e)$ elements.
In addition, numerical integration can be used to prepare the force vector $\ket{f}$ (which includes any Neumann boundary conditions) on the right-hand side of the linear system that results after assembly (see Eq.~\eqref{eq:FE-LSP-normalized}) as a quantum state.
We begin by presenting the framework in terms of oracular access to the assembly tools, which in principle shows that the method can be generalized to meshes of more general element types (such as triangular elements).

\subsection{The General Assembly Problem}

We will reserve the formal definition of the \textit{General Assembly Problem} for the end of this section, after we have discussed quadrature on the quantum computer. Briefly, however, the general assembly problem refers to the assembly of a global finite element array from local elemental contributions. In general, each elemental contribution requires the evaluation of an integral over the element. Evaluating these integrals analytically becomes infeasible when the integrand varies across each element (since we have an exponential number of elements over which we would like to assemble).
Moreover, obtaining an analytic expression is unnecessary since quadrature schemes such as Gauss-Legendre quadrature can integrate polynomials exactly, and we approximate each function in the integrand with a polynomial for the purposes of QSP.

For a concrete example of the general assembly problem, consider the modified Poisson equation (Eq.~\eqref{eq:modified-Poisson-eqn})---or any PDE containing a Laplacian term $D\nabla^2$---introduced in Section~\ref{subsec:weak-formulation-modified-Poisson}. 
Recall again (from Eq.~\eqref{eq:K-classical-summation-order}) that the stiffness matrix is assembled by summing together the contributions from each node of each element, i.e.,
\begin{align}
    K 
    = \sum_{J,K \in \numnp} K_{JK} \ketbra{J}{K}
    = \sum_{\el{} \in [\numel{}]} \sum_{j,k \in [\nen{}]} K_{jk}^\el{} \ketbra{\mathrm{IX}(j,\el{})}{\mathrm{IX}(k,\el{})}
    \,,
    \label{eq:stiffness-matrix-defn-2}
\end{align}
where $\IX$ is the connectivity matrix, which maps local degrees of freedom to global degrees of freedom (i.e., $J = \IX(j,\el)$ and $K = \IX(k,\el)$).
Classically, the sum over elements is typically the outer loop, while the sum over the local degrees of freedom is the inner loop. As in Section~\ref{sec:assembly-of-global-arrays-linear},  we will  reverse the order of  summation   
\begin{align}
    K = \sum_{j,k \in [\nen{}]} \sum_{\el{} \in [\numel{}]} K_{jk}^\el{} \ketbra{\mathrm{IX}(j,\el{})}{\mathrm{IX}(k,\el{})} \,.
    \label{eq:stiffness-matrix-defn-3}
\end{align}
Recall that the elemental stiffness matrix is given by (Eq.~\eqref{eq:Kij-e})
\begin{align}
    K_{jk}^\el{} &= \int_{\Omega^{\el{}}} D(x) \nabla N_j(x) \cdot \nabla N_k(x) \,d{\Omega^\el} \,,
\end{align}
where we allow the diffusivity $D \colon \Omega \to \bbR$ to vary spatially. 

More generally, a global finite element array $F$ is assembled from its elemental contributions using the connectivity matrix in the same manner as Eq.~\eqref{eq:stiffness-matrix-defn-2}:
\begin{align}
	F 
	&= \sum_{J,K \in \numnp} F_{JK} \ketbra{J}{K}
	= \sum_{\el{} \in [\numel{}]} \sum_{j,k \in [\nen{}]} F_{jk}^\el{} \ketbra{\mathrm{IX}(j,\el{})}{\mathrm{IX}(k,\el{})}
	\,,
	\label{eq:global-array-F}
\end{align}
where the elemental array $F^\el$ is given more generally by a bilinear form $b(\bullet,\bullet) \colon \Uc \times \Uc \to \bbR$ over the space of admissible solutions $\Uc$ that emerges from the weak formulation of the PDE (see Appendix~\ref{sec:analysis-of-FEM} for more details), i.e.,
\begin{align}
	F^\el_{jk} &= b[N_j^\el,N_k^\el]
	\,.
\end{align}
For example, for the stiffness matrix $K$, we have that $b[N_j^\el,N_k^\el] = \int_{\Omega} D(x) \nabla N_j^\el(x) \cdot \nabla N_k^\el(x) \,d{\Omega}$, whereas for the mass matrix $M$ we have $b[N_j^\el,N_k^\el] = \int_{\Omega^\el} k(x) N_j(x) N_k(x) \,d{\Omega^\el}$. In general, we consider bilinear forms that can be represented as 
\begin{align}
	b[N_j^\el,N_k^\el] &= \int_{\Omega^\el} f(x) B[ N_j^\el,N_k^\el](x) \,d{\Omega^\el}
	\,,
	\label{eq:bilinear-form-representation}
\end{align}
where $f \colon \Omega \to \bbR$ is a scalar function over the entire mesh, and $B \colon \Uc \times \Uc \to L^1(\Omega)$ is a pointwise bilinear map\footnote{Since the elemental basis functions have support over their element, the domain of integration in Eq.~\eqref{eq:bilinear-form-representation} can be reduced to just the element, i.e., $\int_\Omega (\cdots) \,d\Omega = \int_{\Omega^\el} (\cdots) \,d{\Omega^\el}$}. For example, for the bilinear form associated with the stiffness matrix $K$ we have that $B[N_j^\el,N_k^\el] = \nabla N_j^\el(x) \cdot \nabla N_k^\el(x)$, whereas for the mass matrix $M$ we have that $B[N_j^\el,N_k^\el] = N_j(x) N_k(x)$.
The integral in Eq.~\eqref{eq:bilinear-form-representation} is typically evaluated using numerical integration in the finite element method.

Numerical integration over an element $\Omega^{\el{}}$ involves choosing a set of $G$ quadrature points and weights $\{(x_k^{\el{}},w_k^{\el{}})\}_{k \in [G]}$ with $x_k \in \Omega^{\el{}}$ and $w_k \in \bbR$ such that for any function $f$ defined over $\Omega^{\el{}}$, 
\begin{align}
    \int_{\Omega^\el{}} f \,d\Omega^{\el{}} 
    \approx \sum_{\ell \in [G^{\el}]} w_\ell^{\el{}} f(x_\ell^{\el{}})
    \,.
\end{align}
For (global) integrals of a function $f$ over the entire domain, we have that 
\begin{align}
    \int_{\Omega} f \,d\Omega &= \sum_{\el{} \in [\numel{}]} \int_{\Omega^\el{}} f \,d\Omega^{\el{}}
    \approx \sum_{\el{} \in [\numel{}]} \sum_{\ell \in [G^{\el}]} w_\ell^{\el{}} f(x_\ell^{\el{}})
    \,.
\end{align}
Combining this with Eqs.~\eqref{eq:global-array-F} and~\eqref{eq:bilinear-form-representation}, we can now define the general assembly problem.
\begin{defn}[General Assembly Problem]
\label{def:general-assembly-problem}
	Let $\IX{}$ be the connectivity matrix for a mesh $\Omega$ with elemental basis functions $\ket{N^e}$.  
    Consider the global finite element array $\displaystyle F \in \bbR^{\textnormal{\numnp{}} \times \textnormal{\numnp{}}}$ assembled from the bilinear form $b \colon \Uc \times \Uc \to \bbR$, where $\Uc$ is the space of admissible solutions (i.e., $F$ is assembled from the elemental contributions $F_{jk}^\el = b[N_j^\el,N_k^\el]$). Suppose also that the bilinear form $b$ can be represented using the functions $f \colon \Omega \to \bbR$ and $B \colon \Uc \times \Uc \to L^1(\Omega)$ as in Eq.~\eqref{eq:bilinear-form-representation}. The general assembly problem is to evaluate the sum 
    \begin{align}
    	F = \sum_{\el{} \in [\numel{}]} 
    	\sum_{j,k \in [\nen{}]} \sum_{\ell \in [G^\el]} w_\ell^{\el{}} B\left[
    	N_j^\el{}, N_k^\el{}
    	\right](x_\ell^{\el{}}) f(x_\ell^{\el{}}) \ketbra{\mathrm{IX}(j,\el{})}{\mathrm{IX}(k,\el{})} \,.
    	\label{eq:general-assembly-problem}
    \end{align}
\end{defn}

For structured meshes with uniform element sizes, the basis function evaluations at the quadrature points $B\left[N_j^\el{}, N_k^\el{}\right](x_\ell^{\el{}})$ and the quadrature weights $w_\ell^{\el{}}$ are typically independent of the element $\el \in [\numel]$ (assuming the same quadrature scheme is used for each element). In this case, just as was done in Section~\ref{sec:assembly-of-global-arrays-linear}, we may rearrange Eq.~\eqref{eq:general-assembly-problem} so that the sum over all the elements is the inner-most summation, and factor out the constant function evaluations as
\begin{align}
    F = 
    \sum_{j,k \in [\nen{}]} 
    \sum_{\ell \in [G]} 
    \underbrace{
    w_\ell^{\el{}} B\left[
    N_j^\el{}, N_k^\el{}
    \right](x_\ell^{\el{}})
	}_{=: c_{\ell j k}}
    \left(
    \sum_{\el{} \in [\numel{}]} 
    f(x_\ell^{\el{}}) \ketbra{\mathrm{IX}(j,\el{})}{\mathrm{IX}(k,\el{})}
    \right)
    \,.
    \label{eq:general-assembly-problem-2}
\end{align}
The constants $c_{\ell j k}$ can be precomputed classically using a constant $G \cdot \nen^2$ number of basis function evaluations, and accessed through an oracle. 
We assume that $f(x)$ is well-approximated to some desired accuracy $\epsilon \ge 0$ by a polynomial $q(x)$ (determined by classical means) so that we can implement $f$ using the Multivariate Quantum Eigenvalue Transformation (MQET)~\cite{borns2023MQET} technique (summarized in Appendix~\ref{sec:MQET}). 

Now, for each quadrature point $x_\ell^\el \in \bbR^d$, write $x_\ell^\el = (x_\ell^{\el,0}, \cdots, x_\ell^{\el,d-1})$. Given access to a set of position operators for the quadrature points that satisfy 
\begin{align}
	X_\ell^{(i)} \ket{\el} &= x_\ell^{\ell,i} \ket{\el} \,,
\end{align}
we have that $\Xb_\ell := \{X_\ell^{(i)}\}_{\ell \in [d]}$ is a family of pairwise commuting, diagonalizable operators (see Section~\ref{sec:position-operators-and-Cartesian-coordinates}). Using MQET, this set of operators can be transformed to a block-encoding of the operator 
\begin{align}
	f(\Xb_\ell) &:= \sum_{\el \in \numel} f(x_\ell^\el) \ketbra{\el}{\el}
	\,.
\end{align}
Recalling the local-to-global node number indicator matrix $\uoi_j := \sum_{\el \in \numel} \ketbra{\IX(j,\el)}{\el}$ (See Definition~\ref{defn:local-to-global-node-number-indicator-matrix}), Eq.~\eqref{eq:general-assembly-problem-2} can be written as 
\begin{align}
	F &= \sum_{j,k \in [\nen{}]} 
	\uoi_j \left( 
	\sum_{\ell \in [G]} 
	c_{\ell j k} f(\Xb_\ell)
	\right) \left( \uoi_k \right)^\dagger
	\,.
	\label{eq:solved-general-assembly-problem}
\end{align}
Equation~\eqref{eq:solved-general-assembly-problem} solves the general assembly problem for structured meshes with uniform element sizes using oracular access to the quadrature position operators $\Xb_\ell$. In the following subsections, we provide an explicit construction for these operators for meshes of Lagrange tensor product elements using the Gauss-Legendre quadrature scheme.

\subsection{Gauss-Legendre Quadrature for One-Dimensional Elements}
\label{subsec:quadrature-1D}

One popular choice of quadrature scheme in the finite elements literature is Gauss-Legendre Quadrature of order $G$, which integrates degree $2G-1$ polynomials exactly~\cite{papadopoulos2015280a,chapelle2011finite}. 
Here, we also choose Gauss-Legendre quadrature because of its Cartesian product structure in higher dimensions---a property that will once again lead to the efficient evaluation of element integrals and global assembly on the quantum computer. Additionally, the ability to integrate polynomials exactly will also work well with QSP/MQET, which require polynomial approximations to functions. Since the order of these polynomials is known ahead of time, this property can be used to ensure that the numerical integration error is negligible.

Let $P_G(x)$ denote the degree $G-1$ Legendre polynomial, normalized such that $P_G(1) = 1$. The Gauss points $x_k$ are given by the roots of $P_G(x)$ (of which there must be exactly $G$), while the weights $w_k$ are given by~\cite{abramowitz1948handbook}
\begin{align}
    w_k &= \frac{2}{(1 - x_k)^2 [P_G'(x_k)]^2} \,.
\end{align}
Gauss-Legendre quadrature then approximates a one-dimensional integral over $[-1,1]$ as
\begin{align}
    \int_{[-1,1]} f(x) \,dx &\approx \sum_{k \in [G]} w_k f(x_k) \,.
    \label{eq:1D-quadrature-sum}
\end{align}
If instead we wish to integrate over an interval $\Omega^e = [a,b]$, we can simply pull the integral back to the interval $[-1,1]$ and find that 
\begin{align}
    \int_{[a,b]} f(x) \,dx &= \int_{[-1,1]} f\left( \frac{b-a}{2}t + \frac{a+b}{2} \right) \frac{b-a}{2} \,dt \,,\\
    &\approx \sum_{k \in [G]} \frac{b-a}{2}w_k \cdot f\left( \frac{b-a}{2}t + \frac{a+b}{2} \right) \,.
\end{align}
This means that the Gauss points and weights from the ``standard'' interval $[-1,1]$ map to the interval $[a,b]$ as 
\begin{align}
    \Tilde{x}_k &= \frac{b-a}{2}x_k + \frac{a+b}{2} \,,\\
    \Tilde{w}_k &= \frac{b-a}{2}w_k \,.
\end{align}
In particular, consider a mesh $\Omega = [0,1]$ of one-dimensional Lagrange elements $\Qc_p(\Omega^e)$ with $\numnp{} = 2^n$ and $\nen{} = 2^m$. Then the width of any given element $\el{} \in [\numel{}]$ is $h := 1/\numel{}$. Each element spans the interval $[h \cdot \el{}, h \cdot (\el{} + 1)]$ so that the elemental quadrature points and weights are 
\begin{align}
    x_k^{\el{}} &= \frac{h}{2}x_k + \frac{h(2\el{} + 1
    )}{2} = h\left(\frac{x_k + 1}{2} + \el{} \right) \,, \label{eq:elemental-Gauss-points-1D}\\
    w_k^{\el{}} &= \frac{h}{2}w_k \,,
    \label{eq:elemental-Gauss-weights-1D}
\end{align}
where $k \in [G^\el{}]$.  
For simplicity, assume that the same number of Gauss points $G$ are used for each element, and that $G = 2^g$. 
To efficiently perform Gauss-Legendre quadrature over all elements simultaneously, we will prepare the $k^\text{th}$ Gauss point and weight $(x_k^{\el{}},w_k^{\el{}})$ over all elements simultaneously, and then perform an LCU over all $G$ terms in the quadrature sum (Eq.~\eqref{eq:1D-quadrature-sum}).

Since the quadrature points and weights are tabulated classically, we can access a state preparation pair $(G_L,G_R)$ for the Gauss weights, as well as a state preparation oracle for the standard Gauss points satisfying
\begin{align}
    G_X \ket{\ell}_{g} &= x_\ell \ket{\ell}_{g} \,,
\end{align}
where $\ell \in [G]$.
Consider the general assembly problem (Eq.~\eqref{eq:general-assembly-problem}), represented here as 
\begin{align}
    F \approx 
    \sum_{j,k \in [\nen{}]} \sum_{\ell \in [G]} \sum_{\el{} \in [\numel{}]} w_\ell^{\el{}} B_{jk\ell}^\el f(x_\ell^{\el{}}) \ketbra{\mathrm{IX}(j,\el{})}{\mathrm{IX}(k,\el{})} \,,
    \label{eq:general-assembly-problem-3}
\end{align}
where the constants $B_{jk\ell} \in \bbR$ are given by 
\begin{align}
	B_{jk\ell}^\el := B\left[ N_j^\el{}, N_k^\el{} \right](x_\ell^{\el{}})
	\,.
\end{align}
Here, $B$ is the bilinear form that $F$ is assembled from, and $\ket{N^\el}$ are the elemental basis functions.
The mapping between the coordinates within any element $\el \in [\numel]$ from the interval $[h \cdot \el{}, h \cdot (\el{} + 1)]$ to the interval $[-1,1]$ (i.e., the ``pull-back'' map to the standard element) also takes the form given by Eq.~\eqref{eq:elemental-Gauss-points-1D} (a concept known as \textit{isoparametric mapping}~\cite{papadopoulos2015280a}). This implies that the basis function evaluations at each Gauss point are constant, so we will drop the superscript $(\bullet)^\el$, and simply write $B_{jk\ell} \equiv B_{jk\ell}^\el$.
We will build up $F$ starting with the position operator $X_{\log\numel{}} := \sum_{\el{} \in [\numel{}]} \el{} \ketbra{\el{}}{\el{}}$. Note that the end result must take the form of Eq.~\eqref{eq:solved-general-assembly-problem}, but we will start with Eq.~\eqref{eq:general-assembly-problem-3} in the construction.

\begin{rem}[$\sum_{\el{} \in [\numel{}]} \ketbra{\el{}}{\el{}}$ vs $\sum_{\np{} \in [\numnp{}]} \ketbra{\np{}}{\np{}}$]
    We are working in an $n$-qubit computational basis, with $\numnp{} = 2^n$. Recall that for $p$th order one-dimensional Lagrange elements, we have that $\numnp{} = (\numel{}) p + 1$ (see Eq.~\eqref{eq:numnp-relation}). This means that the identity on $n$ qubits satisfies $I_{\log\numnp{}} = \sum_{\np{} \in [\numnp{}]} \ketbra{\np{}}{\np{}}$, which is different from $\Pi_{\log\numel{}} := \sum_{\el{} \in [\numel{}]} \ketbra{\el{}}{\el{}}$. The latter is an orthogonal projector, and can be thought of as a $(1,\log p)$-block-encoding of the $I_{\log\numel{}}$ matrix. To obtain the position operator $X_{\log\numel{}}$, one can start with a block-encoding of the $n$-qubit position operator $X_n$ (Eq.~\eqref{eq:position-operator}), and then multiply by the orthogonal projector $\Pi_{\log\numel{}}$.
\end{rem}

By observing the form of Eq.~\eqref{eq:elemental-Gauss-points-1D}, we can calculate the position operator $X_\ell^G \in \bbR^{\numnp{} \times \numnp{}}$ that encodes the $\ell^\text{th}$ gauss point $x_\ell^{\el{}}$ of each element along its diagonal as 
\begin{align}
    X_\ell^G := \sum_{\el{} \in [\numel{}]} x_\ell^{\el{}} \ketbra{\el{}}{\el{}} 
    = h\left(\frac{x_\ell + 1}{2}\right) \Pi_{\log\numel{}} + h X_{\log\numel{}}
    = h \Pi_{\log\numel{}} \left[ \left(\frac{x_\ell + 1}{2}\right) I_n + X_{n} \right]
    \,.
    \label{eq:Gauss-point-position-operator}
\end{align}
Section~\ref{sec:position-operators-and-Cartesian-coordinates} gives an $(N-1, \log n)$-block-encoding of $X_n$. We can absorb the subnormalization factor $N-1$ into an LCU with the identity matrix $I_n$, and then multiply by a $(1,1)$-block-encoding $U_{\Pi_{\log\numel{}}}$ of the projector $\Pi_{\log\numel{}}$ using Lemma~\ref{lem:product-of-block-encodings} to obtain 
\begin{align}
    U_{\Pi_{\log\numel{}}} \cdot \LCU\left( (I_n, U_{X_{\log\numel{}}}), \begin{bmatrix}
        h(x_\ell + 1)/2 \\ h(N-1)
    \end{bmatrix} \right) \in (p + h|x_\ell + 1|/2, \log(n) + 2)\mathrm{-BE}(X_\ell^G) \,.
\end{align}
Observe that the subnormalization factor is upper-bounded by $p + h$ and lower-bounded by $p$, bounds which are independent of the particular Gauss point $\ell \in [G]$. Using a suitable polynomial approximation to $\dfrac{f}{\| f \|_{L^\infty(\Omega)}}$, we can use QSP/MQET to transform this block-encoding of Eq.~\eqref{eq:Gauss-point-position-operator} to a block-encoding of 
\begin{align}
    f(X_\ell^G) := \sum_{\el{} \in [\numel{}]} f(x_\ell^{\el{}}) \ketbra{\el{}}{\el{}}
    \,.
    \label{eq:Gauss-point-position-operator-2}
\end{align}
Using two applications of a block-encoding of the local-to-global node number indicator matrix $\uoi_{i} := \sum_{\el{} \in [\numel{}]} \ketbra{\mathrm{IX}(j,\el{})}{\el{}}$ (see Definition~\ref{defn:local-to-global-node-number-indicator-matrix}), we can map \cref{eq:Gauss-point-position-operator-2} to 
\begin{align}
    \uoi_{j} f(X_\ell^G) (\uoi_{k})^\dag := \sum_{\el{} \in [\numel{}]} f(x_\ell^{\el{}}) \ketbra{\mathrm{IX}(j,\el{})}{\mathrm{IX}(k,\el{})}
    \,.
    \label{eq:Gauss-point-position-operator-3}
\end{align}
For Lagrange elements (of any order $p$), a similar relation to Eq.~\eqref{eq:Ajk-relation-to-A00} holds in that 
\begin{align}
	\uoi_j &= S^j \uoi_0 
	\,.
	\label{eq:Aj-relation-to-A0}
\end{align}
Additionally, similar to Eq.~\eqref{eq:A00-p-general-form}, the first local-to-global map $\uoi_0$ has the block form 
\begin{align}
	\uoi_0 &= \begin{bmatrix}
		\ket{0}_n & \ket{p}_n & \cdots & \ket{(\numel - 1)p}_n
		& 0_{\numnp \times (\numnp - \numel)}
	\end{bmatrix}
	\,,
	\label{eq:A0-p-Lagrange-element-form}
\end{align}
which also requires a quantum circuit for classical arithmetic. Here, we let $U_{(\bullet \cdot p) \% N}$ be any unitary that performs the operation 
\begin{align}
	U_{(\bullet \cdot p) \% N} \ket{\el} &= \ket{(\el \cdot p) \mod N}
	\,,
\end{align}
where $\el \in [\numel]$. This quantum circuit can be constructed using a quantum multiplication circuit to form the product $e \cdot p$, followed by a quantum division circuit to store the result $(e \cdot p) \mod N$ at the cost of $\Oc(n m)$ Toffoli gates~\cite{wang2024comprehensive}.
Recall that Eq.~\eqref{eq:Aj-general-IX-BE} gives a $(1,1)\mathrm{-BE}(\uoi_j)$. Defining, 
\begin{align}
	\Oc_j &:= \left(\bra{j} \otimes I_n \right) \Oc_\IX \left(\ket{j} \otimes I_n\right)
	\,,
\end{align}
it follows that for $p^\text{th}$ order Lagrange elements we have $\Oc_j = S^j U_{(\bullet \cdot p) \% N} $. Equation~\eqref{eq:Aj-general-IX-BE} then block-encodes $\uoi_j$ using $\Oc(n m)$ Toffoli or simpler gates.

\begin{rem}
	To relate the matrix $\uoi_j$ to the unit of interaction $\uoi_{jk}$, recall from Eq.~\eqref{eq:Ajk-factoring} that $\uoi_{jk} = \uoi_j \left(\uoi_k\right)^\dagger$, so that by Eq.~\eqref{eq:Aj-relation-to-A0} we have $\uoi_{jk} = S^j \uoi_0 \left(\uoi_0\right)^\dagger S^{-k}$. Note, however, $\uoi_0\left(\uoi_0\right)^\dagger$ is not equal the identity matrix. Additionally, while $\uoi_0$ is not  Hermitian for $p > 1$, for $p = 1$ we have $\uoi_0 = \uoi_{00} = \left(\uoi_0\right)^\dagger$. Observe also that Eq.~\eqref{eq:A0-p-Lagrange-element-form} verifies the product $\uoi_0\left(\uoi_0\right)^\dagger = \sum_{e \in [\numel]} \ketbra{\el \cdot p}{\el \cdot p}$, which is precisely Eq.~\eqref{eq:A00-p-general-form}.
\end{rem}

To complete the assembly, we take an LCU over the Gauss points $\ell \in [G]$ and the local node numbers $j,k \in [\nen{}]$ with the LCU vector $\frac{h}{2}w_\ell B_{jk\ell}$, yielding the assembled global array
\begin{align}
    F 
    &\approx \sum_{j,k \in [\nen{}]} \uoi_{j} \left( \sum_{\ell \in [G]} \frac{h}{2}w_\ell B_{jk\ell} f(X_\ell^G) \right) (\uoi_{k})^\dag\\
    &= 
    \sum_{j,k \in [\nen{}]} \sum_{\ell \in [G]} 
    \frac{h}{2}w_\ell B_{jk\ell} \left(
    \sum_{\el{} \in [\numel{}]}
    f(x_\ell^{\el{}}) \ketbra{\mathrm{IX}(j,\el{})}{\mathrm{IX}(k,\el{})}
    \right) \,.
    \label{eq:general-assembly-problem-4}
\end{align}
We summarize and analyze the cost of implementing this LCU in the following theorem.

\begin{thm}[Block-encoding costs for assembly via numerical integration in one dimension]
\label{thm:assembly-via-numerical-integration-1D}
	Consider a one-dimensional mesh $\Omega = [0,1]$ of $p^\text{th}$ order Lagrange elements $\Qc_p(\Omega^e)$. Let $F$ be a global finite element array assembled from a bilinear form $b$ represented in the form given by Eq.~\eqref{eq:bilinear-form-representation} (i.e., $b(\bullet,\bullet) = \int_\Omega B[\bullet,\bullet](x) f(x) \,dx$). 
	Furthermore, suppose that the basis function evaluations $B\left[ N_j^\el{}, N_k^\el{} \right](x_\ell^{\el{}})$ are constant, and that $f$ is well-approximated by a polynomial of degree $D$.
	Then there exists an $(\alpha,a)$-block-encoding of $F$ that incurs a cost of 
    $\Oc\left( n \cdot (D G + p^2 m) \right)$ Toffoli or simpler gates, where 
	\begin{align}
		\alpha &\le \| f \|_{L^\infty(\Omega)} \cdot \left( \sum_{j,k \in [\nen]} \int_{\Omega^e} \left| B\left[ N_j^e,N_k^e \right](x) \right| \,dx \right)
		\,,
		\label{eq:numerical-integration-subnormalization}
	\end{align}
	and 
	\begin{align}
		a &= \lceil \log(n) \rceil + \lceil \log(G) \rceil + \lceil 2 \log(p+1) \rceil  + 3
		\,.
		\label{eq:numerical-integration-ancilla}
	\end{align}
\end{thm}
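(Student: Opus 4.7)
The plan is to assemble the block-encoding of $F$ by implementing Eq.~\eqref{eq:general-assembly-problem-4} stage by stage and tracking the subnormalization, ancilla count, and Toffoli complexity along the way. For each fixed Gauss index $\ell\in[G]$, I would first block-encode the Gauss-point position operator $X_\ell^G$ of Eq.~\eqref{eq:Gauss-point-position-operator}: start from the $(N-1,\lceil\log n\rceil)$-block-encoding of $X_n$ from \cref{lem:position-operator}, form a two-term LCU with $I_n$ using the $\ell$-dependent coefficients $\bigl(h(x_\ell+1)/2,\,h(N-1)\bigr)$, and left-multiply by a $(1,1)$-block-encoding of the projector $\Pi_{\log\numel{}}$ realized with a comparator circuit costing $\Oc(n)$ Toffoli gates. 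I would then apply QSP with a degree-$D$ polynomial approximation of $f/\|f\|_{L^\infty(\Omega)}$ to this block-encoding, producing a block-encoding of $f(X_\ell^G)/\|f\|_{L^\infty(\Omega)}$ using $\Oc(D)$ queries to the previous stage. Next, I would sandwich the output between the $(1,1)$-block-encodings of $\uoi_j$ and $\uoi_k^\dagger$ from Eq.~\eqref{eq:Aj-general-IX-BE}, each of cost $\Oc(nm)$ Toffoli gates. Finally, I would LCU over the $G(p+1)^2$ triples $(\ell,j,k)$ with coefficients $\tfrac{h}{2}w_\ell B_{jk\ell}$, using the classically tabulated $B_{jk\ell}$ and Gauss weights in the state-preparation pair.

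The subnormalization claim then follows because the LCU subnormalization equals $\|f\|_{L^\infty(\Omega)}\sum_{j,k,\ell}\tfrac{h}{2}|w_\ell B_{jk\ell}|$, and Gauss-Legendre exactness on polynomials identifies the inner $\ell$-sum with $\sum_{j,k}\int_{\Omega^e}|B[N_j^e,N_k^e](x)|\,dx$, yielding Eq.~\eqref{eq:numerical-integration-subnormalization}. The ancilla count of Eq.~\eqref{eq:numerical-integration-ancilla} comes from summing $\lceil\log n\rceil$ from the position operator, $\lceil\log G\rceil$ from the $\ell$ LCU register, $2\lceil\log(p+1)\rceil$ from the $(j,k)$ LCU registers, and three flag qubits for QSP, the $\Pi_{\log\numel{}}$ post-selection, and the overall LCU success. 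The Toffoli count is $\Oc(GDn)$ from the $G$ LCU-controlled QSP block-encodings plus $\Oc((p+1)^2 nm)$ from the $\uoi_j,\uoi_k^\dagger$ applications inside the $(j,k)$ LCU, giving $\tilde{\Oc}\bigl(n(DG + p^2m)\bigr)$ overall.

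The hardest part is the careful bookkeeping needed to keep the factor $\|f\|_{L^\infty(\Omega)}$ cleanly outside the QSP stage so that the final LCU subnormalization matches Eq.~\eqref{eq:numerical-integration-subnormalization} exactly, and to verify that the Gauss-Legendre rule integrates $|B[N_j^e,N_k^e](x)|$ with negligible error — both of which reduce to choosing $G\ge\lceil D/2\rceil + p$ and invoking exactness of order-$G$ Gauss-Legendre quadrature on polynomials of degree up to $2G-1$. At constant target precision, the compound errors from the polynomial approximation, QSP synthesis, and LCU composition are absorbed into the $\tilde{\Oc}$ polylog factors via standard arguments.
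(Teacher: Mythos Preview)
Your proposal is correct and follows essentially the same route as the paper: build $X_\ell^G$ via the two-term LCU of Lemma~\ref{lem:position-operator} with $I_n$ and multiplication by $\Pi_{\log\numel}$, apply degree-$D$ QSP to get $f(X_\ell^G)/\|f\|_{L^\infty}$, sandwich by the $(1,1)$-block-encodings of $\uoi_j,\uoi_k^\dagger$, and LCU over $(j,k,\ell)$ with coefficients $\tfrac{h}{2}w_\ell B_{jk\ell}$, recognizing the $\ell$-sum as a Gauss--Legendre quadrature of $|B[N_j^e,N_k^e]|$ over $\Omega^e$. The only substantive difference is that you flag the technicality that $|B[N_j^e,N_k^e](x)|$ need not be a polynomial and propose controlling this by choosing $G$ large enough, whereas the paper simply asserts the equality $\sum_\ell \tfrac{h}{2}w_\ell|B_{jk\ell}| = \int_{\Omega^e}|B[N_j^e,N_k^e](x)|\,dx$; your version is slightly more careful on this point, but the construction and cost accounting are otherwise identical.
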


\begin{proof}
	The steps preceding the theorem give the construction of the block-encoding of $F$. It remains only to cost this block-encoding out.
	
	Recall that Lemma~\ref{lem:position-operator} gives an $(N-1,\log(n))$-block-encoding for the position operator $X_n$.
	Equation~\eqref{eq:Gauss-point-position-operator} defines an LCU for the Gauss point position operator $X_\ell^G$. 
	In particular, the LCU vector for the sum $ h \left(\frac{x_\ell + 1}{2}\right) I_n + h (N-1) \left(\frac{1}{N-1} X_{n}\right)$ has one-norm 
	\begin{align}
		\left\| \begin{bmatrix} 
			h \left(\frac{x_\ell + 1}{2}\right)\\ 
			h (N-1)  
		\end{bmatrix} \right\|_1 &= h \left(\frac{x_\ell + 1}{2}\right) + p \le h + p 
		\,,
	\end{align}
	where we have used $h (N-1) = \frac{\numnp - 1}{\numel} = \nen - 1 = p$ (by Eq.~\eqref{eq:numnp-relation}). By Lemma~\ref{lem:linear-combination-of-block-encodings}, we then have a 
	\begin{align}
		\left(h \left(\frac{x_\ell + 1}{2}\right) + p, \log(n) + 2 \right)\mathrm{-BE}(X_\ell^G)
		\,.
	\end{align}
	Using QSP, this can be transformed to a block-encoding of $\dfrac{f(X_\ell^G)}{\| f \|_{L^\infty(\Omega)}}$ at the cost of one additional ancilla qubit and $\Oc(D)$ queries to the block-encoding of $X_\ell^G$. Additionally, since $X_n$ utilizes $O(n)$ Toffoli or simpler gates, as does this block-encoding of $f(X_\ell^G)$.
	Next, we form Eq.~\eqref{eq:general-assembly-problem-4} using the LCU vector $\frac{h}{2} w_\ell B_{jk\ell}$. The $\ell$-summation can be rewritten as an integral as 
	\begin{align}
		\sum_{\ell \in [G]} \left| \frac{h}{2} w_\ell B_{jk\ell} \right|
		&= \sum_{\ell \in [G]} \frac{h}{2} w_\ell \left| B_{jk\ell} \right|
		= \int_{\Omega^e} \left| B(N_j^\el,N_k^\el)(x)  \right| \,dx
		\,.
	\end{align}
	Using Lemma~\ref{lem:linear-combination-of-block-encodings} again for the LCU of size $G$, and for the LCU of size $\nen^2 = (p+1)^2$ with the $(1,1)$-block-encodings of $\uoi_j$ that use $\Oc(n m)$ Toffoli gates each, we obtain a block-encoding of $F$ with subnormalization given by \cref{eq:numerical-integration-subnormalization}, utilizing the number of ancilla given by \cref{eq:numerical-integration-ancilla}.
\end{proof}

\subsection{Gauss-Legendre Quadrature for Tensor Product Elements}
\label{subsec:quadrature-dD}

Quadrature for elements in higher dimensions that are built out of a tensor product structure generalizes naturally from quadrature in one dimension, as the set of Gauss points and weights also inherit a tensor product structure. We consider meshes of ~$\Qc_p(\Omega^{e,d})$ Lagrange tensor product elements. Let~$\{(x_\ell,w_\ell)\}_{\ell \in [G]}$ be the set of Gauss points and weights in one-dimension. In $d$ dimensions, we denote the Gauss points and weights induced by the set of one-dimensional quadrature points as~$\{(\xb_\ellb,w_\ellb)\}_{\ellb \in [G]^d}$ where $\ellb = (\ell_0,\dots,\ell_{d-1})$ is a multi-index with $\ell_i \in [G]$ for all $i \in [d]$. The points and weights are given by 
\begin{align}
    \{\xb_\ellb\}_{\ellb \in [G]^d} &= 
    \{x_\ell\}_{\ell_0 \in [G]} 
    \times \cdots \times 
    \{x_\ell\}_{\ell_{d-1} \in [G]}
    \,,\\
    \{w_\ellb\}_{\ellb \in [G]^d} &= 
    \{w_{\ell_0} \times \cdots \times w_{\ell_{d-1}} \}_{\ellb \in [G]^d}
    \,,
\end{align}
where $\xb_\ell \in \bbR^d$.
For each element, there are a total of $G^d$ quadrature points. For each $\ellb \in [G]^d$, this translates to a family of $d$ pairwise commuting Gauss point position operators $\Xb^G_\ellb := \{ X^{G,(i)}_{\ell_i} \}_{i \in [d]}$ defined by 
\begin{align}
    X^{G,(i)}_{\ell_i} &:= \underbrace{ \Pi_{\log\numel{}} \otimes\cdots\otimes \Pi_{\log\numel{}} }_{(d-i-1)\text{-times}}
    \otimes X_{\ell_i}^G \otimes 
    \underbrace{ \Pi_{\log\numel{}} \otimes \cdots \otimes \Pi_{\log\numel{}} }_{i\text{-times}}
    \,,\\
    &= h \Pi_{\log\numel{}}^{\otimes d} \left[ 
    \underbrace{ I_n \otimes\cdots\otimes I_n }_{(d-i-1)\text{-times}}
    \otimes \left(\frac{x_\ell + 1}{2} \cdot I_n + X_{n}\right) \otimes 
    \underbrace{ I_n \otimes \cdots \otimes I_n }_{i\text{-times}}
    \right] & \text{(by Eq.~\eqref{eq:Gauss-point-position-operator})}
    \,.
\end{align}
We write the element number $\el{}$ in the $d$-dimensional mesh $\Omega^{d}$ as a $\numel{}$-ary string of length $d$, $\el{} = \el{}_{d-1} \cdots \el{}_0$ (see Figure~\ref{fig:tensor-product-mesh-numbering} for the numbering convention of the tensor product mesh induced by the numbering of a one-dimensional mesh). Then the Gauss point position operators satisfy 
\begin{align}
    X^{G,(i)}_{\ell_i} \ket{\el{}} &= x_{\ell_i}^{\el{}_i} \ket{\el{}} \,.
\end{align}

Using MQET~\cite{borns2023MQET}, we can evaluate $f(\Xb^G_\ellb) = \sum_{\el{} \in [\numel{}]} f(\xb_{\ellb}^{\el{}}) \ketbra{\el{}}{\el{}}$, and then proceed in the same manner as Section~\ref{subsec:quadrature-1D} to assemble the global array $F$. Namely, we multiply by two applications of the local-to-global node number indicator matrix to form the product $\uoi_{j} f(\Xb_{\ellb}^G) (\uoi_{k})^\dag$. We then take the LCU over all Gauss points $\ellb \in [G]^d$ with the product of the Gauss weights $w_\ellb$ and the basis function evaluations $B_{jk\ellb}$ as the LCU vector, and another LCU over all local element node numbers $j,k \in [\nen{}]$ with the equal superposition state as the LCU vector to form 
\begin{align}
    F \approx \sum_{j,k \in [\nen{}]} \sum_{\ellb \in [G]^d} 
    w_\ellb^{\el{}} B_{jk\ellb}
    \left(
    \sum_{\el{} \in [\numel{}]}  f(\xb_{\ellb}^{\el{}}) \ketbra{\mathrm{IX}(j,\el{})}{\mathrm{IX}(k,\el{})}
    \right)
    \,.
\end{align}
The cost for the $d$-dimensional analog of Theorem~\ref{thm:assembly-via-numerical-integration-1D} is summarized below. 
We also give a detailed example of this assembly procedure in Example~\ref{ex:numerical-integration-of-mass-matrix}.

\begin{thm}[Block-encoding costs for Assembly via Numerical Integration in $d$ dimensions]
\label{thm:assembly-via-numerical-integration-d-dims}
	Consider a $d$-dimensional mesh $\Omega^d = [0,1]^d$ of $p^\text{th}$ order Lagrange elements $\Qc_p(\Omega^{e,d})$. Let $F$ be a global finite element array assembled from a bilinear form $b$ represented in the form given by Eq.~\eqref{eq:bilinear-form-representation} (i.e., $b(\bullet,\bullet) = \int_{\Omega^d} B[\bullet,\bullet](\xb) f(\xb) \,d\xb$). 
	Furthermore, suppose that the basis function evaluations $B\left[ N_j^\el{}, N_k^\el{} \right](\xb_\ellb^{\el{}})$ are constant, and that $f$ is well-approximated by a multivariate polynomial of degree less than $D$ in each variable.
	Then there exists an $(\alpha,a)$-block-encoding of $F$ that incurs a cost of 
    $ \Oc\left( dn \cdot (D^d G^d + p^{2d} m) \right) $
    Toffoli or simpler gates with 
	\begin{align}
		\alpha &\le (D + 2)^{d-1} \cdot \| f \|_{L^\infty(\Omega^d)} \cdot \left( \sum_{j,k \in [p+1]^d} \int_{\Omega^{e,d}} \left| B\left[ N_j^e,N_k^e \right](\xb) \right| \,d\xb \right)
		\,,
	\end{align}
	and 
	\begin{align}
		a &= d(\lceil \log(n) \rceil + \lceil \log(G) \rceil + \lceil 2 \log(p+1) \rceil + 3)
		\,.
	\end{align}
\end{thm}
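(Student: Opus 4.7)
The strategy is to mirror the proof of \cref{thm:assembly-via-numerical-integration-1D} dimension by dimension, replacing QSP on $X_\ell^G$ with MQET (reviewed in \cref{sec:MQET}) on the commuting family $\Xb_\ellb^G = \{X^{G,(i)}_{\ell_i}\}_{i\in[d]}$, and exploiting the tensor product structure of the Gauss points, weights, basis functions, and local-to-global indicator matrices $\uoi_j$ throughout. First, I build a block-encoding of each $X^{G,(i)}_{\ell_i}$ by taking the $(h(x_{\ell_i}+1)/2+p,\log(n)+2)$-block-encoding of the one-dimensional Gauss-point position operator from the proof of \cref{thm:assembly-via-numerical-integration-1D}, and tensoring it with $d-1$ copies of a $(1,1)$-block-encoding of the element-number projector $\Pi_{\log\numel}$ on the remaining coordinate registers. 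By \cref{cor:product-of-block-encodings}, this yields a commuting family of block-encodings of $\Xb_\ellb^G$ whose subnormalizations each lie in $[p,p+h]$, using $O(dn)$ Toffoli or simpler gates to build and $O(d(\log n + 1))$ ancilla.

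Next, I apply MQET to transform this family into a block-encoding of $f(\Xb_\ellb^G)/\|f\|_{L^\infty(\Omega^d)}$. Since $f$ is approximated by a polynomial of degree less than $D$ in each variable, MQET queries the commuting block-encodings a total of $O(D^d)$ times, introducing an additional ancilla and producing (by the standard MQET subnormalization bookkeeping) a multiplicative overhead of $(D+2)^{d-1}$ on the subnormalization relative to a pure eigenvalue-transformed block-encoding. I then conjugate with block-encodings of $\uoi_j$ and $\uoi_k^\dag$ applied on the appropriate coordinate registers; because the tensor-product node numbering induces $\uoi_j = \bigotimes_{i=0}^{d-1} \uoi_{j_i}$, these can be implemented as a product of $d$ one-dimensional $(1,1)$-block-encodings at cost $O(dnm)$ Toffoli or simpler gates using the circuit in \cref{eq:Aj-general-IX-BE} (via $U_{(\bullet\cdot p)\%N}$ and a shift).

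Finally, I perform two outer LCUs: one over $\ellb\in[G]^d$ with coefficients $w_\ellb B_{jk\ellb}$, and one over $(j,k)\in[p+1]^{2d}$. The inner LCU's $\ell_1$ norm collapses via the Gauss-Legendre exactness to $\int_{\Omega^{e,d}} |B[N_j^e,N_k^e](\xb)|\,d\xb$ (up to the common Jacobian factor absorbed into the mesh size), while the outer LCU simply sums these over local nodes, producing the claimed subnormalization
\begin{equation*}
\alpha \le (D+2)^{d-1}\,\|f\|_{L^\infty(\Omega^d)}\!\cdot\!\sum_{j,k\in[p+1]^d}\!\int_{\Omega^{e,d}}\!\!\bigl|B[N_j^e,N_k^e](\xb)\bigr|\,d\xb.
\end{equation*}
The total ancilla count comes from the $d$ copies of the one-dimensional Gauss-point position encoding ($d(\lceil\log n\rceil+2)$), the $d$ LCU registers for the Gauss-point index ($d\lceil\log G\rceil$), and the $d$ pairs of local-node registers ($2d\lceil\log(p+1)\rceil$), matching \cref{eq:numerical-integration-ancilla} generalized to $d$ dimensions. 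The gate count is dominated by the MQET queries ($O(D^d G^d)$ applications of an $O(dn)$-cost block-encoding) plus the $(p+1)^{2d}$ applications of $\uoi_j\otimes\uoi_k^\dag$ at $O(dnm)$ each, giving the advertised $\tilde O(dn(D^d G^d + p^{2d}m))$.

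The main obstacle I anticipate is correctly tracking the MQET subnormalization to pin down the $(D+2)^{d-1}$ factor: naively iterating the one-dimensional QSP bound in each of $d$ directions suggests an overhead that grows like the product of per-dimension subnormalizations, and obtaining the sharper bound requires invoking the MQET framework of \cite{borns2023MQET} with the shared commuting structure of the $X^{G,(i)}_{\ell_i}$ so that only $d-1$ of the $d$ coordinate directions incur the polynomial-degree overhead. Verifying that the bookkeeping reduces cleanly to \cref{eq:solved-general-assembly-problem} with $c_{\ell jk}$ constant (which relies on isoparametric mapping to the standard element, independent of $\el$) is the second, more routine, place where care is needed.
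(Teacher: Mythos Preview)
Your proposal is correct and follows essentially the same approach as the paper's own proof, which simply invokes the MQET result (Eq.~\eqref{eq:MQET-LCU}) on the commuting family $\Xb_\ellb^G$ together with the tensor product structure of the indicator matrices (Eq.~\eqref{eq:uoi-jk-tensor-product-structure}) and the one-dimensional costing in Theorem~\ref{thm:assembly-via-numerical-integration-1D}. Your detailed gate and ancilla bookkeeping, including the origin of the $(D+2)^{d-1}$ factor from the $[D]^{d-1}$-indexed MQET LCU, matches the paper's construction exactly.
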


\begin{proof}
	Applying the MQET result in Eq.~\eqref{eq:MQET-LCU} to the construction preceding the theorem and utilizing the tensor product structure of Lagrange elements (see Eq.~\eqref{eq:uoi-jk-tensor-product-structure}) in the proof of Theorem~\ref{thm:assembly-via-numerical-integration-1D} gives the desired result.
\end{proof}

\begin{examplebox}[]{On-the-fly Assembly of the Mass Matrix for Linear Problems}
\label{ex:numerical-integration-of-mass-matrix}

Consider a $d$-dimensional mesh $\Omega^d$ consisting of uniformly spaced $p^\text{th}$ order $d$-dimensional Lagrange tensor product element $\Qc_p([0,h]^d)$ (with volume $h^d$), and the mass matrix $M$ arising from Eq.~\eqref{eq:M-global-integral-defn}. We have for each element $e \in [\numel{}]$ that 
\begin{align}
    M_{ij}^e &= \int_{\Omega^{e,d}} N_i(\xb) \rho(\xb) N_j(\xb) \,d\xb \,,
    \label{eq:Mije-integral}
\end{align}
where $\xb \in \bbR^d$ and $i,j \in [\nen{}]$. The local indices are related to the global indices by $I := \IX{}(i,e)$ and $J := \IX{}(j,e)$. The assembly of a (global) finite element array from elemental contributions can be concisely represented using an assembly operator~\cite{papadopoulos2015280a} as 
\begin{align}
    [M_{IJ}] &= \bA_e [M^e_{ij}] \,,
\end{align}
where $M \equiv [M_{IJ}]$ with the square brackets here denoting a matrix (instead of a set that an index is running over). Implementing this assembly operator (which loops over elements) is efficient for classical computers, but our quantum algorithm instead loops through the local node numbers (see Eq.~\eqref{eq:general-assembly-problem-2}).\\

Before assembling, let's pull the integral in Eq.~\eqref{eq:Mije-integral} back to a standard $\Qc_p(\Omega_0^{e,d})$ element with domain $\Omega_0^{e,d} = [-1,1]^d$. Equation~\eqref{eq:Mije-integral} becomes 
\begin{align}
    M_{ij}^e &= \int_{\Omega^{e,d}_0} N_i(\xib) \rho(\xb(\xib)) N_j(\xib) J^{e,d} \,d\xib\\
    &\approx \sum_{{\ellb} \in [G]^d} w_{\ellb} N_i(\xib_{\ellb}) \rho(\xb(\xib_{\ellb})) N_j(\xib_{\ellb}) J^{e,d}
    \,,
\end{align}
where $(w_{\ellb},\xib_{\ellb})_{{\ellb} \in [G]^d}$ are the Gauss weights and points in the standard $[0,1]^d$ domain, and $J^{e,d}$ is the Jacobian of the pull-back. Here, we have the analytical result for the Jacobian $J^{e,d} = (h/2)^d$. Observe that the evaluations of the basis functions at the Gauss points $N_i(\xib_{\ellb})$ are actually constant throughout the assembly process for each element, and so they may be precomputed and accessed by an oracle. For $\ell \in [G]^d$, let 
\begin{align}
    c_{ij\ellb} &:= (h/2)^d \cdot w_{\ellb} N_i(\xib_{\ellb}) N_j(\xib_{\ellb}) \,,
\end{align}
so that the elemental mass matrix can be computed by $M_{ij}^e = \sum_{\ell \in [G]^d} c_{ij\ellb} \rho(\xb(\xib_{\ellb}))$. There is a constant $G^d \cdot (p+1)^2$ number of these coefficients to compute, so we assume access to a state preparation pair $(P_{L},R_{R})$ for the vector of coefficients $c_{ij\ellb}$.\\

Finally, the global mass matrix can be assembled using the methods of Section~\ref{subsec:quadrature-dD}. That is, beginning with the family of pairwise commuting Gauss point position operators $\Xb^G_\ellb := \{ X^{G,(i)}_{\ell_i} \}_{i \in [d]}$ and a polynomial approximation $q(\xb)$ to $\rho(\xb)$, we can use MQET (reviewed in Section~\ref{sec:MQET}) to prepare 
\begin{align}
    \rho(\Xb^G_\ellb) 
    \approx q(\Xb^G_\ellb) 
    &= \sum_{\el{} \in [\numel{}]} q(\xb_{\ellb}^{\el{}}) \ketbra{\el{}}{\el{}}\\
    \Rightarrow \bA_e [M^e_{ij}] 
    &= \sum_{j,k \in [\nen{}]} \sum_{\ellb \in [G]^d} c_{ij\ellb} \underbrace{ \left( \sum_{\el{} \in [\numel{}]} q(\xb_{\ellb}^{\el{}}) \ketbra{\mathrm{IX}(j,\el{})}{\mathrm{IX}(k,\el{})} \right)}_{=: \Ab_{jk\ellb}}
    \,.
    \label{eq:numerically-integrated-mass-matrix}
\end{align}
Observe that when the density is constant at $\rho \equiv 1$, the matrix $\Ab_{jk\ellb}$ in Eq.~\eqref{eq:numerically-integrated-mass-matrix} becomes the unit of interaction $\uoi_{jk}$, recovering the results of Sections~\ref{subsec:BE-global-arrays-Lagrange-elements} and~\ref{subsec:BE-global-arrays-tensor-product-elements}.

\end{examplebox}

\section{Boundary Conditions and Constraints}
\label{sec:constraints}

In this section, we incorporate boundary conditions and other constraints into the assembled set of equations from Sections~\ref{sec:assembly-of-global-arrays-linear} and~\ref{sec:numerical-integration}. 
We view Dirichlet boundary conditions as constraints on the system, and enforce them using the method of Lagrange multipliers. 
We begin by modifying the weak form to accommodate these constraints, and provide a block-encoding for the extended linear system that incorporates these constraints in terms of an oracle that flags the constrained degrees of freedom. We then demonstrate an explicit block-encoding of the extended system for Lagrange tensor product elements. We conclude by demonstrating how numerical integration can be used to assemble the modified force vector that enforces any Neumann boundary conditions on the system.

\subsection{Assembling the Constrained System using Lagrange Multipliers}
\label{subsec:assembling-Lagrange-multipliers}

Our primary reason for choosing the method of Lagrange multipliers to enforce boundary conditions is that it enforces boundary conditions without requiring the modification of any of the unitaries or block-encodings that emerge from the assembly procedure. 
Additionally, this method introduces the possibility of implementing more general algebraic constraints on the system and demonstrates how to extend the algorithm to problems with multiple degrees of freedom at each node\footnote{In essence, the presence of a Lagrange multiplier already means that we're solving a problem with two degrees of freedom at each node; the same circuits that assemble block-matrices may also be used to extend the problem to multiple degrees of freedom at each node.}.

Note that on a classical computer, implementing boundary conditions using Lagrange multipliers unnecessarily increases the cost of the algorithm by enlarging the size of the system of equations. Instead, one typically modifies the finite element array directly to enforce Dirichlet boundary conditions~\cite{boffi2008mixed}. 
Quantumly, however, we can double the size of the linear system being solved by adding just one additional qubit, and thus do not suffer from this issue. 
In Appendix~\ref{sec:projector-bcs}, we show that the classical method of directly modifying the assembled arrays may be analogously implemented on a quantum computer using projectors.
This method, however, does not support more general constraints or multiple degrees of freedom at each node, and so we will restrict our attention to the method of Lagrange multipliers in this section.

As in Sections~\ref{sec:assembly-of-global-arrays-linear} and~\ref{sec:numerical-integration}, consider a $d$-dimensional mesh $\Omega$ with $\numnp{} = 2^n$ global degrees of freedom (DOFs). A subset of these degrees of freedom is flagged as being on the boundary $\partial\Omega$ (or more generally, as being subject to a constraint). We label these degrees of freedom ``fixed'' (or more generally, ``constrained''). We denote the total number of degrees of freedom on the boundary by $\numbp{}$. The degrees of freedom in the interior of the mesh, $\mathrm{int}(\Omega)$, are referred to as ``free''. 
We introduce a projection operator (or indicator matrix) $\bbP_\mathrm{int}$ as the matrix that flags the degrees of freedom in the interior, i.e., 
\begin{align}
    \bbP_\mathrm{int} &:= \sum_{\np{} \in \mathrm{int}(\Omega)} \ketbra{\np{}}{\np{}}
    \,,
    \label{eq:interior-dof-projector}
\end{align}
where the notation $\np{}  \in \mathrm{int}(\Omega)$ is used to indicate degrees of freedom that lie in the interior (of which there are exactly $\numnp - \numbp$).
In Appendix~\ref{sec:projector-bcs}, we show that this projector may also be used to incorporate Dirichlet boundary conditions directly into the assembled system of equations.
The projector onto the boundary degrees of freedom is given by the orthogonal complement 
\begin{align}
    \bbP_\mathrm{bd} := I_n - \bbP_\mathrm{int}
    \,,
\end{align}
where $I_n$ is the identity operator on $n$ qubits (i.e., $I_n \equiv I^{\otimes n}$).
The splitting $\Omega \simeq \mathrm{int}(\Omega) \oplus \partial\Omega$ can be (trivially) represented using the short exact sequence 
\tikzset{
  mystyle/.style={
    nodes={inner sep=6pt},
    row sep=1.8em, 
    column sep=2.4em
    }
  } %
\begin{equation}
\begin{tikzcd}[nodes={inner sep=2pt}, mystyle]
    \setwiretype{n} 0 \arrow[r] 
    & \operatorname{int}(\Omega) \arrow[r,"\iota_{\mathrm{int}}", "\bbP_\mathrm{int}"', hookrightarrow] 
    & \Omega \arrow[r,"\partial", "I_n - \bbP_\mathrm{int}"', twoheadrightarrow] 
    & \partial\Omega \arrow[r] 
    & 0
\end{tikzcd}
    \,,
\end{equation}
where $\iota_{\mathrm{int}}$ is the inclusion map for the interior, and $\partial$ is the boundary map. By taking the Hermitian conjugate, we can also reverse the arrows in the diagram above to obtain the short exact sequence 
\begin{equation}
\begin{tikzcd}[nodes={inner sep=2pt}, mystyle]
    \setwiretype{n} 0 \arrow[r] 
    & \partial\Omega \arrow[r,"\iota_{\mathrm{bd}}", "I_n - \bbP_\mathrm{int}"', hookrightarrow] 
    & \Omega \arrow[r,"\pi", "\bbP_\mathrm{int}"', twoheadrightarrow] 
    & \operatorname{int}(\Omega) \arrow[r] 
    & 0
\end{tikzcd}
    \,,
\end{equation}
where $\iota_{\mathrm{bd}}$ is the inclusion map for the boundary, and $\pi$ is a projection map.

The projection operator $\bbP_\mathrm{int}$ can be block-encoded using an oracle $U_B$ that flags the global degrees of freedom on the boundary. That is, for all $\np{} \in [\numnp{}]$, the boundary oracle $U_B$ satisfies
\begin{align}
    U_B \ket{\np{}}\ket{0} &= 
    \begin{cases}
        U_B \ket{\np{}}\ket{0} & \text{if } \np{} \in \operatorname{int}(\Omega) \,,\\
        U_B \ket{\np{}}\ket{1} & \text{if } \np{} \in \partial\Omega \,.
    \end{cases}
    \label{eq:boundary-oracle}
\end{align}
The oracle $U_B$ itself is a $(1,1)$-block-encoding of $\bbP_\mathrm{int}$. This also yields a $(1,1)$-block-encoding of $\bbP_\mathrm{bd}$ given by $U_B (I_n \otimes \NOT)$. More generally, the $U_B$ oracle simply indicates the fixed degrees of freedom. Additionally, the fixed degrees of freedom need not comprise the entire domain boundary (e.g., part of the domain may have Dirichlet boundary conditions, while the rest is subject to Neumann boundary conditions), and need not even lie on the boundary itself.

Classically, the matrix $\bbP_\mathrm{int}$ is made to be only $\numnp{} \times \numbp{}$ to conserve memory. Here, we may double the system size by introducing Lagrange multipliers on every node in the mesh, couple only the Lagrange multipliers on the boundary to the degrees of freedom $\ket{u}$, and set the Lagrange multipliers in the interior to an arbitrary value. This simplifies the process of block-encoding the block-system of equations at the cost of one additional qubit, which will be needed to enlarge the system size regardless. 

Now, consider the QLSP $\Lc \ket{u} = \ket{f}$ produced by the FEM for the modified Poisson equation (see Section~\ref{subsec:weak-formulation-modified-Poisson}). This system is subject to the Dirichlet boundary condition $u = g$ on a subset of the boundary $\Gamma_D \subseteq \partial\Omega$ (Eq.~\eqref{eq:modified-Poisson-eqn-Dirichlet-bc}). We introduce a Lagrange multiplier $\lambda$ to enforce this boundary condition, and arbitrarily set $\lambda = 0$ on $\partial\Omega \setminus \Gamma_D$. To include this constraint, the weak formulation  (Eq.~\eqref{eq:modified-Poisson-weak-form}) of the problem becomes modified as
\begin{subequations}
\begin{align}
	\int_\Omega D\nabla u \cdot \nabla \delta u \,d\Omega + \int_{\Omega} k u \delta u \,d\Omega +  \int_{\Gamma_D} \lambda \delta u \,d\Gamma
	&= \int_\Omega \delta u f \,d\Omega + \int_{\Gamma_N} \delta u h \,d\Gamma
	\,, \\
	\int_{\Gamma_D} u \delta\lambda \,d\Gamma 
	&= \int_{\Gamma_D} g \delta\lambda \,d\Gamma
	\,.
\end{align}
\label{eq:constrained-modified-Poisson-weak-form}
\end{subequations}
The Lagrange multipliers (and their variations) are introduced with one degree of freedom per mesh node $\bar{x}_j \in \Omega$, and are discretized with basis functions that act as Kronecker deltas at the nodal locations:
\begin{align}
	\lambda &= \sum_{j \in [\numnp]} \lambda_j \delta(x - \bar{x}_j)
	\quad \text{and} \quad
	\delta\lambda = \sum_{j \in [\numnp]} \delta\lambda_j \delta(x - \bar{x}_j)
	\,,
\end{align}
while the solution $u$ and its variation $\delta u$ are still discretized according to Eq.~\eqref{eq:u-PG-discretization}. This discretization effectively enforces the Dirichlet boundary conditions pointwise, with the Lagrange multipliers $\lambda_j$ acting as point forces that ``force'' the solution towards the prescribed value $g(\bar{x}_j)$ at node $\bar{x}_j$.
With this discretization, Eq.~\eqref{eq:constrained-modified-Poisson-weak-form} leads to the block-linear system 
\begin{align}
    \begin{bmatrix}
        \Lc & I_n - \bbP_\mathrm{int}\\
        I_n - \bbP_\mathrm{int} & \bbP_\mathrm{int}
    \end{bmatrix}
    \begin{bmatrix}
        \ub \\ \lambdab
    \end{bmatrix}
    &= \begin{bmatrix}
        \fb \\ \bar{\ub}
    \end{bmatrix}
    \,,
    \label{eq:FEM-block-system}
\end{align}
where $\lambdab := \sum_{j \in [\numnp]} \lambda_j \ket{j}$ and $\bar{\ub} := \sum_{\bar{\xb}_j \in \Gamma_D} g(\bar{\xb}_j) \ket{j}$.

Following the steps in Section~\ref{subsec:weak-formulation-modified-Poisson}, Eq.~\eqref{eq:FEM-block-system} can be cast as a QLSP. Additionally, given access to a block-encoding of each block in the matrix on the left-hand side of this linear system, the block-system can be assembled using the circuits developed in Appendix~\ref{sec:block-linear-systems}. It remains to provide an explicit block-encoding for the projection operator $\bbP_\mathrm{int}$, which is done for Lagrange elements in the next subsection.

\subsection{Dirichlet Boundary Conditions for Lagrange Tensor Product Elements}

For a one-dimensional mesh $\Omega$ with $\numnp{} = 2^n$ degrees of freedom, the boundary $\partial\Omega$ has two points, one on the left-most side of the boundary and one on the right, i.e., $\ket{x^L} := \ket{0}$ and $\ket{x^R} := \ket{2^n-1}$. The boundary oracle $U_B$ is then given by $ U_B := (c_{0^n}\text{-}\NOT)(c_{1^n}\text{-}\NOT)$. In terms of quantum circuitry, a $(1,1)$-block-encodings of $\bbP_\mathrm{int}$ and $\bbP_\mathrm{bd}$ are then simply given, respectively, by 
\tikzexternalenable
\begin{align}
	\begin{quantikz}
		\lstick[4]{$\ket{\psi}_n$}& \octrl{1}
		\gategroup[5,steps=2,style={dashed,rounded
			corners,fill=blue!15, inner
			xsep=2pt},background,label style={label
			position=below,anchor=north,yshift=-0.2cm}]{{$U_B \in (1,1)\mathrm{-BE}(\bbP_\mathrm{int}) $}}
		& \ctrl{1} & \\
		& \octrl{3} & \ctrl{3} & \\
		& \gate[nwires=1,style={fill=blue!15,draw=blue!15,text height=1cm}]{\vdots} \wireoverride{n} & \gate[nwires=1,style={fill=blue!15,draw=blue!15,text height=1cm}]{\vdots} \wireoverride{n}& \wireoverride{n}\\
		& \octrl{0} & \ctrl{0} & \\
		\lstick{$\ket{0}$}& \targ{} & \targ{} & 
	\end{quantikz}
	\quad \text{and} \quad
\begin{quantikz}
    \lstick[4]{$\ket{\psi}_n$}& \octrl{1}
    \gategroup[5,steps=3,style={dashed,rounded
    corners,fill=blue!15, inner
    xsep=2pt},background,label style={label
    position=below,anchor=north,yshift=-0.2cm}]{{$(I_{N \times N} \otimes \NOT)U_B \in (1,1)\mathrm{-BE}(\bbP_\mathrm{bd}) $}}
    & \ctrl{1} & & \\
    & \octrl{3} & \ctrl{3} & & \\
    & \gate[nwires=1,style={fill=blue!15,draw=blue!15,text height=1cm}]{\vdots} \wireoverride{n} & \gate[nwires=1,style={fill=blue!15,draw=blue!15,text height=1cm}]{\vdots} \wireoverride{n}& \wireoverride{n}& \wireoverride{n}\\
    & \octrl{0} & \ctrl{0} & & \\
    \lstick{$\ket{0}$}& \targ{} & \targ{} & \targ{} & 
\end{quantikz}
\,.
\end{align}
\tikzexternaldisable
The indicator matrix for the interior degrees of freedom in $d$-dimensions, $\bbP_\mathrm{int}^{(d)}$, is built simply out of a tensor product structure as follows:
\begin{align}
	\bbP_\mathrm{int}^{\otimes d} &= \underbrace{\bbP_\mathrm{int} \otimes \cdots \otimes \bbP_\mathrm{int}}_{d\text{-times}}
	\,,
\end{align}
whence the $d$-dimensional indicator matrix for the boundary degrees of freedom is given by $\bbP_\mathrm{bd}^{(d)} := I_n^{\otimes d} - \bbP_\mathrm{int}^{\otimes d}$.

For the right-hand side of the linear system (Eq.~\eqref{eq:FEM-block-system}), the vector $\ket{\bar{u}}$ can be prepared in several ways. Firstly, if homogeneous Dirichlet boundary conditions are specified, then we can proceed directly to the linear solve. Additionally, if the values for the solution are piecewise constant on each boundary component, or if the function that specifies the boundary condition has some other simple structure, it may be preferable to directly implement an oracle $U_{\bar{u}}$ that prepares the boundary condition state $\ket{\bar{u}}$ according to $U_{\bar{u}}\ket{0}_n = \ket{\bar{u}}$. More generally, however, for an arbitrary function $g\colon \partial\Omega \to \bbR$ that specifies the Dirichlet boundary condition, we assume access to a polynomial $\Tilde{g}$ that interpolates $g$ on the nodal points of $\partial\Omega$ (this can be prepared efficiently in low dimensions by solving a linear system of size $\numbp \times \numbp$). This polynomial can be evaluated on all of $\Omega$, so that using MQET to prepare $\sum_{\xb \in [\numnp{}]} \Tilde{g}(\xb) \ket{\xb}$, we can multiply by the boundary projector $\bbP_\mathrm{bd}$ to obtain $\ket{\bar{u}} = \sum_{\xb_\mathrm{b} \in [\numbp{}]} \Tilde{g}(\xb_\mathrm{b}) \ket{\xb_\mathrm{b}}$. Finally, the block-linear system is assembled using Appendix~\ref{sec:block-linear-systems}.

\subsection{Assembly of Boundary Integrals and Force Vectors}

\begin{figure}
	\centering
	\begin{subfigure}[t]{0.45\textwidth}
		\centering
		\includegraphics[width=1.0\textwidth]{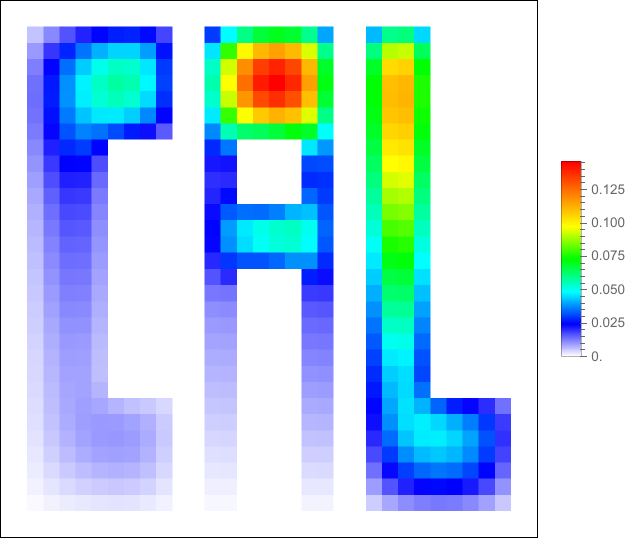}
		\caption{Solution vector $\ket{u}$}
	\end{subfigure}%
	~ 
	\begin{subfigure}[t]{0.45\textwidth}
		\centering
		\includegraphics[width=1.0\textwidth]{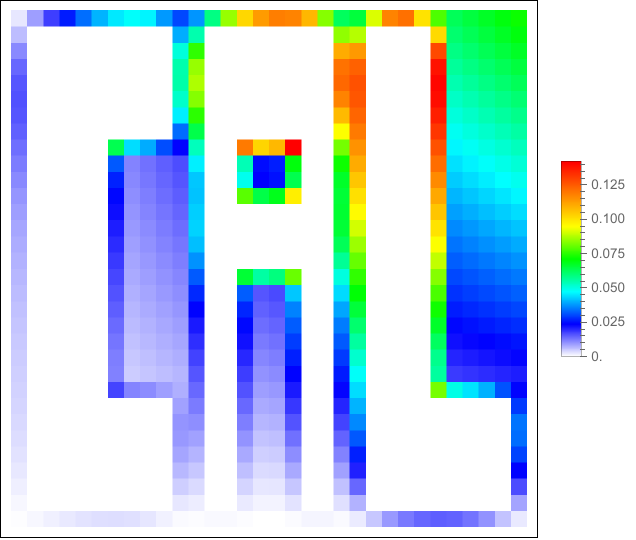}
		\caption{Lagrange multipliers $\ket{\lambda}$}
	\end{subfigure}
	\caption{Heatmaps for the solution vector and Lagrange multipliers for the Poisson equation solved in a \textsc{CAL}-shaped domain using Qu-FEM. The domain is a subset of a $2^5 \times 2^5$ grid of rectangular elements.}
	\label{fig:Wolfram-CAL}
\end{figure}

The force vector $\fb$ in the linear system, given by the assembled Finite Element approximation of the modified Poisson Equation (see Eq.~\eqref{eq:f-integral-defn}), contains two terms: a body force term $ f_{b,J} := \int_{\Omega} f N_I \,d\Omega$ and a boundary term $ f_{N,J} := \int_{\Gamma_N} h N_I \,d\Gamma $, where $J = \IX{}(j,\el)$ is the global degree of freedom index corresponding to the $j^\text{th}$ local node number in the $\el^\text{th}$ element. 
We outline the assembly of integrals over the domain $\int_{\Omega} (\cdots) \,d\Omega$, as the boundary integral $\int_{\Gamma_N} (\cdots) \,d\Gamma$ can be obtained by using an oracle (in the form of Eq.~\eqref{eq:boundary-oracle}) that signals the portion of the boundary $\Gamma_N$ corresponding to Neumann boundary conditions.
To that end, we write force vector $\fb$ in terms of its global components as 
\begin{align}
	\fb &= \sum_{J \in [\numnp]} f_J \ket{J} = \sum_{\el \in \numel} \sum_{j \in \nen} f_j^\el \ket{\IX(j,\el)}
	\,.
\end{align}
The force vector $\fb$ is then assembled from a linear form $F \colon \Uc \to  \bbR$ (see Appendix~\ref{sec:analysis-of-FEM}) defined by
\begin{align}
	F[N_i^e] &:= \int_{\Omega^\el} f(x) N_i^e(x) \,d\Omega^\el
	\,,
\end{align}
from which we have $f_i^\el := F[N_i^e]$. 
Just as was done for the general assembly problem (see Definition~\ref{def:general-assembly-problem}), numerical integration must be performed to assemble the force vector $\fb$. 
We map the problem of assembling the force vector $\ket{f}$ to a general assembly problem by preparing the operator form $\hat{f}$ of $\ket{f} := \frac{\fb}{\|\fb\|}$; see Section~\ref{sec:position-operators-and-Cartesian-coordinates} for the mapping between the operator form and vector form of a quantum state. We can then apply the quantum algorithms for numerical integration in Sections~\ref{subsec:quadrature-1D} and~\ref{subsec:quadrature-dD} to assemble the force vector.

By inspecting Eqs.~\eqref{eq:bilinear-form-representation} and~\eqref{eq:general-assembly-problem}, we find that the general assembly problem for vectors can be defined by the sum
\begin{align}
	\hat{f} &:= \sum_{\el{} \in [\numel{}]} 
	\sum_{j \in [\nen{}]} \sum_{\ell \in [G^\el]} w_\ell^{\el{}}
	N_j^\el{}(x_\ell^{\el{}}) f(x_\ell^{\el{}}) \ketbra{\mathrm{IX}(j,\el{})}{\mathrm{IX}(j,\el{})}	\,.
\end{align} 
Once again, for structured meshes with uniform element sizes, the basis function evaluations $N_j^\el{}(x_\ell^{\el{}})$ will be constant, so we may rearrange the summations in the above equation in the same manner as Eq.~\eqref{eq:solved-general-assembly-problem} by writing 
\begin{align}
	\hat{f} 
	&= \sum_{j \in [\nen{}]} \uoi_j \left( 
	\sum_{\ell \in [G]} c_{j\ell} f(\Xb_\ell)
	\right) \left(\uoi_j\right)^\dagger
	\,,
\end{align}
where $c_{j\ell} := w_\ell^\el N_j^\el{}(x_\ell^{\el{}})$ are LCU coefficients that are classically precomputed. The results of Theorem~\ref{thm:assembly-via-numerical-integration-d-dims} can then be applied to the above operator. 
Applying the operator $\hat{f}$ to the equal superposition state, and then performing $\Oc(1/\Fc_{f}^{[N]})$ rounds of exact amplitude amplification according to the filling fraction $\Fc_{f}^{[\infty]} := \frac{\| f \|_{L^2([0,1]^d)}}{\| f \|_{L^\infty([0,1]^d)}}$ prepares the force vector $\ket{f}$ (see Section~\ref{sec:position-operators-and-Cartesian-coordinates}).
Note that for $\Qc_p([0,h])$ elements, since the basis functions are non-negative and form a partition of unity (by Lemma~\ref{lem:partition-of-unity}), we have $\sum_{j \in [\nen{}]} \sum_{\ell \in [G]} |c_{j\ell}| \equiv 1$.

\subsection{Scaling Estimates}

The quantum algorithm for numerical integration given in Sections~\ref{subsec:quadrature-1D} and~\ref{subsec:quadrature-dD} scales logarithmically with the system $N$, but exponentially with the dimension $d$ (as per Theorem~\ref{thm:assembly-via-numerical-integration-d-dims}).
For many problems of interest, however, the body force $f \equiv 0$ so that $\ket{f_b} \equiv 0$, and Dirichlet or periodic boundary conditions are prescribed so that the forces arising from the Neumann boundary condition satisfy $\ket{f_N} \equiv 0$. In particular, high-dimensional partial differential equations such as the backward Kolmogorov equation (BKE) satisfy this constraint~\cite{kharazi2024explicitblockencodingsboundary}, so we do not suffer from this exponential overhead.
For low-dimensional problems, assembly via numerical integration scales logarithmically with the number of grid points, and polynomially with the number of Guass points and polynomial order of the element, so the methods of Section~\ref{sec:numerical-integration} provide a viable strategy for preparing the force vector.

\section{Numerical Demonstrations of Qu-FEM}
\label{sec:demo}

We implement the Quantum Algorithm for the Finite Element Method (Qu-FEM) in a Mathematica notebook\footnote{Self-contained notebooks that reproduce the results in this section may be found at: \href{https://github.com/AhmadAlkadri/Qu-FEM}{https://github.com/AhmadAlkadri/Qu-FEM}}~\cite{Mathematica} that utilizes Wolfram's Quantum Framework~\cite{mathematicaQC}.
In particular, we implement two special cases of the modified Poisson equation given in Section~\ref{subsec:weak-formulation-modified-Poisson} in two dimensions. 
These examples are intended to verify that the quantum circuitry proposed in this work replicates the well-known values that the stiffness and mass matrices take on for these problems.

\subsection{Poisson's Equation in a \textsc{CAL}-Shaped Domain}

We first showcase the solution of Poisson's equation by explicitly constructing a circuit that solves the PDE on a \textsc{CAL}-shaped domain (Figure~\ref{fig:Wolfram-CAL}). An explicit construction for the oracle that creates this domain may be found in the Mathematica~\cite{Mathematica} notebook. The Poisson's equation we consider corresponds to Eq.~\eqref{eq:modified-Poisson-eqn} when $D \ne 0$ is constant and $k \equiv 0$. For simplicity, we can absorb the diffusion constant $D$ into the source term $f$ and write 
\begin{align}
	-\nabla^2 u &= f
	\,,
\end{align}
with homogeneous boundary conditions (i.e., $u = 0 \quad \text{on } \partial(\textsc{CAL})$). The force function is chosen to be $f(x^0,x^1) = x^1 x^0$. The assembly is simulated using the quantum circuitry developed in Sections~\ref{sec:assembly-of-global-arrays-linear}, \ref{sec:numerical-integration}, and~\ref{sec:constraints}. The post-selection is assumed to succeed via projections onto the $\bra{0}$ basis state, and the linear solve is performed classically. The matrix elements that we obtain for the stiffness and mass matrices match the analytical values we expect from Examples~\ref{ex:1d-K-global} and~\ref{ex:numerical-integration-of-mass-matrix}.
That is, we verify that the simulated Qu-FEM algorithm yields the same result as the classical FEM for the mesh in \cref{fig:Wolfram-CAL}.

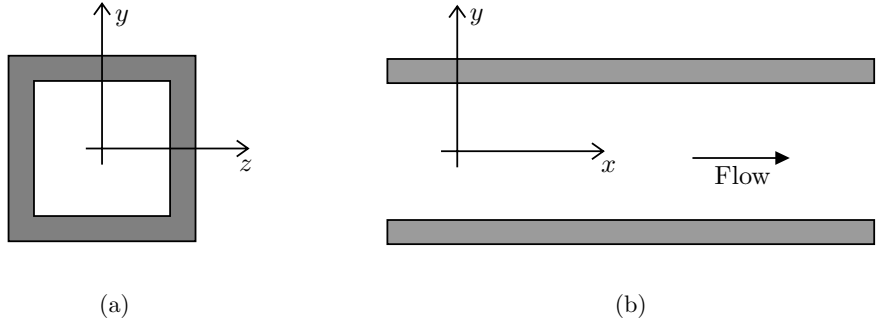
\begin{figure}
	\centering
	\scalebox{0.9}{%
	\begin{subfigure}[b]{0.45\textwidth}
		\centering
		\tikzset{every picture/.style={line width=0.75pt}} %

\begin{tikzpicture}[x=0.75pt,y=0.75pt,yscale=-1,xscale=1,scale=0.9]

\draw  [fill={rgb, 255:red, 128; green, 128; blue, 128 }  ,fill opacity=1 ] (76.44,99.67) -- (191.67,99.67) -- (191.67,214.89) -- (76.44,214.89) -- cycle ;
\draw  [fill={rgb, 255:red, 255; green, 255; blue, 255 }  ,fill opacity=1 ] (92.11,115.33) -- (176,115.33) -- (176,199.22) -- (92.11,199.22) -- cycle ;
\draw  (124.06,157.28) -- (224.06,157.28)(134.06,67.28) -- (134.06,167.28) (217.06,152.28) -- (224.06,157.28) -- (217.06,162.28) (129.06,74.28) -- (134.06,67.28) -- (139.06,74.28)  ;
\draw  [draw opacity=0] (42,61) -- (236,61) -- (236,236.83) -- (42,236.83) -- cycle ;

\draw (217,162.9) node [anchor=north west][inner sep=0.75pt]    {$z$};
\draw (140.5,69.57) node [anchor=north west][inner sep=0.75pt]    {$y$};

\end{tikzpicture}
		\caption{}
		\label{fig:square-duct-a}
	\end{subfigure}
	~
	\begin{subfigure}[b]{0.45\textwidth}
		\centering
		\tikzset{every picture/.style={line width=0.75pt}} %

\begin{tikzpicture}[x=0.75pt,y=0.75pt,yscale=-1,xscale=1,scale=0.9]

\draw  [fill={rgb, 255:red, 155; green, 155; blue, 155 }  ,fill opacity=1 ] (307,100) -- (607,100) -- (607,115) -- (307,115) -- cycle ;
\draw  [fill={rgb, 255:red, 155; green, 155; blue, 155 }  ,fill opacity=1 ] (307,200) -- (607,200) -- (607,215) -- (307,215) -- cycle ;
\draw  (340.06,157.28) -- (440.06,157.28)(350.06,67.28) -- (350.06,167.28) (433.06,152.28) -- (440.06,157.28) -- (433.06,162.28) (345.06,74.28) -- (350.06,67.28) -- (355.06,74.28)  ;
\draw    (495,161.5) -- (552,161.5) ;
\draw [shift={(555,161.5)}, rotate = 180] [fill={rgb, 255:red, 0; green, 0; blue, 0 }  ][line width=0.08]  [draw opacity=0] (8.93,-4.29) -- (0,0) -- (8.93,4.29) -- cycle    ;

\draw (437,161.23) node [anchor=north west][inner sep=0.75pt]    {$x$};
\draw (356.17,66.9) node [anchor=north west][inner sep=0.75pt]    {$y$};
\draw (507,164.5) node [anchor=north west][inner sep=0.75pt]   [align=left] {Flow};

\end{tikzpicture}
		\caption{}
		\label{fig:square-duct-b}
	\end{subfigure}
	}
	\caption{Flow in a square duct of constant cross section: (a) section perpendicular to the direction of flow; and (b) section parallel to the direction of flow.}
	\label{fig:square-duct}
\end{figure}

\subsection{Navier-Stokes for Unidirectional Flow}

Consider the fully developed flow of a Newtonian fluid in a channel of square cross section with unit side length (see Figure~\ref{fig:square-duct}). We assume that the fluid is incompressible and isothermal so that the Navier-Stokes equations can be applied. Here, we label the coordinates as $(x^0,x^1,x^2) = (x,y,z)$ for convenience. The flow is assumed to be unidirectional in the $x$-direction, so that $v_z = v_z(x,y)$. Furthermore, we assume that the pressure drop $\bracfrac{d p}{d x}$ is constant. The governing PDE is then 
\begin{align}
	\nabla^2 u_z &= \frac{1}{\mu}\frac{d p}{d x}
	\,,
\end{align}
where $\mu$ is the viscosity of the fluid. We assume no-slip boundary conditions, so that $u_z = 0$ on the boundary of the square.

The resulting flow profile is plotted in Figure~\ref{fig:Wolfram-NS}. On a quantum computer, rather than plotting the full profile, one would use the state vector $\ket{u}$ that encodes the solution to estimate observables of the form 
\begin{align}
	(r,u) &:= \int_\Omega r(x) u(x) \,d\Omega 
	\,,
\end{align}
where $r \colon \Omega \to \bbR$ is some scalar field. Here, choosing $r$ to be the mass density would recover the total mass flowrate through the square duct.

\begin{figure}
    \centering
    \begin{subfigure}[t]{0.45\textwidth}
        \centering
        \includegraphics[width=1.0\textwidth]{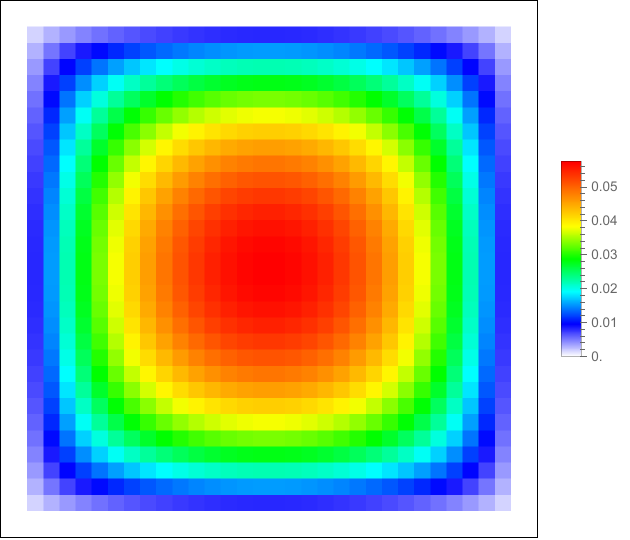}
        \caption{Solution vector $\ket{u}$}
    \end{subfigure}%
    ~ 
    \begin{subfigure}[t]{0.45\textwidth}
        \centering
        \includegraphics[width=1.0\textwidth]{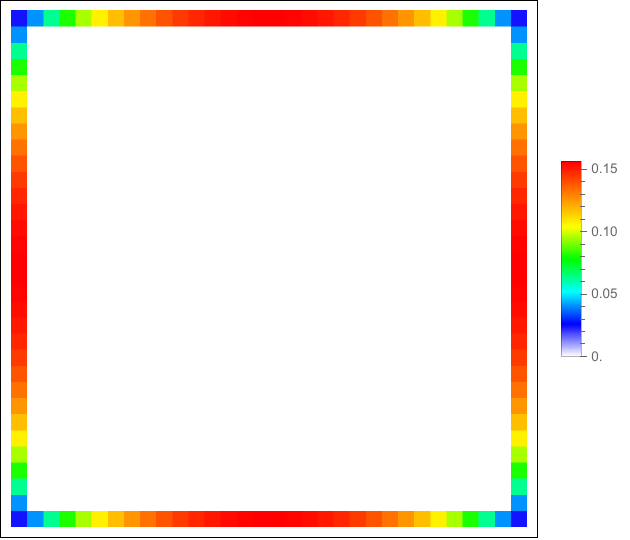}
        \caption{Lagrange multipliers $\ket{\lambda}$}
    \end{subfigure}
    \caption{Heatmaps for the solution vector and Lagrange multipliers produced by Qu-FEM for the unidirectional flow through a square duct of a fluid governed by the Navier Stokes equations. The solution vector gives the velocity of the fluid through the duct, while the Lagrange multipliers enforce a no-slip boundary condition (i.e., zero velocity) at the edges of the domain. 
    The domain is a $2^5 \times 2^5$ grid of rectangular elements.
    	}
    \label{fig:Wolfram-NS}
\end{figure}

\section{Discussion and Conclusions}
\label{sec:conclusions}

We introduced \textbf{Qu-FEM}, a fault-tolerant quantum algorithm that implements the classical finite element method on a quantum computer.  Below, we summarize the main contributions of this work, discuss their relation to other quantum PDE solvers, and outline directions for future research.

\paragraph{Summary of key contributions.}
\begin{enumerate}
	\item \textbf{Efficient assembly through the \emph{unit of interaction}.}  
	We defined the interaction matrices $\uoi_{jk}$ (Definition~\ref{defn:unit-of-interaction}) and showed that every global FEM array with constant elemental contributions (which corresponds to constant coefficients in Eq.~\eqref{eq:modified-Poisson-eqn}) can be written as a \emph{constant-size} linear combination of these units. 
	Together with state-preparation oracles for the elemental contributions, this yields efficient block-encodings of stiffness and mass matrices that utilize, respectively, $\Oc(d^2 n m p^2)$ and $\Oc(dn m p^2)$ Toffoli or simpler gates for tensor product elements $\Qc_p(\Omega^{e,d})$ (Theorems~\ref{thm:query-complexity-Assembly-1D-degree-1} and~\ref{thm:query-complexity-Assembly-dD-degree-p}).
	
	\item \textbf{Numerical integration on-the-fly.}  
	We defined the local-to-global node number indicator matrix $\uoi_j$ (Definition~\ref{defn:local-to-global-node-number-indicator-matrix}), 
	and extended Qu-FEM to problems with spatially varying coefficients, non-homogeneous Neumann data and general body forces (Theorems~\ref{thm:assembly-via-numerical-integration-1D} and~\ref{thm:assembly-via-numerical-integration-d-dims}).
	This construction relied on Gauss-Legendre quadrature of order $G$, together with QSP/MQET of a set of position operators. The gate cost for this block-encoding is $ \Oc\left( dn \cdot (D^d G^d + p^{2d} m) \right) $ for tensor product elements $\Qc_p(\Omega^{e,d})$, which again is logarithmic in the system size, but scales exponentially with dimensionality of the problem.
	
	\item \textbf{Projector-free treatment of constraints.}  
	Dirichlet conditions were incorporated through a block-encoded Lagrange-multiplier system (Section~\ref{sec:constraints}), allowing boundary values and other algebraic constraints to be handled without modifying previously assembled block-encodings.
	
	\item \textbf{Explicit block-encodings for Cartesian domains.}  
	We constructed explicit quantum circuits for assembly in Cartesian geometries, thus showing that the geometric capabilities of Qu-FEM are at least as general as other quantum PDE solvers.
	
	\item \textbf{Numerical demonstrations.}  
	Classical simulation of these quantum circuits (Section~\ref{sec:demo}) reproduced classical solutions for Poisson and steady Navier-Stokes uni-directional flow benchmarks on non-trivial two-dimensional geometries, thus validating the analytic circuits.
\end{enumerate}

\paragraph{Where Qu-FEM fits among other quantum PDE solvers.}\mbox{}

Existing quantum approaches~\cite{childs2021high,kharazi2024explicitblockencodingsboundary} chiefly target finite-difference or spectral discretizations, and therefore require rectangular domains or periodic grids.  Enforcing general boundary conditions in these other existing quantum PDE approaches also remains a challenge.
On the other hand, the Qu-FEM framework retains the general geometric flexibility of the classical FEM, as the oracles (introduced in Section~\ref{subsec:unit-of-interaction-and-indicator-matrix}) utilized during the assembly process are \textit{a priori} independent of the particular geometry of the problem. 
Thus, Qu-FEM has the potential to access the $hp$-adaptivity and rich element catalogue of the classical FEM, while preserving the logarithmic state preparation advantage of amplitude encoding for Cartesian domains.
Additionally, Qu-FEM retains the automated bookkeeping of the interactions between nodes from the classical FEM assembly procedure, avoiding errors such as that made Ref.~\cite{clader2013preconditioned} (see \cref{subsec:unit-of-interaction-and-indicator-matrix} for a more detailed discussion).
The current explicit implementation of Qu-FEM, however, faces several limitations and challenges.

One limitation of this algorithm is the exponential scaling with the dimension of the problem $ \Oc\left( dn \cdot (D^d G^d + p^{2d} m) \right) $ for the numerical integration procedure in Section~\ref{sec:numerical-integration} (for non-constant elemental arrays). 
This exponential scaling in dimension is a result of using a black-box numerical integrator; in the worst case, $\Omega(G^d)$ grid points are needed to resolve the variation of a function in each of the $d$ dimensions.
Once a particular problem has been chosen, however, more efficient numerical integration schemes can be made if additional structure is present in the problem. For example, Ref.~\cite{kharazi2024explicitblockencodingsboundary} utilizes the finite difference method (FDM), and  
considers Hamiltonians with a potential term consisting of pairwise interactions (with a radially symmetric inverse power law). Here, the gate cost for implementing the potential function scales polynomially with the dimensionality of the problem.
Similarly, alternative quadrature schemes that match the problem of interest can be incorporated into Qu-FEM.
When comparing our algorithm to the quantum implementation of the FDM in Ref.~\cite{kharazi2024explicitblockencodingsboundary}, we find that the asymptotic Toffoli count for Qu-FEM\footnote{The FDM for a uniform mesh should be compared to the FEM with constant elemental contributions.} of $\Oc(d^2 p^2 n m)$ matches the FDM's logarithmic scaling with the system size $2^n$, but is quadratically slower in the dimension $d$ and the order $p$. Future work, however, has the potential to improve this scaling to match the FDM by increasing the efficiency of the block-encoding for the unit of interaction (Proposition~\ref{prop:BE-unit-of-interaction-p}) and the quantum circuit for arithmetic (Eq.~\eqref{eq:quantum-division-unitary}).
Even without improvements to this scaling, however, advancing the Qu-FEM framework remains a promising direction due to its potential to simulate in more arbitrary geometries.

There are several directions for increasing the generality of the problem under consideration (beyond the modified Poisson equation of Section~\ref{subsec:weak-formulation-modified-Poisson}).
Extending connectivity oracles to unstructured simplicial meshes and space-filling elements remains an important next step. One promising route for this extension is to partition the mesh into a small set of sub-meshes with injective connectivity matrices (as described in Section~\ref{subsec:unit-of-interaction-and-indicator-matrix}). 
Additionally, our algorithm does not currently address the solution of nonlinear or time-dependent PDEs. 
Classically, nonlinearities are typically addressed by assembling the Jacobian matrix (sometimes referred to as the \textit{tangent matrix} in the FEM literature) and carrying out Newton-Raphson iterations in accordance with a time discretization scheme.
The no-cloning theorem, however, makes a direct implementation of this method on a quantum computer impossible, as one cannot both generate the Jacobian and update each Newton iteration using only one copy of the solution. 
Future work will thus require modifications to the classical Newton-Raphson method, or an alternate strategy for handling nonlinearities (such as Carlemann linearization~\cite{liu2024dense} or the quantum nonlinear Schr\"{o}dinger linearization technique~\cite{lloyd2020quantumalgorithmnonlineardifferential}).
Finally, we note that explicit bookkeeping for problems that have multiple degrees of freedom at each node also needs to be addressed, though this can likely be implemented as an extension of the assembly of block-linear systems introduced in Section~\ref{sec:constraints} and Appendix~\ref{sec:block-linear-systems}.

With respect to achieving a quantum advantage, we outline a different set of challenges.
First, we note that the cost of solving the QLSP is proportional to the condition number of the linear operator to be inverted.
While bounded‐condition number estimates exist for many elliptic problems, these condition numbers typically scale polynomially with the system size $N$~\cite{brenner2008mathematical,bagherimehrabFastQuantumAlgorithm2023}. Practical quantum advantage will require quantum-implementable preconditioners (e.g., multi-grid, BPX, and SPAI) compatible with block-encodings. 
This issue is well-known in the literature, and has recently gained the attention of several works~\cite{deiml2024quantum,lapworth2025preconditioned,jin2025quantumpreconditioningmethodlinear,bagherimehrabFastQuantumAlgorithm2023}.
With respect to the input/output problem~\cite{aaronson2015read}, identifying and extracting observables of interest (such as forces, fluxes, and energies) also remains a challenge.
For practical engineering problems, we expect the input state preparation to be efficient, as the input data is known from the problem description.
On the other hand, algorithms that learn global quantities without utilizing exponentially many copies of the full solution state are essential, and a key step in proving a quantum advantage that persists once the cost of input/output have been included. 
Amplitude estimation~\cite{suzuki2020amplitude}, shadow tomography~\cite{huang2020predicting}, and standard Monte Carlo (shot-based) sampling~\cite{cerezo2021variational} are natural starting points for addressing the output problem.

\paragraph{Outlook.}\mbox{}

In summary, Qu-FEM shows that the classical finite element method---one of the most versatile tools in computational science---can be implemented as a quantum algorithm in the fault-tolerant era of quantum computation. This quantum implementation shows promise for a quantum advantage in time complexity for high-dimensional problems (when elemental arrays are constant), and a quantum advantage in space complexity for low-dimensional problems (when the elemental arrays are non-constant). 
The modular nature of this implementation (input state preparation $\ket{f}$ $\rightarrow$ connectivity oracle implementation $\Oc_\IX$ $\rightarrow$ quantum assembly (Qu-FEM) $\rightarrow$ QSVT/QSP solve $\rightarrow$ observable output $(r,u)$) facilitates integration with future advances in quantum linear solvers, preconditioning, observable estimation, and block-encoding strategies.
Although significant work remains, future quantum algorithms for FEM have the potential to drive quantum simulations of Kolmogorov-scale turbulence~\cite{pope2000turbulent}, multidimensional Fokker–Planck dynamics~\cite{risken1996fokker}, and other frontier challenges that exceed classical exascale limits.

\section*{Acknowledgements}

We are grateful to Philipp Schleich and Jakob Huhn for helpful discussions, and to Prof. Lin Lin for valuable feedback. K.K.M. gratefully acknowledges discussions with Prof. Panayiotis Papadopoulos. 
This work was supported by the U.S. Department of Energy, Office of Science, Office of Advanced Scientific Computing Research under Award Number DE-SC0023273.
Part of this research was performed while A.M.A. and T.D.K. were visiting the Institute for Pure and Applied Mathematics (IPAM), which is supported by the National Science Foundation (Grant No. DMS-1925919).
The authors also acknowledge the support of the University of California, Berkeley.
A.M.A. is also supported by the Natural Sciences and Engineering Research Council of Canada Postgraduate Scholarships---Doctoral program.

\vspace{02pt}

\printbibliography

\nomenclature[A]
{$C(\Omega) = \{u \colon \Omega \to \bbR \mid u \text{ is continuous}\}$}
{}
\nomenclature[A]
{$C^k(\Omega) = \{u \colon \Omega \to \bbR \mid u \text{ is } k \text{-times continuously differentiable}\}$}
{}
\nomenclature[A]
{\(L^p(\Omega) = \{u \colon \Omega \to \bbR \mid u \text{ is Lebesgue measurable and } \abs{\abs{u}}_{L^p(\Omega)} < \infty\}\) ($1 \le p < \infty$)}
{}
\nomenclature[A]
{\(L^\infty(\Omega) = \{u \colon \Omega \to \bbR \mid u \text{ is Lebesgue measurable and } \abs{\abs{u}}_{L^\infty(\Omega)} < \infty\}\)}
{}
\nomenclature[A]
{\(W^{k,p}(\Omega), H^k(\Omega)\)}
{Sobolev Spaces}
\nomenclature[A]
{\( \Uc \text{ and } \Uc_0\)}
{ Space of admissible solutions and space of admissible variations, respectively}

\nomenclature[B]
{$\numel{}$}
{Number of elements}
\nomenclature[B]
{$\numnp{}$}
{Number of global nodal points}
\nomenclature[B]
{$\nen{}$}
{Number of element nodes}
\nomenclature[B]
{$K$}
{Stiffness matrix}
\nomenclature[B]
{$M$}
{Mass matrix}
\nomenclature[B]
{$K^e$}
{Elemental stiffness matrix}
\nomenclature[B]
{$M^e$}
{Elemental mass matrix}
\nomenclature[B]
{$\ket{N^e}$}
{(Unnormalized) vector of elemental basis functions}
\nomenclature[B]
{$\numbp{}$}
{Number of boundary nodal points}

\nomenclature[B]
{$\Qc_p(\Omega^e)$}
{Lagrange element of order $p$ on the domain $\Omega^e$}

\nomenclature[B]
{$\Omega$}
{Global Computational domain (a subset of $\bbR^d$)}
\nomenclature[B]
{$\Omega^e$}
{Element domain (a subset of $\Omega$)}

\nomenclature[C]
{$\norm{A} := \sup_{\braket{\psi \vert \psi} = 1} \norm{A\ket{\psi}}$}
{Spectral norm}
\nomenclature[C]
{$\norm{A}_\mathrm{max} := \max_{ij} \abs{A_{ij}}$}
{Max norm}
\nomenclature[C]
{$\norm{A}_1 := \sum_{ij} \abs{A_{ij}}$}
{Entrywise $\ell_1$-norm}
\nomenclature[C]
{$(\alpha,m)$-$\mathrm{BE}(A)$}
{Block-encoding of $A$ with subnormalization $\alpha$ and $m$ ancilla qubits} %
\nomenclature[C]
{$S := \ketbra{0}{0} + i \ketbra{1}{1}$}
{Phase gate}
\nomenclature[C]
{$S^1$}
{Shift operator}

\nomenclature[C]
{$A_{ij}$ or $A_{i,j}$}
{Components of the matrix $A$}

\nomenclature[C]
{$\{I,X,Y,Z\}$}
{Pauli matrices}
\nomenclature[C]
{$H$}
{Hadamard gate}

\nomenclature[D]
{$\delta_{jk}$}
{Kronecker delta}
\nomenclature[D]
{$\bbD$}
{Closed unit disk in $\bbC$}
\nomenclature[D]
{$\operatorname{SU}(2)$}
{Special unitary group of degree $2$; all matrices in $\bbC^{2 \times 2}$ with determinant equal to $1$}

\newpage
\printnomenclature
\newpage

\begin{appendices}

\renewcommand{\theequation}{\thesection.\arabic{equation}}
\setcounter{equation}{0}
\section{Analysis of the Finite Element Method}
\label{sec:analysis-of-FEM}

A study of the convergence of solutions from the Finite Element Method (FEM) requires identification of the function space that the solution lies in. In general, this is accomplished by assessing the continuity and differentiability requirements imposed by the weak formulation of the partial differential equation (PDE). Identification of the appropriate function space (along with a suitable norm) enables a theoretical analysis of the well-posedness of the problem, uniqueness of the solution, and convergence of the solution. 
Moreover, identifying the function space also informs element design, as one often chooses the finite element basis so that the function space it spans ``conforms to'' (i.e., is a subset of) the function space required by the problem. For example, the Lagrange elements that we consider in this work are $H^1$-conforming elements. In this section, we review some definitions and properties of function spaces used in the numerical analysis of the FEM. For a more detailed presentation of these results, we refer the reader to~\cite{brenner2008mathematical,chapelle2011finite,donea2003finite,papadopoulos2015280a}.

Let $\Omega \subset \bbR^d$ be a connected, bounded set. We will be interested in solving PDEs within $\Omega$, subject to some boundary conditions. 
The PDEs that we consider in this work also have real-valued solutions. Although quantum mechanics is set in complex space, this presents no technical issues, as we may assume without loss of generality that the amplitudes for any quantum state are real-valued for the algorithms that we consider here.
Solutions that emerge from FEM need only satisfy the PDE in a weak sense, and may not contain the same degree of smoothness or regularity as that required by the strong form of the PDE (i.e., solutions may not live in $C^k(\Omega)$). 
The appropriate setting for analyzing solutions to the weak (or variational) formulation of PDEs is Sobolev spaces.

\begin{defn}[$L^p$ Space]
    For any $1 \le p < \infty$, the $L^p$ space of functions with domain $\Omega$ are defined as 
    \begin{align}
        L^p(\Omega) = \{u \colon \Omega \to \bbR \mid u \text{ is Lebesgue measurable and } \abs{\abs{u}}_{L^p(\Omega)} < \infty\} \,,
    \end{align}
    where the norm $\|\bullet\|_{L^p(\Omega)}$ is defined as\footnote{An $L^p$ space can be defined for $p=\infty$ as well, but we will not require this space for our discussion of the FEM.} 
    \begin{align}
        \|u\|_{L^p(\Omega)} &:= \left(\int_\Omega \|u\|^p \,d\Omega \right)^{1/p} \,.
    \end{align}
\end{defn} \noindent
It can be shown that the $L^p$ spaces form Banach spaces (i.e., complete normed vector spaces)~\cite{evans2022partial}. In particular, we will be interested in the space of square-integrable functions $L^2(\Omega)$, which together with the inner product 
\begin{align}
    \langle u,v \rangle_{L^2(\Omega)} := \int_\Omega u v \,d\Omega 
    \quad ; \quad 
    u,v \in L^2(\Omega) \,,
\end{align}
forms a Hilbert space\footnote{By Hilbert space, we mean a complete inner product space
rather than ``the'' particular Hilbert space $\Hc$ of state vectors $\ket{\psi}$.}.

\begin{defn}[Sobolev Space]
    For any integer $m \ge 0$, the Sobolev space of order $m$ is defined as
    \begin{equation}
        H^m(\Omega) := \left\{u \colon \Omega \to \bbR \mid D^\alpha u \in L^2(\Omega) \,, \quad \forall |\alpha| \le m \right\} \,,
    \end{equation}
    where $\alpha$ is a $d$-dimensional multi-index, and 
    \begin{align}
        D^\alpha &:= \frac{\partial^{|\alpha|}}{\partial x_1^{\alpha_1} \dots \partial x_1^{\alpha_d}}  
        \quad , \quad 
        |\alpha| = \alpha_1 + \dots + \alpha_d \,.
        \label{eq:D-alpha}
    \end{align}
    In Eq.~\eqref{eq:D-alpha}, derivatives are taken in the weak sense (i.e., in the sense of distributions). We further define the norm
    \begin{align}
        \|u\|_{H^m(\Omega)} := \left(\sum_{|\alpha| \le m} \|D^\alpha u \|_{L^2(\Omega)}^2\right)^{1/2} \,,
        \label{eq:Sobolev-norm}
    \end{align}
    and the inner product 
    \begin{align}
        \langle u,v \rangle_{H^m(\Omega)} := \sum_{|\alpha| \le m} \langle D^\alpha u, D^\alpha v\rangle_{L^2(\Omega)} \,.
        \label{eq:Sobolev-inner-product}
    \end{align}
    Note that in Eqs.~\eqref{eq:Sobolev-norm} and~\eqref{eq:Sobolev-inner-product}, derivatives of different orders are summed together, which requires that the spatial coordinates be non-dimensionalized. When the coordinates do not correspond to dimensionless quantities, a conversion factor must be applied to each term to make the equation dimensionally consistent.
\end{defn}\noindent
With the norm and inner product defined above, the Sobolev spaces are Hilbert spaces~\cite{evans2022partial}.
In particular, note that $H^0(\Omega) = L^2(\Omega)$. More generally, we have the inclusions $H^m(\Omega) \subseteq L^2(\Omega) \subseteq L^1_\text{loc}(\Omega)$, where 
\begin{align}
    L^1_\text{loc}(\Omega) &:= \left\{ u \colon \Omega \to \bbR \mid u \in L^1(K) \text{ for all compact } K \subset \Omega \right\} \,.
\end{align}
The space $L^1_\text{loc}(\Omega)$ will allow for the definition of \textit{weak derivatives} over each element in our mesh (see~\cite{brenner2008mathematical} for a more detailed discussion).

The first step in formulating a problem in the FEM is to convert the strong formulation of a PDE into a weak formulation. In this work, we consider linear PDEs, for which the weak formulation takes a general form.
In what follows, let $\Uc$ be a Hilbert space with the inner product denoted by $\langle \bullet , \bullet \rangle_{\Uc}$ and the induced norm 
$\| \bullet \|_{\Uc} := \sqrt{\langle \bullet , \bullet \rangle_{\Uc}}$.
\begin{defn}[Abstract Weak Formulation of a Linear Problem]
\label{def:abstract-weak-formulation}
    Consider a (continuous) bilinear form $B \colon \Uc \times \Uc \to \bbR$ and a (continuous) linear form $F \colon \Uc \to \bbR$. The \textbf{abstract weak formulation of a linear problem} is to find $u \in \Uc$ such that 
    \begin{align}
        B[u,\delta u] = F[\delta u] 
        \label{eq:abstract-weak-formulation}
    \end{align}
    for all $\delta u \in \Uc$. 
\end{defn}\noindent
We refer to the linear form $F$ in Eq.~\eqref{eq:abstract-weak-formulation} as the \textit{data} in the problem, as the Riesz Representation Theorem (see \cite[Appendix D]{evans2022partial}) asserts the existence of an element $f \in U$ such that 
\begin{align}
    F[\delta u] = \langle f , \delta u \rangle_{\Uc}
    \label{eq:data-F-Riesz-representation}
\end{align}
for all $\delta u \in \Uc$.
Definition~\ref{def:abstract-weak-formulation} is also known as the abstract \textit{linear variational formulation}~\cite{chapelle2011finite}, as it can be recast as an optimization problem whenever the bilinear form $B$ in Eq.~\eqref{eq:abstract-weak-formulation} is symmetric. That is, every symmetric bilinear form is derivable from a potential functional (a result sometimes referred to as \textit{Vainberg's Theorem}~\cite{papadopoulos2015280a}), which up to a constant can be taken to be 
\begin{align}
    J[u] := \frac{1}{2} B[u,u] - F[u] \,,
\end{align}
whence extremization of its variation $\delta J[u;\delta u] = 0$ is equivalent to Eq.~\eqref{eq:abstract-weak-formulation}. Symmetry, however, is not necessary for an abstract weak formulation. 

We now turn our attention to the existence, uniqueness, and stability (i.e., the well-posedness) of solutions to the abstract weak formulation. A fundamental result that establishes this for linear problems in the FEM is the Lax-Milgram theorem. 
\begin{prop}[Lax-Milgram Theorem]
    Suppose that a (continuous) bilinear form $B \colon \Uc \times \Uc \to \bbR$ is coercive (also known as ``elliptic''), i.e., there exists a constant $\gamma > 0$ such that 
    \begin{align}
        B[u,u] \ge \gamma \| u \|_{\Uc}^2 \,,
        \label{eq:coercivity-defn}
    \end{align}
    for every $u \in \Uc$.
    Recall that continuity of the bilinear form means that there exists a constant $M > 0$ such that 
    \begin{align}
        \left| B[u,\delta u] \right| &\le M \| u \|_{\Uc} \| \delta u \|_{\Uc} \,,
        \label{eq:continuity-defn}
    \end{align}
    for all $u,\delta u \in \Uc$. Then there exists a unique element $u \in \Uc$ that solves the abstract weak formulation in Eq.~\eqref{eq:abstract-weak-formulation}. Furthermore, the data $F$ yields the two-sided stability estimate 
    \begin{align}
        \gamma \| f \|_{\Uc} &\le \| u \|_{\Uc} \le \frac{1}{\gamma} \| f \|_{\Uc} \,,
        \label{eq:Lax-Milgram-solution-bounds}
    \end{align}
    where $f \in U$ is related to the data $F$ by Eq.~\eqref{eq:data-F-Riesz-representation}.
\end{prop}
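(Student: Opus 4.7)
My plan is to reduce the abstract weak formulation to an operator equation on $\Uc$, solve that operator equation via the Banach fixed-point theorem, and then read off the stability bounds directly from the defining identity. This approach has the advantage of being constructive (producing an iterative scheme for $u$) and of making the role of coercivity and continuity transparent in the contraction constant.

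First I would use the Riesz representation theorem twice: once to introduce the element $f \in \Uc$ with $F[\delta u] = \langle f, \delta u\rangle_\Uc$, as already recorded in Eq.~\eqref{eq:data-F-Riesz-representation}; and once to define a linear operator $A \colon \Uc \to \Uc$ by requiring $\langle A u, \delta u\rangle_\Uc = B[u, \delta u]$ for every $\delta u \in \Uc$. For each fixed $u$, the map $\delta u \mapsto B[u,\delta u]$ is a continuous linear functional on $\Uc$ by Eq.~\eqref{eq:continuity-defn}, so the Riesz theorem guarantees a unique representer $Au$. Bilinearity of $B$ gives linearity of $A$, and continuity of $B$ gives $\|Au\|_\Uc \le M \|u\|_\Uc$. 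In this notation the abstract weak formulation~\eqref{eq:abstract-weak-formulation} is equivalent to the operator equation $A u = f$.

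Next, to establish existence and uniqueness, I would introduce the affine map $T_\rho(v) := v - \rho(A v - f)$ for a parameter $\rho > 0$ to be chosen. A direct expansion using $\langle A w, w\rangle_\Uc = B[w,w]$, together with coercivity~\eqref{eq:coercivity-defn} and the bound $\|Aw\|_\Uc \le M \|w\|_\Uc$, yields
\begin{align}
\|T_\rho(v_1) - T_\rho(v_2)\|_\Uc^2
&= \|w\|_\Uc^2 - 2\rho\, B[w,w] + \rho^2 \|A w\|_\Uc^2 \\
&\le \bigl(1 - 2\rho\gamma + \rho^2 M^2\bigr)\|v_1 - v_2\|_\Uc^2,
\end{align}
where $w := v_1 - v_2$. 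For $\rho \in (0, 2\gamma/M^2)$ the factor in parentheses lies strictly in $(0,1)$, so $T_\rho$ is a strict contraction on the complete metric space $\Uc$. The Banach fixed-point theorem then produces a unique $u \in \Uc$ with $T_\rho(u) = u$, which is equivalent to $A u = f$, and hence to Eq.~\eqref{eq:abstract-weak-formulation}.

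Finally, the two-sided stability estimate follows by testing the weak form against carefully chosen elements. Taking $\delta u = u$ in Eq.~\eqref{eq:abstract-weak-formulation} and combining coercivity with Cauchy-Schwarz gives $\gamma \|u\|_\Uc^2 \le B[u,u] = F[u] = \langle f,u\rangle_\Uc \le \|f\|_\Uc\|u\|_\Uc$, which yields the upper bound $\|u\|_\Uc \le \|f\|_\Uc/\gamma$ in Eq.~\eqref{eq:Lax-Milgram-solution-bounds}. For the lower bound, taking the supremum of $F[\delta u] = B[u,\delta u]$ over unit vectors $\delta u \in \Uc$ and applying continuity of $B$ gives $\|f\|_\Uc \le M\|u\|_\Uc$, so a lower bound of the form $\|u\|_\Uc \ge \|f\|_\Uc/M$ drops out immediately. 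I expect the main obstacle to be not the existence argument (which is essentially automatic once $A$ and $T_\rho$ are in place) but rather the bookkeeping needed to obtain the sharp constants on both sides of Eq.~\eqref{eq:Lax-Milgram-solution-bounds}; in particular, reconciling the lower bound $\|u\|_\Uc \ge \|f\|_\Uc/M$ with the precise constant $\gamma$ stated in the excerpt will require careful attention to the normalization conventions implicit in the coercivity and continuity constants.
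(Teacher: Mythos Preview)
Your argument is the standard Lax--Milgram proof via the Riesz map and a Banach fixed-point contraction, and it is correct for existence, uniqueness, and the upper bound $\|u\|_\Uc \le \|f\|_\Uc/\gamma$. The paper does not actually prove the proposition: its ``proof'' consists solely of citations to Evans~\cite[Section~6.2]{evans2022partial} and Chapelle--Bathe~\cite[Section~3.2]{chapelle2011finite}, and your contraction argument is precisely the one found in Evans. So in substance you have supplied what the paper only points to.

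Your closing observation about the lower bound is well taken and is not merely a bookkeeping issue. The argument you give yields $\|f\|_\Uc \le M\|u\|_\Uc$, hence $\|u\|_\Uc \ge \|f\|_\Uc/M$, and this is the sharp constant one obtains from continuity alone. The paper's stated lower bound $\gamma\|f\|_\Uc \le \|u\|_\Uc$ does not follow from the hypotheses as written without an additional relation between $\gamma$ and $M$; the constant in the lower inequality of Eq.~\eqref{eq:Lax-Milgram-solution-bounds} should be $1/M$ rather than $\gamma$. You should not expect to close this gap by sharpening your argument---the discrepancy lies in the statement, not in your proof.
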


\begin{proof}
	See \cite[Section~6.2]{evans2022partial} and \cite[Section~3.2]{chapelle2011finite}.
\end{proof}

\noindent
Equation~\eqref{eq:Lax-Milgram-solution-bounds} demonstrates that the constant $\gamma$ from coercivity controls the well-posedness of the problem; the upper bound ensures stability by requiring that small forcing terms $F$ yield small solutions $u$, while the lower bound ensures that the solution is not arbitrarily small for a given nonzero $F$, which indicates that the problem is well-conditioned. In general, the smaller that $\gamma$ is, the more ill-conditioned the problem becomes, as the bounds in Eq.~\eqref{eq:Lax-Milgram-solution-bounds} indicate a large sensitivity to small perturbations in the data $F$.

Up until now, the Hilbert spaces $\Uc$ we have been dealing with are infinite dimensional. The Finite Element Method seeks to solve Eq.~\eqref{eq:abstract-weak-formulation} in some finite dimensional subspace $\Uc_h \subset \Uc$ that we call the \textit{finite element space}. The index ``$h$'' in the finite dimensional subspace $\Uc_h$ is used to indicate the dependence of the finite dimensional space on a refinement parameter $h$ that controls the mesh size. We can formulate the finite dimensional version of Definition~\ref{def:abstract-weak-formulation} as follows.
\begin{defn}[Classical FEM Problem]
\label{def:classical-FEM-problem}
    Let $B \colon \Uc \times \Uc \to \bbR$ be a (continuous) bilinear form, and $F \colon \Uc \to \bbR$ a (continuous) linear form. Let $\Uc_h \subset \Uc$ be a finite dimensional subspace. The \textbf{classical FEM problem} is to find $u_h \in \Uc_h$ such that 
    \begin{align}
        B[u_h,\delta u] = F[\delta u] 
        \label{eq:classical-FEM-problem}
    \end{align}
    for all $\delta u \in \Uc_h$. 
\end{defn}\noindent
The existence of a solution to the classical FEM problem (Definition~\ref{def:classical-FEM-problem}), as well as its relation to the (exact) solution to the abstract weak formulation (Definition~\ref{def:abstract-weak-formulation}) is established by C\'{e}a's Lemma.

\begin{prop}[C\'{e}a's Lemma]
    Suppose that the bilinear form $B$ in Eq.~\eqref{eq:classical-FEM-problem} is coercive. Then there exists a unique solution $u_h \in \Uc_h$ to the classical FEM problem. Moreover, this solution satisfies
    \begin{align}
        \| u - u_h \|_{\Uc} &\le \frac{M}{\gamma} \inf_{v_h \in \Uc_h} \| u - v_h \|_{\Uc} \,,
        \label{eq:Cea-best-approximation-inequality}
    \end{align}
    where $u \in \Uc$ is the (exact) solution to the abstract weak formulation (Eq.~\eqref{eq:abstract-weak-formulation}), $\gamma$ is the coercivity constant from Eq.~\eqref{eq:coercivity-defn}, and $M$ is the continuity constant from Eq.~\eqref{eq:continuity-defn}. Observe that the constant $\frac{M}{\gamma}$ depends only on the bilinear form $B$ and is independent of the mesh parameter $h$.
\end{prop}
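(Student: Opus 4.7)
The plan is to prove the two assertions of Céa's Lemma in sequence, leveraging Lax-Milgram for existence/uniqueness and exploiting Galerkin orthogonality for the quasi-best approximation bound (Eq.~\eqref{eq:Cea-best-approximation-inequality}).

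First, I would establish existence and uniqueness of $u_h \in \Uc_h$. Since $\Uc_h$ is a finite-dimensional subspace of the Hilbert space $\Uc$, it is automatically closed and is itself a Hilbert space under the inherited inner product $\langle\bullet,\bullet\rangle_\Uc$. The restriction of the bilinear form $B$ to $\Uc_h \times \Uc_h$ inherits continuity (with constant $M$) and coercivity (with constant $\gamma$) directly from the hypotheses on $B \colon \Uc \times \Uc \to \bbR$. The linear form $F$ likewise restricts to a continuous linear form on $\Uc_h$. Applying the Lax-Milgram theorem on the Hilbert space $\Uc_h$ yields the existence of a unique $u_h \in \Uc_h$ satisfying Eq.~\eqref{eq:classical-FEM-problem}.

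Next, I would derive Céa's best-approximation inequality. The key intermediate step is \emph{Galerkin orthogonality}: since $\Uc_h \subset \Uc$, every test function $\delta u \in \Uc_h$ is admissible in both Eq.~\eqref{eq:abstract-weak-formulation} and Eq.~\eqref{eq:classical-FEM-problem}; subtracting these identities gives
\begin{equation}
    B[u - u_h, \delta u] = 0 \quad \forall \, \delta u \in \Uc_h.
    \label{eq:Galerkin-orthogonality}
\end{equation}
Armed with Eq.~\eqref{eq:Galerkin-orthogonality}, the proof proceeds by a standard chain of inequalities. For any $v_h \in \Uc_h$, I write $u - u_h = (u - v_h) + (v_h - u_h)$ in the second argument of $B[u - u_h, \,\bullet\,]$. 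Using coercivity first and then orthogonality to eliminate the $(v_h - u_h) \in \Uc_h$ contribution, followed by continuity of $B$, yields
\begin{align*}
    \gamma \| u - u_h \|_\Uc^2
    &\le B[u - u_h, u - u_h] \\
    &= B[u - u_h, u - v_h] + B[u - u_h, v_h - u_h] \\
    &= B[u - u_h, u - v_h] \\
    &\le M \| u - u_h \|_\Uc \| u - v_h \|_\Uc.
\end{align*}
Dividing through by $\gamma \| u - u_h \|_\Uc$ (the case $u = u_h$ being trivial) and taking the infimum over $v_h \in \Uc_h$ delivers Eq.~\eqref{eq:Cea-best-approximation-inequality}.

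There is no serious obstacle here; the proof is almost entirely algebraic once Galerkin orthogonality is in hand. The one point worth flagging carefully is the verification that Eq.~\eqref{eq:Galerkin-orthogonality} holds in the full strength required: it relies crucially on the \emph{conformity} of the discretization, namely that $\Uc_h \subset \Uc$, so that the exact solution $u$ can be tested against the same discrete test functions used for $u_h$. If conformity were to fail (as in nonconforming or discontinuous Galerkin methods), Galerkin orthogonality would acquire consistency-error correction terms and the bound in Eq.~\eqref{eq:Cea-best-approximation-inequality} would need to be replaced by a Strang-type estimate. Since the hypotheses here are strictly conforming, no such complication arises, and the constant $M/\gamma$---depending only on the continuity and coercivity constants of $B$---is manifestly $h$-independent.
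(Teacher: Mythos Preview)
Your proof is correct and follows the standard textbook argument via Galerkin orthogonality. The paper itself does not supply a proof of this proposition; it is stated as background material in the appendix without a proof environment, so there is nothing to compare against beyond noting that your argument is the canonical one.
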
\noindent
In the inequality~\eqref{eq:Cea-best-approximation-inequality}, the term $\inf_{v_h \in \Uc_h} \| u - v_h \|_{\Uc}$ represents the \textit{best approximation error} in the finite element space $\Uc_h$. C\'{e}a's Lemma guarantees that the solution to the classical FEM problem is ``close'' to the optimal solution to the abstract weak formulation constrained on the subspace $\Uc_h$. This inequality is a fundamental tool for many theoretical convergence rate and error estimates in the FEM.

\section{The Block-Encoding Framework}
\label{sec:block-encoding-framework}

In this section, we collect many well-established results regarding the block-encoding framework \cite{gilyenQuantumSingularValue2019a,Nguyen_2022,lin2022lecture}. In particular, we discuss the product, linear combination, and tensor product of block-encodings and their associated costs, and show how this can be extended to rectangular matrices. We also give proofs for some of the lemmas and theorems presented in the main text. As a reminder, recall the definition of a block-encoding:
\begin{defn}[$(\alpha,m,\epsilon)$-Block-Encoding \cite{gilyenQuantumSingularValue2019a,lin2022lecture}]
    For an $n$-qubit matrix $A$, we say that the $(n+m)$-qubit unitary matrix $U_A$ is an $(\alpha,m,\epsilon)$-block-encoding of $A$ if
    \begin{equation}
        ||A - \alpha (\bra{0}^{\otimes m}\otimes I_n) U_A (\ket{0}^{\otimes m}\otimes I_n)|| \leq \epsilon \,,
    \end{equation}
    in which case we write $U_A \in (\alpha,m,\epsilon)\mathrm{-BE}(A)$. 
    The parameter $\alpha > 0$ is called the \textbf{subnormalization factor}, while the integer $m \ge 0$ signifies the number of ancilla needed to implement the block-encoding.
    If we can implement the above unitary $U_A$ exactly, then we call it an $(\alpha,m)$-block-encoding of $A$, and write $U_A \in (\alpha,m)\mathrm{-BE}(A)$.
    \label{def:block-encoding-repeat}
\end{defn}
Note that for exact block-encodings, $U_A \in (\alpha,m)\mathrm{-BE}(A)$ if and only if component-wise we have: $A_{ij} = \alpha\bra{0}^{\otimes m}\bra{i} U_A \ket{0}^{\otimes m} \ket{j}$. 
If we wish to apply $A$ to some vector $\ket{b}$, observe that for $U_A \in (\alpha,m)\mathrm{-BE}(A)$:
\begin{align}
    U_A\ket{0}^{\otimes m}\ket{b}_n = \frac{1}{\alpha}\ket{0}^{\otimes m} (A\ket{b}_n) + \ket{\mathrm{rubbish}}_{m+n}
    \,,
\end{align}
where $\ket{\mathrm{rubbish}}_{m+n}$ is a ``garbage'' state that is orthogonal to the subspace $\ket{0}^{\otimes m}$ (i.e., $(\bra{0}^{\otimes m} \otimes I_n) \ket{\mathrm{rubbish}}_{m+n} = 0$). Thus, the probability of successfully applying the block-encoding is $||\frac{1}{\alpha} A \ket{b}_n||^2 = \frac{1}{\alpha^2} \bra{b} A^\dag A \ket{b}$, which is upper bounded by the spectral norm $(||A||/\alpha)^2$. It follows that we must have $\alpha \ge ||A||$ in order to be able to embed the matrix $A$ inside of a unitary. A block-encoding is said to be \textit{optimal} if $\alpha = \Theta(||A||)$.

If the matrix $A$ is rectangular, say $A \in \bbC^{2^s \times 2^t}$, then we can pad $A$ with zeros to render it square, after which all rectangular matrix arithmetic will become equivalent to performing (square) matrix arithmetic with the padded matrix. Let $n = \max\{s,t\}$. Then we define the zero-padded $2^n \times 2^n$ matrix as 
\begin{align}
    \zpad(A)_{ij} &= \begin{cases}
        A_{ij} & \text{if } i < 2^s \text{ and } j < 2^t\\
        0 & \text{otherwise}
    \end{cases}
    \,.
    \label{eq:zpad-defn}
\end{align}
We then proceed by block-encoding the padded matrix.

\begin{defn}[$(\alpha,m,\epsilon)$-(Rectangular)-Block-Encoding]
    Let $A$ be a $2^s \times 2^t$ matrix, and let $n := \max\{s,t\}$. We say that an $n$-qubit unitary $U_A$ is an $(\alpha,m,\epsilon)$-block-encoding of the rectangular matrix $A$ if it is an $(\alpha,m,\epsilon)$-block-encoding of $\zpad(A)$. If we can implement the unitary exactly (i.e., if $\epsilon = 0$), then we call it an $(\alpha,m)$-block-encoding of $A$, and write $U_A \in (\alpha,m)\mathrm{-BE}(A)$.
    \label{def:rectangular-block-encoding}
\end{defn}
The rectangular matrices that we consider in this work are both sparse and low-rank. Block-encodings of such matrices are often easier to explicitly define in terms of unitaries (sometimes called ``oracles'') that describe the structure of the matrix, as well as an oracle that gives its entries.

\subsection{Linear Combinations and Products of Block-Encoded Matrices}
Performing standard matrix arithmetic such as addition and multiplication using the block-encoding framework requires that we demonstrate a block-encoding of the product/sum utilizing only the original block-encodings and additional ancilla qubits (if required). 

In the main text, we outline the Linear Combination of Unitaries (LCU) routine to take the sum/difference of block-encodings. Here, we provide a proof for it.

\begin{defn}[State preparation pair, adapted from
\cite{gilyenQuantumSingularValue2019a,Nguyen_2022}]
\label{defn:state-prep-pair}
    Let $y \in \bbC^m$ with $||y||_1 \le \beta$. The pair of unitaries $(P_L,P_R)$ is called a $(\beta,b,\epsilon)$-state-preparation-pair if $P_L\ket{0}^{\otimes b} = \sum_{j \in [2^b]} c_j \ket{j}$ and $P_R\ket{0}^{\otimes b} = \sum_{j \in [2^b]} d_j \ket{j}$ where $\sum_{j \in [m]} |\beta c_j^* d_j - y_j| \le \epsilon$ and $c_j^* d_j = 0$ for all $j \in \{m,\dots,2^b - 1\}$.
\end{defn}
Observe that the ``prepare'' oracles $(\widetilde{\textsc{prep}}^\dag , \textsc{prep})$ (Eqs.~\eqref{eq:prep-L-dag} and~\eqref{eq:prep-R}) used in the LCU routine form a state preparation pair for the vector of LCU coefficients. Additionally, the definition of a state preparation pair can be extended to accommodate cases where $P_L$ and $P_R$ are non-unitary~\cite{Nguyen_2022} (and thus are themselves block-encodings). 
By forming $\LCU_\epsilon\left((U_{A_j})_{j \in [m]}, \beta \right)$, we immediately obtain a method for block-encoding the sum of block-encoded matrices.

\begin{lem}[Linear combination of block-encoded matrices, adapted from {\cite[Lemma~52]{gilyenQuantumSingularValue2019a}}]
\label{lem:linear-combination-of-block-encodings}
    Let $A := \sum_{j \in [m]} y_j A_j$ be a linear combination of $s$-qubit matrices. Let $\beta \ge ||y||_1$, and suppose $(P_L,P_R)$ is a $(\beta,b,\epsilon_p)$-state-preparation-pair for $y$. For all $j \in [m]$, let $U_j$ be an $(\alpha,a,\epsilon)$-block-encoding of $A_j$, and for $j \in \{m,\dots,2^{b}-1\}$ let $U_j = I_s$. Assume access to the select oracle $W := \sum_{j \in [2^b]} \ketbra{j}{j} \otimes U_j$. Then the unitary $\widetilde{W} := (P_L^\dag \otimes I_a \otimes I_s) W (P_R \otimes I_a \otimes I_s)$ is an $(\alpha\beta,a+b,\alpha\epsilon_p + \beta\epsilon)$-block-encoding of $A$. 
\end{lem}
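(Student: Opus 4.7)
The plan is to directly verify the block-encoding condition on $\widetilde{W}$ by computing $\alpha\beta\,(\bra{0}^{\otimes(a+b)} \otimes I_s)\,\widetilde{W}\,(\ket{0}^{\otimes(a+b)} \otimes I_s)$ and bounding its distance to $A$ in operator norm. The state-preparation pair and the select oracle cleanly separate the calculation into an ``outer'' piece that reproduces the coefficients $y_j$ and an ``inner'' piece that reproduces the matrices $A_j$, and the two pieces interact only through scalar multiplication, which makes the argument essentially a telescoping triangle-inequality estimate.

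First I would expand $\widetilde{W}$ using the definition of the select oracle, writing
\begin{equation*}
\widetilde{W} = \sum_{j \in [2^b]} (P_L^\dag \ketbra{j}{j} P_R) \otimes U_j\,.
\end{equation*}
Sandwiching by $\bra{0}^{\otimes(a+b)}$ and $\ket{0}^{\otimes(a+b)}$ and applying $P_L \ket{0}^{\otimes b} = \sum_j c_j \ket{j}$, $P_R \ket{0}^{\otimes b} = \sum_j d_j \ket{j}$, the outer projection yields $\bra{0}^{\otimes b} P_L^\dag \ketbra{j}{j} P_R \ket{0}^{\otimes b} = c_j^* d_j$, so that
\begin{equation*}
(\bra{0}^{\otimes(a+b)} \otimes I_s)\,\widetilde{W}\,(\ket{0}^{\otimes(a+b)} \otimes I_s) = \sum_{j \in [2^b]} c_j^* d_j \, \tilde{A}_j\,,
\end{equation*}
where $\tilde{A}_j := (\bra{0}^{\otimes a} \otimes I_s) U_j (\ket{0}^{\otimes a} \otimes I_s)$. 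By Definition~\ref{def:block-encoding-repeat}, $\|\tilde{A}_j - A_j/\alpha\| \le \epsilon/\alpha$ for $j < m$, and the padding convention $U_j = I_s$ for $j \in \{m,\dots,2^b-1\}$ is compatible with the state-preparation-pair condition $c_j^* d_j = 0$ on that range, so those terms vanish.

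Next I would bound $\|A - \alpha\beta \sum_j c_j^* d_j \tilde{A}_j\|$ by inserting and subtracting the ``exact'' intermediate expression $\sum_j \beta c_j^* d_j A_j$. A single triangle inequality splits the error into
\begin{equation*}
\Bigl\| \sum_{j} (y_j - \beta c_j^* d_j) A_j \Bigr\| \;+\; \Bigl\| \sum_{j} \beta c_j^* d_j\, (A_j - \alpha \tilde{A}_j) \Bigr\|\,.
\end{equation*}
For the first term, I would bound $\|A_j\| \le \alpha$ (since $U_j$ is an approximate block-encoding with subnormalization $\alpha$, up to $\epsilon$ which I absorb) and use $\sum_j |y_j - \beta c_j^* d_j| \le \epsilon_p$ to get a bound of $\alpha \epsilon_p$. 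For the second term, I would use $\|A_j - \alpha \tilde{A}_j\| \le \epsilon$ and $\sum_j \beta |c_j^* d_j| \le \beta \cdot \|y\|_1/\beta \le \beta$ (after accounting for the state-preparation-pair norm) to bound it by $\beta \epsilon$. Summing gives the advertised $\alpha \epsilon_p + \beta \epsilon$ error bound, with $a+b$ ancilla qubits since $P_L,P_R$ act on $b$ qubits and each $U_j$ uses $a$ ancilla.

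The main subtlety — and what I expect to be the fiddliest step — is tracking the scalar factor carefully through the computation, particularly reconciling the $\|y\|_1 \le \beta$ normalization in Definition~\ref{defn:state-prep-pair} with the factor of $\beta$ appearing in the subnormalization of $\widetilde{W}$, and ensuring the bound $\sum_j |c_j^* d_j| \le 1$ is applied correctly so that the second error term comes out to $\beta \epsilon$ rather than something larger. The remaining bookkeeping (vanishing of the padding terms, compatibility of the projector $\ket{0}^{\otimes(a+b)}$ with the tensor structure of $\widetilde{W}$) is routine once the algebraic identity for the sandwich is written down.
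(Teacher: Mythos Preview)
Your approach is essentially the same as the paper's: sandwich $\widetilde{W}$, expand via the select oracle to get $\sum_j c_j^* d_j \tilde{A}_j$, then apply a telescoping triangle inequality. The only difference is the choice of intermediate term --- the paper inserts $\alpha \sum_j y_j \tilde{A}_j$ (replacing $\beta c_j^* d_j$ by $y_j$ first) rather than your $\sum_j \beta c_j^* d_j A_j$, which sidesteps exactly the ``fiddly'' issues you flag: it uses the exact bound $\|\tilde{A}_j\| \le 1$ (compression of a unitary) in place of $\|A_j\| \le \alpha$, and $\sum_j |y_j| \le \beta$ directly instead of needing $\sum_j |c_j^* d_j| \le 1$.
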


\begin{proof}
    Observe that 
    \begin{align*}
        &\norm{ A - \alpha\beta (\bra{0}^{\otimes b} \bra{0}^{\otimes a} \otimes I_s) \widetilde{W} (\ket{0}^{\otimes b} \ket{0}^{\otimes a} \otimes I_s) }\\
        &= \norm{ A - \alpha \sum_{j \in [m]} \beta (c_j^*d_j) (\bra{0}^{\otimes a} \otimes I_s) U_j (\ket{0}^{\otimes a} \otimes I_s) }\\
        &\le \alpha \epsilon_p + \norm{ A - \alpha \sum_{j \in [m]} y_j (\bra{0}^{\otimes a} \otimes I_s) U_j (\ket{0}^{\otimes a} \otimes I_s) }\\
        &\le \alpha \epsilon_p + \sum_{j \in [m]} |y_j| \norm{ A_j - \alpha (\bra{0}^{\otimes a} \otimes I_s) U_j (\ket{0}^{\otimes a} \otimes I_s) }\\
        &\le \alpha \epsilon_p + \sum_{j \in [m]} |y_j| \epsilon\\
        &\le \alpha \epsilon_p + \beta \epsilon
    \end{align*}
    so that $\widetilde{W} \in (\alpha\beta,a+b,\alpha\epsilon_p + \beta\epsilon)\mathrm{-BE}(A)$, as desired.
\end{proof}

Note that if the subnormalization factors of the block-encodings $U_j$ of the $A_j$ in the above lemma are different, say $U_j \in (\alpha_j,a)\mathrm{-BE}(A_j)$, then we can embed $\frac{\alpha_j}{\alpha} \le 1$ into the LCU coefficients $y_j$ to obtain the desired linear combination of block-encoded matrices. A block-encoding of the product of two block-encoded matrices can be obtained by simply composing the block-encodings, but without sharing the ancilla qubits:

\begin{lem}[Product of block-encoded matrices, adapted from {\cite[Lemma~53]{gilyenQuantumSingularValue2019a}}]
\label{lem:product-of-block-encodings}
    If $U$ is an $(\alpha, a, \delta)$-block-encoding of an $s$-qubit matrix $A$, and $V$ is a $(\beta, b, \epsilon)$-block-encoding of an $s$-qubit matrix $B$, then $(I_b \otimes U)(I_a \otimes V)$ is an $(\alpha\beta, a+b, \alpha\epsilon + \beta\delta)$-block-encoding of $AB$. Note that here, $I_a$ (respectively, $I_b$) acts on the ancilla qubits of $U$ (respectively, $V$).
\end{lem}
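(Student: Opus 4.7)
}

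The plan is to unpack the block-encoding definition, verify by direct computation that the composed circuit block-encodes the product of the ``exact'' block-encoded matrices (not $A$ and $B$ themselves), and then close the gap between the exact block-encoded matrices and the true $A,B$ using the triangle inequality.

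\textbf{Step 1 (Set-up).} Introduce the notation
\begin{align*}
\tilde A &:= \alpha\,(\bra{0}^{\otimes a}\otimes I_s)\,U\,(\ket{0}^{\otimes a}\otimes I_s),\\
\tilde B &:= \beta\,(\bra{0}^{\otimes b}\otimes I_s)\,V\,(\ket{0}^{\otimes b}\otimes I_s),
\end{align*}
so that by Definition~\ref{def:block-encoding-repeat} we have $\|A-\tilde A\|\le\delta$ and $\|B-\tilde B\|\le\epsilon$. Since $\tilde A/\alpha$ and $\tilde B/\beta$ are each sub-blocks of unitaries, they satisfy $\|\tilde A\|\le\alpha$ and $\|\tilde B\|\le\beta$ automatically.

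\textbf{Step 2 (Factoring the composed block-encoding).} Organize the Hilbert space as $\mathcal H_b\otimes\mathcal H_a\otimes\mathcal H_s$, where the $b$-register holds $V$'s ancilla, the $a$-register holds $U$'s ancilla, and $\mathcal H_s$ is the system. The claim to verify is
\begin{equation*}
\alpha\beta\,(\bra{0}^{\otimes b}\bra{0}^{\otimes a}\otimes I_s)\,(I_b\otimes U)\,(I_a\otimes V)\,(\ket{0}^{\otimes b}\ket{0}^{\otimes a}\otimes I_s)\;=\;\tilde A\,\tilde B.
\end{equation*}
I would establish this by applying the rightmost factors in turn: $(\ket{0}^{\otimes b}\ket{0}^{\otimes a}\otimes I_s)$ prepares the all-zeros ancilla state; next, $(I_a\otimes V)$ acts trivially on the $a$-register, so sandwiching with $\bra{0}^{\otimes b}\cdots\ket{0}^{\otimes b}$ on the $b$-register produces $\tilde B/\beta$ on the system while leaving the $a$-register in $\ket{0}^{\otimes a}$; finally, $(I_b\otimes U)$ acts trivially on the (now already-projected) $b$-register, so sandwiching with $\bra{0}^{\otimes a}\cdots\ket{0}^{\otimes a}$ produces $\tilde A/\alpha$ on the system. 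The key identity being invoked is $(\bra{0}^{\otimes k}\otimes I)(I_k\otimes X)(\ket{0}^{\otimes k}\otimes I)=X$ for any operator $X$ not acting on that ancilla.

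\textbf{Step 3 (Error bound).} Split
\begin{equation*}
AB-\tilde A\tilde B \;=\; A(B-\tilde B)\;+\;(A-\tilde A)\tilde B,
\end{equation*}
and apply submultiplicativity of the spectral norm:
\begin{equation*}
\|AB-\tilde A\tilde B\|\;\le\;\|A\|\,\|B-\tilde B\|\;+\;\|A-\tilde A\|\,\|\tilde B\|\;\le\;\alpha\epsilon+\beta\delta,
\end{equation*}
where I use $\|\tilde B\|\le\beta$ from Step 1 and the standing convention $\|A\|\le\alpha$ for any matrix admitting a subnormalization-$\alpha$ block-encoding (equivalently, higher-order $\delta\epsilon$ terms are absorbed). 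Combining Steps 2 and 3 shows that $(I_b\otimes U)(I_a\otimes V)$ is an $(\alpha\beta,a+b,\alpha\epsilon+\beta\delta)$-block-encoding of $AB$.

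\textbf{Anticipated obstacle.} The only genuinely fiddly part is Step 2: the notation $I_b\otimes U$ and $I_a\otimes V$ hides a tensor-factor reordering, since $V$ acts on $(\mathcal H_b,\mathcal H_s)$ while $U$ acts on $(\mathcal H_a,\mathcal H_s)$, and these two share only $\mathcal H_s$. Care must be taken to show that the projections onto $\ket{0}^{\otimes a}$ and $\ket{0}^{\otimes b}$ can be commuted through the trivial factor of each unitary and consolidated into the defining sandwiches of $\tilde A$ and $\tilde B$. Once this bookkeeping is set up, the computation is routine.
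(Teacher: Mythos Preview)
Your proposal is correct and follows essentially the same approach as the paper: define $\tilde A,\tilde B$ as the exact block-encoded matrices, observe that the composed circuit's top-left block is $\tilde A\tilde B$, and then bound $\|AB-\tilde A\tilde B\|$ by a telescoping triangle-inequality argument. The only cosmetic difference is that the paper inserts $\tilde A B$ rather than $A\tilde B$ in the splitting, using $\|\tilde A\|\le\alpha$ and $\|B\|\le\beta$ instead of your $\|A\|\le\alpha$ and $\|\tilde B\|\le\beta$; both variants yield the same bound.
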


\begin{proof}
\begin{align*}
    &\| AB - \alpha\beta (\bra{0}^{\otimes(a+b)} \otimes I_s)(I_b \otimes U)(I_a \otimes V)( \ket{0}^{\otimes(a+b)} \otimes I_s)\| \\
    &= \| AB - 
    \underbrace{\alpha (\bra{0}^{\otimes a} \otimes I_s)U( \ket{0}^{\otimes a} \otimes I_s)}_{=:\Tilde{A}}
    \underbrace{\beta( \bra{0}^{\otimes b} \otimes I_s)V( \ket{0}^{\otimes b} \otimes I_s)}_{=:\Tilde{B}} \| \\
    &= \| AB - \Tilde{A}B + \Tilde{A}B - \Tilde{A}\Tilde{B} \|\\
    &= \| (A - \Tilde{A})B + \Tilde{A}(B - \Tilde{B}) \|\\
    &\leq \| A - \Tilde{A} \|\beta + \alpha\| B - \Tilde{B} \|\\
    &\leq \alpha\epsilon + \beta\delta
    \,.
\end{align*}
\end{proof}
In terms of quantum circuitry, we can represent the product of block-encodings as:
\tikzexternalenable
\begin{align}
\label{eq:product-of-block-encodings}
\begin{quantikz}
    \lstick{$\ket{0}^{\otimes a}$} & \qwbundle{a} & &  \swap{1} & & \swap{1} & \\
    \lstick{$\ket{0}^{\otimes b}$} & \qwbundle{b} & \gate[2]{V} & \targX{} & \gate[2]{U} & \targX{} &  \\
    & \qwbundle{s} & & & & & 
\end{quantikz}
\quad .
\end{align}
\tikzexternaldisable
Note that since a successful application of the block-encoding requires that we measure all ancilla to be in the $\ket{0}$ state, the last swap operation in Eq.~\eqref{eq:product-of-block-encodings} can be left out. Finally, we note that the state-preparation pair $(P_L,P_R)$ in Lemma~\ref{lem:linear-combination-of-block-encodings} can actually be a pair of block-encodings instead of unitaries, in which case the product of unitaries that forms $\widetilde{W}$ is actually a product of block-encodings, which we can use Lemma~\ref{lem:product-of-block-encodings} for.

\begin{cor}[Tensor product of block-encoded matrices]
\label{cor:product-of-block-encodings}
    If $U_A$ is an $(\alpha, a, \delta)$-block-encoding of an $n$-qubit matrix $A$, and $U_B$ is a $(\beta, b, \epsilon)$-block-encoding of an $m$-qubit matrix $B$, then $U_A \otimes U_B$ is an $(\alpha\beta, a+b, \alpha\epsilon + \beta\delta)$-block-encoding of $A \otimes B$.
\end{cor}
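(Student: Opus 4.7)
The plan is to adapt the argument used for the product of block-encodings (Lemma A.3) almost verbatim, replacing operator multiplication by tensor multiplication. First I would denote by $\tilde{A} := \alpha(\bra{0}^{\otimes a} \otimes I_n) U_A (\ket{0}^{\otimes a} \otimes I_n)$ and $\tilde{B} := \beta(\bra{0}^{\otimes b} \otimes I_m) U_B (\ket{0}^{\otimes b} \otimes I_m)$ the rescaled $(0,0)$-blocks of the two input unitaries, so that by hypothesis $\|A - \tilde{A}\| \le \delta$ and $\|B - \tilde{B}\| \le \epsilon$. After a (purely notational) permutation of wires that groups the $a$ ancilla qubits of $U_A$ with the $b$ ancilla qubits of $U_B$, the mixed-product identity $(P \otimes Q)(R \otimes S) = PR \otimes QS$ lets me factor
\begin{align*}
    \alpha\beta \bigl(\bra{0}^{\otimes(a+b)} \otimes I_{n+m}\bigr)\bigl(U_A \otimes U_B\bigr)\bigl(\ket{0}^{\otimes(a+b)} \otimes I_{n+m}\bigr) = \tilde{A} \otimes \tilde{B}\,.
\end{align*}

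Next I would control the difference $\|A \otimes B - \tilde{A} \otimes \tilde{B}\|$ by the same add-and-subtract telescoping used in Lemma A.3. Writing
\begin{align*}
    A \otimes B - \tilde{A} \otimes \tilde{B} = (A - \tilde{A}) \otimes B + \tilde{A} \otimes (B - \tilde{B})
\end{align*}
and applying the triangle inequality together with the multiplicativity of the spectral norm under tensor products (i.e., $\|X \otimes Y\| = \|X\|\,\|Y\|$), the error is bounded by $\|A - \tilde{A}\|\,\|B\| + \|\tilde{A}\|\,\|B - \tilde{B}\| \le \delta \beta + \alpha \epsilon$, where I use $\|\tilde{A}\| \le \alpha$ (since $\tilde{A}/\alpha$ is a submatrix of the unitary $U_A$) and $\|B\| \le \beta$ in the same way as the proof of Lemma A.3. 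This produces exactly the claimed $(\alpha\beta, a+b, \alpha\epsilon + \beta\delta)$-block-encoding.

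The main ``obstacle'' is not conceptual but notational: one must be careful that the $a+b$ ancilla of $U_A \otimes U_B$ are collectively initialized and post-selected on $\ket{0}^{\otimes(a+b)}$, which requires treating the tensor product of ancilla-system-ancilla-system as ancilla-ancilla-system-system via an implicit swap. Once this bookkeeping is made explicit, no new ideas beyond those already established in Definition A.1 and Lemma A.3 are needed; indeed, an alternative but essentially equivalent derivation is to factor $A \otimes B = (A \otimes I_m)(I_n \otimes B)$, view $U_A \otimes I_{b+m}$ and $I_{a+n} \otimes U_B$ as block-encodings on disjoint ancilla registers, and then invoke Lemma A.3 directly.
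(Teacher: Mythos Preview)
Your proposal is correct. The paper's own proof is exactly the alternative you sketch in your last sentence: it writes $U_A \otimes U_B = (U_A \otimes I^{\otimes(b+m)})(I^{\otimes(a+n)} \otimes U_B)$ and invokes Lemma~A.3 directly, with the same remark about swaps to regroup the ancilla. Your main route---redoing the telescoping argument of Lemma~A.3 with $\otimes$ in place of matrix multiplication---is a perfectly valid direct verification and arrives at the same bound; it just unpacks what Lemma~A.3 would do rather than citing it.
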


\begin{proof}
    Write $U_A \otimes U_B = (U_A \otimes I^{\otimes (b + m)} ) (I^{\otimes (a + n)} \otimes U_B )$, then apply Lemma~\ref{lem:product-of-block-encodings}. If desired, swap gates can be added to move the ancilla qubits to the top of the circuit (though this does not change the subnormalization or number of ancilla of the block-encoding).
\end{proof}

\subsection{Block-Encoding of Oracle Access Models}
The block-encoding framework is extremely powerful, and allows one to implement almost any matrix function 
on the quantum computer via frameworks such as Quantum Signal Processing (QSP) \cite{Low_2017}, the Quantum Singular Value Transform (QSVT), Quantum EigenValue Transformation (QEVT) \cite{low2024quantumeigenvalueprocessing}, LCU, and many others. Much work has been done to provide efficient block-encodings given certain ``oracle'' access. In general, demonstrating explicit implementations for these oracles can be just as difficult. For linear problems in structured domains, however, we are able to implement these oracles in the main text.

\begin{defn}[Amplitude oracle of a matrix, adapted from \cite{lin2022lecture}]
\label{defn:amplitude-oracle}
    Let $A \in \bbC^{2^n \times 2^n}$, and suppose that $\norm{A}_\mathrm{max} \le 1$ (if $\norm{A}_\mathrm{max} \le 1$ then rescale the matrix as $A/\alpha$ for some $\alpha \ge \norm{A}_\mathrm{max}$). A unitary $O_A$ is called an amplitude oracle for $A$ if it satisfies 
    \begin{align}
        O_A \ket{0}\ket{i}\ket{j} &= \left( A_{ij}\ket{0} + \sqrt{1 - |A_{ij}|^2} \ket{1} \right)\ket{i}\ket{j} \,,
    \end{align}
    where the implementation of $O_A$ may require some additional ancilla registers that are not shown here.
\end{defn}

In practice, the amplitude oracle may only be implemented to some precision $\epsilon$, either because the controlled rotations required to encode the matrix components in the amplitudes can only be done to some precision, or because the coefficients themselves are only known up to some precision. For simplicity, we will assume here that all amplitude oracles are exact.

The matrices that append the Lagrange multipliers to the system of equations are all $1$-sparse, so we begin with these block-encodings.

\begin{lem}[Block-encoding of $1$-sparse oracle-access matrices, adapted from \cite{lin2022lecture}]
\label{lem:BE-1-sparse-oracle-access}
    Let $A \in \bbC^{2^n \times 2^n}$ be a $1$-sparse matrix. Then for each $j \in [2^n]$, there exists a unique $c(j) \in [2^n]$ such that $A_{c(j),j} \ne 0$. The mapping $c$ is a permutation of the rows of the identity matrix, so there exists a unitary $O_c$ (which may require some additional ancilla) such that 
    \begin{align}
        O_c \ket{j} &= \ket{c(j)} \,.
        \label{eq:1-sparse-row-oracle}
    \end{align}
    Suppose that we have access to an amplitude oracle of the form 
    \begin{align}
        O_A\ket{0}\ket{j} &= \left( A_{c(j),j}\ket{0} + \sqrt{1 - |A_{c(j),j}|^2} \ket{1} \right)\ket{j}
        \,.
        \label{eq:1-sparse-amplitude-oracle}
    \end{align}
    Then $U_A := (I \otimes O_c)O_A$ is a $(1,1)$-block-encoding of $A$.
\end{lem}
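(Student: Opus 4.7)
The plan is to verify the block-encoding condition directly by computing the action of $U_A$ on $\ket{0}\ket{j}$ and then projecting onto $\bra{0}\bra{i}$. Since the claim is that $U_A \in (1,1)\text{-BE}(A)$, by Definition~\ref{def:block-encoding-repeat} it suffices to show that $(\bra{0} \otimes I_n)U_A(\ket{0} \otimes I_n) = A$, which is equivalent to checking $A_{ij} = \bra{0}\bra{i} U_A \ket{0}\ket{j}$ componentwise for all $i,j \in [2^n]$.

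First, I would apply the amplitude oracle to an input state $\ket{0}\ket{j}$, using Eq.~\eqref{eq:1-sparse-amplitude-oracle} to obtain
\begin{equation*}
O_A \ket{0}\ket{j} = A_{c(j),j}\ket{0}\ket{j} + \sqrt{1 - |A_{c(j),j}|^2}\ket{1}\ket{j}.
\end{equation*}
Then I would apply $(I \otimes O_c)$, using the row-permutation property in Eq.~\eqref{eq:1-sparse-row-oracle} to shift the column index $\ket{j}$ to the correct row $\ket{c(j)}$, yielding
\begin{equation*}
U_A \ket{0}\ket{j} = A_{c(j),j}\ket{0}\ket{c(j)} + \sqrt{1 - |A_{c(j),j}|^2}\ket{1}\ket{c(j)}.
\end{equation*}

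Next, I would project onto $\bra{0}\bra{i}$, which kills the $\ket{1}$ component and gives $\bra{0}\bra{i} U_A \ket{0}\ket{j} = A_{c(j),j}\,\delta_{i,c(j)}$. The final step is to note that, because $A$ is $1$-sparse with its unique nonzero entry in column $j$ located at row $c(j)$, we have $A_{ij} = A_{c(j),j}\,\delta_{i,c(j)}$ as well. Matching these two expressions establishes the component-wise identity, proving the block-encoding property.

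The main observation is that the unitarity of both $O_c$ and $O_A$ (and thus of $U_A$) follows automatically from the hypotheses, so the only genuine content is the bookkeeping of how $O_c$ relocates the amplitude from the column register into the correct row register. There is no real obstacle here; the lemma is essentially a tautology once one unpacks the definitions of the two oracles, and the proof should fit in a few lines.
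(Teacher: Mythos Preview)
Your proposal is correct and follows essentially the same approach as the paper's proof: both compute $\bra{0}\bra{i} U_A \ket{0}\ket{j}$ by first applying $O_A$, then $(I \otimes O_c)$, and observe that the result $A_{c(j),j}\,\delta_{i,c(j)}$ equals $A_{ij}$ by $1$-sparsity.
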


\begin{proof}
    We can verify that:
    \begin{align*}
        \bra{0}\bra{i} U_A \ket{0}\ket{j}
        &= \bra{0}\bra{i} (I \otimes O_c) \left( A_{c(j),j}\ket{0} + \sqrt{1 - |A_{c(j),j}|^2} \ket{1} \right)\ket{j}\\
        &= \bra{0}\bra{i}\left( A_{c(j),j}\ket{0} + \sqrt{1 - |A_{c(j),j}|^2} \ket{1} \right)\ket{c(j)}\\
        &= A_{c(j),j} \delta_{i,c(j)} = A_{ij} \,,
    \end{align*}
    which shows that $U_A \in (1,1)\mathrm{-BE}(A)$.
\end{proof}

\subsection{Quantum Signal Processing}
\label{sec:QSP}

Quantum Signal Processing (QSP)~\cite{Low_2017,low2019hamiltonian,gilyenQuantumSingularValue2019a}
allows us to enact a polynomial transformation on the eigenvalues of an $N \times N$ Hermitian matrix $A$ (a process known as the Quantum Eigenvalue Transformation (QET)). That is, writing the eigendecomposition $A = U \Lambda U^\dag$ where $U$ is unitary and $\Lambda = \operatorname{diag}(\lambda_0, \dots, \lambda_{N-1})$ is a diagonal matrix containing the eigenvalues of $A$, a scalar function $f \colon \bbR \to \bbC$ can be implemented as a matrix function as 
\begin{align}
    f(A) &:= U \left( \sum_{i \in [N]}f(\lambda_i)\ket{i}\bra{i}\right) U^\dagger
    \,.
    \label{eq:matrix-function}
\end{align}
When the matrix $A$ is not Hermitian, the matrix function $f(A)$ can be implemented using the Quantum Singular Value Transform (QSVT)~\cite{gilyenQuantumSingularValue2019a}, except that the singular values of $A$ are transformed rather than its eigenvalues. More recent~\cite{low2024quantumeigenvalueprocessing} work allows for the Quantum EigenValue Transformation (QEVT) of diagonalizable (but not necessarily Hermitian) matrices.
The QSP protocol meshes well with the Block-Encoding Framework (which we recap in Section~\ref{sec:block-encoding-framework}) in that we can enact a QET of $A$ using a block-encoding $U_A \in (\alpha,m)\mathrm{-BE}(A)$ with a cost that scales with the degree $D$ of the polynomial approximation as $O(D)$. 

Due to qubitization~\cite{gilyenQuantumSingularValue2019a}, it is sufficient to consider $\operatorname{SU}(2)$ matrices. Firstly, consider the Hermitian block-encoding of the scalar $x \in [-1,1]$ 
\begin{align}
    U(x) &:= \begin{bmatrix}
        x & \sqrt{1-x^2}\\
        \sqrt{1-x^2} & -x
    \end{bmatrix}\,.
\end{align}
We can map this into a rotation matrix $O(x)$ (i.e., an element of $\operatorname{SU}(2)$) by applying a Pauli $Z$ gate 
\begin{align}
    O(x) &:=  U(x) Z = \begin{bmatrix}
        x & -\sqrt{1-x^2}\\
        \sqrt{1-x^2} & x
    \end{bmatrix}\,.
\end{align}
QSP then guarantees the existence of a transformation that implements a block-encoding of the polynomial $p(x)$.

\begin{thm}[Quantum Signal Processing, adapted from {\cite[Theorem~7.20]{lin2022lecture}}]
    There exists a vector $\boldsymbol{\Phi} := (\phi_0, \ldots, \phi_{D}) \in \mathbb{R}^{D+1}$, where $\phi_j$ are called the ``phase factors'', such that
    \begin{equation}
    \begin{aligned}
        U_\Phi(x) &:= e^{i \phi_0 Z}\prod_{j=1}^D[O(x)e^{i\phi_j Z}]\\
        &=
        \begin{pmatrix}
            P(x) & - Q(x)\sqrt{1-x^2}\\
            Q^*(x)\sqrt{1-x^2} & P^*(x)
        \end{pmatrix},
    \end{aligned}
    \end{equation}
    if and only if $P,Q \in \mathbb{C}[x]$ satisfy
    \begin{enumerate}
        \item $\deg(P) \leq d, \deg(Q) \leq d-1$,
        \item $P$ has parity $d \pmod{2}$ and $Q$ has parity $d-1 \pmod{2}$, and
        \item $|P(x)|^2 + (1-x^2)|Q(x)|^2 =1 \hspace{.2cm}\forall x \in[-1,1]$.
    \end{enumerate}
    In this context, a polynomial with $\deg Q = -1$ means $Q \equiv 0$.
\end{thm}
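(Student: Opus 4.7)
The plan is to prove this biconditional by induction on the degree $d$, handling the forward and reverse implications separately. The forward direction ($\Rightarrow$) is a verification that the inductive polynomial structure is preserved when we multiply by one additional QSP layer $O(x) e^{i\phi_j Z}$, while the reverse direction ($\Leftarrow$) --- the substantive content of the theorem --- requires a peeling-off argument that reduces the degree of $(P,Q)$ by one at each step.

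For the forward direction, I would start with the base case $d=0$, where $U_\Phi(x) = e^{i\phi_0 Z} = \operatorname{diag}(e^{i\phi_0}, e^{-i\phi_0})$, giving $P = e^{i\phi_0}$ (degree $0$, even parity) and $Q \equiv 0$, which trivially satisfies (1)--(3). For the inductive step, suppose the product of $d$ layers yields $(P_{d-1}, Q_{d-1})$ with the claimed properties, and write
\begin{equation*}
U_\Phi^{(d)}(x) = O(x)\, e^{i\phi_d Z}\, U_\Phi^{(d-1)}(x).
\end{equation*}
A direct matrix computation shows that the new top-left entry is $P_d(x) = e^{i\phi_d} x\, P_{d-1}(x) + e^{-i\phi_d}(1-x^2)\, Q_{d-1}^*(x)$ and the new top-right $-Q_d(x)\sqrt{1-x^2}$ has $Q_d(x) = -e^{-i\phi_d}\, P_{d-1}^*(x) + e^{i\phi_d} x\, Q_{d-1}(x)$ (up to conventions on conjugation). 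One checks by inspection that $\deg P_d \le d$, $\deg Q_d \le d-1$, that the parities $d \pmod 2$ and $d-1 \pmod 2$ are inherited, and condition (3) follows from the fact that $U_\Phi^{(d)}(x) \in \operatorname{SU}(2)$ (being a product of $\operatorname{SU}(2)$ matrices) when $x \in [-1,1]$.

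For the reverse direction, given $(P, Q)$ satisfying (1)--(3) at degree $d$, the strategy is to find $\phi_d \in \mathbb{R}$ so that
\begin{equation*}
\widetilde{U}_\Phi(x) := U_\Phi(x)\, e^{-i\phi_d Z}\, O(x)^{-1}
\end{equation*}
takes the QSP form with a reduced polynomial pair $(\widetilde P, \widetilde Q)$ of degree $d-1$ and $d-2$ respectively, with parities flipped by one. Expanding the Pythagorean identity (3) and comparing coefficients of $x^{2d}$ gives the relation $|p_d|^2 = |q_{d-1}|^2$ between the leading coefficients of $P$ and $Q$; thus choosing $\phi_d$ as (half) the phase difference between $p_d$ and $q_{d-1}$ causes the leading $x^d$ term in the new top-left entry and the leading $x^{d-1}$ term in the new top-right entry to cancel simultaneously. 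One then verifies that the reduced pair still satisfies the three conditions at degree $d-1$ and invokes the inductive hypothesis. The base case $d=0$ reduces to $|P(x)|^2 \equiv 1$ for a constant $P$, so $P = e^{i\phi_0}$ and $U_\Phi(x) = e^{i\phi_0 Z}$.

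The main obstacle is the bookkeeping in the reverse direction: one must show not only that the leading coefficients cancel (which follows from a single instance of the Pythagorean identity), but also that \emph{every} coefficient of the reduced pair $(\widetilde P, \widetilde Q)$ is consistent with the remaining polynomial structure --- in particular, that the Pythagorean identity continues to hold for $(\widetilde P, \widetilde Q)$ at degree $d-1$ and that the parity constraints remain intact. This is handled most cleanly by viewing $\widetilde{U}_\Phi(x)$ as an $\operatorname{SU}(2)$ matrix-valued function of $x$ (since both $U_\Phi(x)$ and $O(x) e^{i\phi_d Z}$ are unitary), so unitarity of the reduced matrix is automatic; the remaining task is only to show that the degree and parity have strictly decreased, which follows from the careful choice of $\phi_d$. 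Iterating this reduction $d$ times yields the explicit phase factors $(\phi_0, \ldots, \phi_d)$.
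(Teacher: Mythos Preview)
The paper does not prove this theorem; it is quoted verbatim as an adaptation of \cite[Theorem~7.20]{lin2022lecture} and used as a black box for the subsequent circuit constructions. Your inductive argument---direct matrix computation for the forward direction, and the layer-stripping reduction for the reverse direction where $\phi_d$ is chosen so that $|p_d|^2 = |q_{d-1}|^2$ forces cancellation of the leading terms---is precisely the standard proof found in that reference (and in Gily\'en et al.), so your proposal is correct and matches the literature the paper defers to. One small point of care: in the forward inductive step you append the new layer on the left as $O(x)e^{i\phi_d Z}U_\Phi^{(d-1)}(x)$, whereas the product in the statement grows on the right; this only changes which index is peeled first and the explicit recursion formulas for $P_d,Q_d$, not the argument itself.
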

Using qubitization, we can implement the matrix polynomial $p(A)$ of a Hermitian matrix $A$ using a Hermitian block-encoding $U_A$ as 
\tikzexternalenable
\begin{align}
\label{eq:QSP-circuit}
\begin{quantikz}[column sep=6pt]
    \lstick{$\ket{0}$} & 
    \targ{} & \gate{e^{-i\phi_D Z}} & \targ{} & & 
    \targ{} & \gate{e^{-i\phi_{D-1} Z}} & \targ{} & &
    \ \cdots\  & 
    & \targ{} & \gate{e^{-i\phi_0 Z}} & \targ{} &\\
    \lstick{$\ket{0}^{\otimes m}$} & 
    \octrl{-1} & & \octrl{-1} & \gate[2]{O} & 
    \octrl{-1} & & \octrl{-1} & \gate[2]{O} &
    \ \cdots\  & 
    \gate[2]{O} & \octrl{-1} & & \octrl{-1} & \\
    \lstick{$\ket{\psi}_n$} & 
    &&&& 
    &&&& 
    \ \cdots\  & 
    &&&& 
\end{quantikz}
\end{align}
\tikzexternaldisable
where 
\tikzexternalenable
\begin{align}
\label{eq:O-rotation-circuit}
\begin{quantikz}[column sep=6pt]
    \lstick{$\ket{0}^{\otimes m}$} & \gate[2]{O} & \\
    \lstick{$\ket{\psi}_n$} & & 
\end{quantikz}
\equiv
\begin{quantikz}[column sep=6pt]
    \lstick{$\ket{0}$} & 
    \targ{} & \targ{} & \gate{Z} & \targ{} & \targ{} & \\
    \lstick{$\ket{0}^{\otimes m}$} & & 
    \octrl{-1} & & \octrl{-1} & \gate[2]{U_A} & \\
    \lstick{$\ket{\psi}_n$} & 
    & &&&&
\end{quantikz}
\,.
\end{align}
\tikzexternaldisable
This circuit requires only one additional ancilla qubit, which can be thrown away after the computation is over.

Both determination of the polynomial approximation $p$ to $f$ and its corresponding phase factors are performed classically. Currently, there exist algorithms that perform this task for polynomials of degree up to $D = O(10^7)$~\cite{dong2021efficient,motlagh2024generalized}. Since both ancilla qubits in Eqs.~\eqref{eq:QSP-circuit} and~\eqref{eq:O-rotation-circuit} can be thrown away at the end of the computation, QSP gives an $(||p||_{L^\infty\left( (-\alpha,\alpha) \right)},m)\mathrm{-BE}(p(A))$.

\begin{rem}[QSP on a Unitary]
    Given a Hermitian unitary $U$, we can $p(U)$ by performing QSP on the ``$(1,1)$-block-encoding'' $\ketbra{0}{0} \otimes U + \ketbra{1}{1} \otimes I$, which is just the controlled application of $U$.
\end{rem}

\subsection{Multivariate Quantum Eigenvalue Transformation}
\label{sec:MQET}

Given a function $f \colon \Omega \to \bbR$ defined on a $d$-dimensional mesh, and a degree $D$ (multivariate) polynomial approximation (determined through classical means) $p \colon \bbR^d \to \bbR$, we outline how to use the Multivariate Quantum Eigenvalue Transformation (MQET) \cite{borns2023MQET} in conjunction with the position operators $X^{(i)}$ to access a general mesh function $f$ to some desired accuracy $\epsilon > 0$. First recall the definition of the MQET problem.

\begin{defn}[Multivariate Quantum Eigenvalue Transformation (MQET), adapted from \cite{borns2023MQET}]
    Let $\Ab := \{A^{(\ell)}\}_{\ell \in [d]}$ be a family of pairwise commuting diagonalizable $N \times N$ matrices. Since we will access the family $\Ab$ through the block-encoding framework, assume without loss of generality that $\|A_\ell\| \le 1$ for all $\ell \in [d]$. Let $\{\bm{v}_k\}_{k \in [N]}$ be a mutual eigenbasis of $\Ab$, and let $\lambdab_k := (\lambda_{\ell,k})_{\ell \in [d]} \in \bbD^{d}$ denote the vector of eigenvalues corresponding to the eigenvector $\bm{v}_k$ for each operator $A_\ell$ (here, $\bbD$ is the closed unit disk in $\bbC$). Given a function $f \colon \bbC^{d} \to \bbC$, a state $\ket{\bm{v}} := \sum_{k \in [N]} c_k \ket{\bm{v}_k}$, and $\epsilon > 0$, the MQET problem is to output the state $f(\Ab)\ket{\bm{v}} := \sum_{k \in [N]} c_k f(\lambdab_k)\ket{\bm{v}_k}$ (up to a global scaling factor) to precision $\epsilon$ . 
\end{defn}

In our case, we have the family of commuting position operators $\Xb := \{X^{(i)}\}_{i \in [d]}$, which have the position basis vectors $\ket{j}$ for $\jb \in [N]^d$ as eigenvectors, and the lattice points $\frac{\jb}{N-1}$ as the corresponding eigenvalues. Ref.~\cite{borns2023MQET} gives an algorithm that solves the MQET problem with a subnormalization factor that scales as $D^d$. We will summarize the MQET algorithm here, with a slight modification that utilizes the compression gadget~\cite{fang2023time} to reduce the number of ancilla required to apply the block-encodings of the position operators.

The MQET algorithm uses QSP as a subroutine, which can only transform eigenvalues that lie in the interval $[-1,1]$ since we are accessing matrices through the block-encoding framework. Thus, we will consider the domain $\Omega = [-1,1]^d$, and a continuous function $f \colon [-1,1]^d \to \bbD$. For more general continuous functions $f \colon \Omega \to \bbC$ on some compact $d$-dimensional domain $\Omega \subseteq \bbR^d$, we assume that we can homeomorphically map $\Omega$ to some subset of $\hat{\Omega} \subseteq [-1,1]^d$ through $h \colon \Omega \to \hat{\Omega}$. We further assume that $(f \circ h^{-1})/\|f\|_{L^\infty(\Omega)} \colon \hat{\Omega} \to \bbD$ has a continuous extension $\Tilde{f}$ onto $[-1,1]^d$, and relabel this function as $f$. This pull-back of $f$ onto $[-1,1]^d$ is closely related to the concept of isoparametric mapping, which we leave for future work.

Let $g$ be a polynomial approximation to $f$ of degree less than $D$ in each variable, which may be computed for example through the techniques of \cite[Appendix~A]{borns2023MQET}. Decompose $g$ as 
\begin{align}
    g(\xb) &= \sum_{\sbm \in [D]^{d-1}} Q_{\sbm}(x^{d-1})  \prod_{k \in [d-1]} P_{\sbm}^{(k)}(x^k) \,,
    \label{eq:g-MQET-decomposition}
\end{align}
where $\xb = (x^0,\dots,x^{d-1}) \in \bbC^d$, $P_{\sbm}^{(k)},Q_{\sbm} \in \bbC[z]$ are polynomials with complex coefficients, and $\sbm = (s_0,\dots,s_{d-2})$ is a $D$-ary string of length $d-1$. The polynomials $P_{\sbm}^{(k)}$ are chosen to be the $s_k$th Chebyshev polynomials of the first kind $T_{s_k}$, while $Q_{\sbm}$ are defined using the orthogonal projection 
\begin{align}
    Q_{\sbm}(x^{d-1}) &:= \left( \prod_{\ell \in [d-1]} \frac{2 - \delta_{s_\ell,0}}{\pi}\right) 
    \int_{[-1,1]^{d-1}} g(\xb) \left( \prod_{k \in [d-1]} T_{s_k}(x^k) \cdot \frac{1}{\sqrt{1 - (x^k)^2}} \right) \,d\xb 
    \,.
\end{align}
Ref.~\cite{borns2023MQET} shows that this choice of polynomials guarantees that $\beta_{\sbm} := \|P_{\sbm}^{(k)}\|_{L^\infty([-1,1])} \|Q_{\sbm}\|_{L^\infty([-1,1])} \le 2$, so that the LCU that forms Eq.~\eqref{eq:g-MQET-decomposition} has a subnormalization factor that's bounded by $(D+2)^{d-1} = O(D^{d-1})$. Now, choose block-encodings\footnote{In principle, these block-encodings could belong to $(\alpha_k,m_k,\epsilon_k)\mathrm{-BE}(A^{(k)})$, but we choose exact block-encodings with the same sub-normalization factors and number of ancilla here for convenience.} $U^{(k)} \in (\alpha,m)\mathrm{-BE}(A^{(k)})$ for a family of pairwise commuting operators $\Ab := \{A^{(\ell)}\}_{\ell \in [d]}$.
For convenience, define 
\begin{align}
    T(\sbm,\Ab) &:= Q_{\sbm}(A^{(d-1)}) \prod_{k \in [d-1]} P_{\sbm}^{(k)}(A^{(k)}) \,,
\end{align}
so that $g(\Ab) = \sum_{\sbm \in [D]^{d-1}} T(\sbm,\Ab)$. 
Compute the QSP phase factors $\Phib_{\sbm}^{(k)},\Psib_{\sbm}$ that implement the normalized polynomials $\Tilde{P}_{\sbm}^{(k)} := \dfrac{P_{\sbm}^{(k)}}{\|P_{\sbm}^{(k)}\|_{L^\infty([-1,1])}}$ and $\Tilde{Q}_{\sbm} := \dfrac{Q_{\sbm}}{\|Q_{\sbm}\|_{L^\infty([-1,1])}}$ to obtain the block-encodings $U_{\Tilde{P}_{\sbm}^{(k)}(A^{(k)})} \in (\alpha,m)\mathrm{-BE}\left(\Tilde{P}_{\sbm}^{(k)}(A^{(k)})\right)$ and $U_{\Tilde{Q}_{\sbm}(A^{(d-1)})} \in (\alpha,m)\mathrm{-BE}\left(\Tilde{Q}_{\sbm}(A^{(d-1)})\right)$. 
By taking the product of block-encodings (see Lemma~\ref{lem:product-of-block-encodings}), form the block-encoding $U_{T(\sbm,\Ab)} \in (\alpha^d,dm)\mathrm{-BE}\left(T(\sbm,\Ab)\right)$. Finally, taking the LCU over all $D$-ary bit strings $\sbm \in [D]^{d-1}$ yields 
\begin{align}
    U_{g(\Ab)} = \LCU\left( (T(\sbm,\Ab))_{\sbm \in [D]^{d-1}}, \betab \right) \in (\alpha^d \|\betab\|_1, dm + \log(D))\mathrm{-BE}(g(\Ab)) \,,
    \label{eq:MQET-LCU}
\end{align}
where $\betab \in \bbR_{\ge 0}^{[D]^{r-1}}$ is the vector of the LCU coefficients $\beta_{\sbm}$. Ref.~\cite{borns2023MQET} shows that by choosing $P_{\sbm}^{(k)} = T_{s_k}$, the one-norm of the LCU coefficients is bounded by $\|\betab\|_1 \le (D + 2)^{d-1}$.

\section{Assembly of Block Linear Systems}
\label{sec:block-linear-systems}

Let $A \in \bbC^{2^n \times 2^n}$, $B \in \bbC^{2^n \times 2^b}$ 
with $b < n$ for simplicity. Identify the matrix $B$ with the $n$-qubit matrix $\zpad(B)$ (see Eq.~\eqref{eq:zpad-defn}). 
For the purposes of imposing Dirichlet boundary conditions, we want to demonstrate a block-encoding for the block matrix 
\begin{align}
    M &:= \begin{bmatrix}
    A & B\\
    B^\dag & 0
\end{bmatrix} \,.
\label{eq:block-system-Hermitian}
\end{align}
More generally, we will also demonstrate how to obtain a block-encoding of the block matrix 
\begin{align}
    \Ab &:= \begin{bmatrix}
    A_{(00)} & A_{(10)}\\
    A_{(10)} & A_{(11)}
\end{bmatrix} \,,
\label{eq:block-system-general}
\end{align}
where $A_{(ij)}$ are $n$-qubit matrices. Additionally, given two $n$-qubit vectors $\ket{f_0}$ and $\ket{f_1}$, we want to prepare the vector 
\begin{align}
    \ket{\fb} &:= \frac{1}{\sqrt{2}}\begin{bmatrix}
        \ket{f_0}\\
        \ket{f_1}
    \end{bmatrix}
    \,.
\end{align}
Altogether, we will then have prepared the Quantum Linear Systems Problem (QLSP) $\Ab \ket{\xb} = \ket{\fb}$. The case where we have more than $4$ blocks in the partitioned matrix $\Ab$ is a straightforward generalization of the methods in Sections~\ref{subsec:BE-partitioned-matrices} and~\ref{subsec:partitioned-vectors}.

\subsection{Block-Encoding Partitioned Matrices}
\label{subsec:BE-partitioned-matrices}

We will begin with a block-encoding of the more general Eq.~\eqref{eq:block-system-general}, and then demonstrate how it can be modified to obtain a slightly more efficient block-encoding of Eq.~\eqref{eq:block-system-Hermitian}.

\begin{lem}
\label{lem:Mij-block-encodings}
    Let $A_{(ij)} \in \bbC^{2^n \times 2^n}$ for $i,j \in [2]$, and let $U_{A_{(ij)}} \in (\alpha_{ij},m,\epsilon_{ij})\mathrm{-BE}(A_{(ij)})$. Then 
    \begin{align}
        U_{M_{ij}} &:= 
        \begin{bmatrix}
            0 & I_n & 0 & 0\\
            I_n & 0 & 0 & 0\\
            0 & 0 & I_n & 0\\
            0 & 0 & 0 & I_n
        \end{bmatrix}^i
        \begin{bmatrix}
            U_{A_{(ij)}} & 0 & 0 & 0\\
            0 & I_n & 0 & 0\\
            0 & 0 & I_n & 0\\
            0 & 0 & 0 & I_n
        \end{bmatrix}
        \begin{bmatrix}
            I_n & 0 & 0 & 0\\
            0 & 0 & I_n & 0\\
            0 & I_n & 0 & 0\\
            0 & 0 & 0 & I_n
        \end{bmatrix}
        \begin{bmatrix}
            0 & I_n & 0 & 0\\
            I_n & 0 & 0 & 0\\
            0 & 0 & I_n & 0\\
            0 & 0 & 0 & I_n
        \end{bmatrix}^j\\
        &= (c_{0}\text{-}X \otimes I_{m+n})^i(c_{00}\text{-}U_{A_{(ij)}})(\mathrm{SWAP} \otimes I_{m+n})(c_{0}\text{-}X \otimes I_{m+n})^j
    \end{align}
    is an $(\alpha_{ij},m+1,\epsilon_{ij})\mathrm{-BE}(\ketbra{i}{j} \otimes A_{(ij)})$. In terms of quantum circuitry, we can represent this as 
    \tikzexternalenable
    \begin{align}
        \begin{quantikz}
            \lstick{$\ket{0}$} && \octrl{1} 
            \gategroup[2,steps=1,style={dashed,rounded
            corners,fill=blue!0, inner
            xsep=2pt},label style={anchor=north west,xshift=0.3cm,yshift=-0.cm},background]{j}
            && \swap{1} & \octrl{2} & \octrl{1}
            \gategroup[2,steps=1,style={dashed,rounded
            corners,fill=blue!0, inner
            xsep=2pt},label style={anchor=north west,xshift=0.3cm,yshift=-0.cm},background]{i}
            & \\
            && \targ{} && \targX{} & \octrl{0} & \targ{} & \\
            \lstick{$\ket{0}^{\otimes m}$}& \qwbundle{m} & && & \gate[2]{U_{A_{(ij)}}} & & \\
            & \qwbundle{n} & && & & &
        \end{quantikz}
        \,.
    \end{align}
    \tikzexternaldisable
\end{lem}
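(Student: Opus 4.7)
The plan is to verify the claim directly from \cref{def:block-encoding-repeat} by tracking the action of $U_{M_{ij}}$ on an arbitrary input of the form $\ket{0}_1 \ket{b}_2 \ket{0}^{\otimes m}_m \ket{\psi}_n$, where the first qubit is the newly added ancilla (initialized to $\ket{0}$), the second qubit carries the block-selector onto which $\ketbra{i}{j}$ will act, the next $m$ qubits are the ancilla used by $U_{A_{(ij)}}$, and the final $n$ qubits are the system the operator $A_{(ij)}$ acts on. Before running the state through the circuit, I would rewrite each matrix factor in terms of a recognizable primitive: the outer $4\times 4$ block matrices are $c_0\text{-}X$ gates from qubit~1 onto qubit~2 (flip the second qubit conditioned on the first qubit being $\ket{0}$), the penultimate block matrix is $U_{A_{(ij)}}$ doubly controlled on qubits~1 and~2 both being $\ket{0}$, and the central block matrix is $\mathrm{SWAP}$ on qubits~1 and~2. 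This matches the quantum circuit drawn underneath the algebraic expression.

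Next, I would walk the state $\ket{0}_1 \ket{b}_2 \ket{0}^{\otimes m}_m \ket{\psi}_n$ through the four stages of the circuit for generic $i,j \in \{0,1\}$. Since qubit~1 starts in $\ket{0}$, the first gate $(c_0\text{-}X)^j$ flips qubit~2 exactly $j$ times, giving $\ket{0}_1 \ket{b \oplus j}_2 \ket{0}^{\otimes m}_m \ket{\psi}_n$. The $\mathrm{SWAP}$ then produces $\ket{b \oplus j}_1 \ket{0}_2 \ket{0}^{\otimes m}_m \ket{\psi}_n$, so the doubly-zero-controlled $U_{A_{(ij)}}$ fires if and only if $b = j$. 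When $b = j$, the post-$U_{A_{(ij)}}$ state is $\ket{0}_1 \ket{0}_2 U_{A_{(ij)}}(\ket{0}^{\otimes m}_m \ket{\psi}_n)$; when $b \neq j$, it is $\ket{1}_1 \ket{0}_2 \ket{0}^{\otimes m}_m \ket{\psi}_n$. The final $(c_0\text{-}X)^i$ then rewrites qubit~2 in the $b = j$ branch to $\ket{i}_2$, while the $b \neq j$ branch is left untouched because qubit~1 is now $\ket{1}$.

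Projecting onto $\bra{0}_1 \otimes I_2 \otimes \bra{0}^{\otimes m}_m \otimes I_n$ immediately kills the $b \neq j$ branch (its qubit~1 component is $\ket{1}$), and reduces the $b = j$ branch to $\ket{i}_2 \otimes (\bra{0}^{\otimes m}_m \otimes I_n)\, U_{A_{(ij)}} (\ket{0}^{\otimes m}_m \otimes I_n)\ket{\psi}_n$, which by the block-encoding property of $U_{A_{(ij)}}$ equals $\tfrac{1}{\alpha_{ij}} \ket{i}_2 \otimes A_{(ij)}\ket{\psi}_n$ up to a vector of operator norm at most $\epsilon_{ij}/\alpha_{ij}$. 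Summing the two branches with Kronecker delta $\delta_{bj}$ and using linearity in $\ket{b}_2 \ket{\psi}_n$, this is exactly the action of $\tfrac{1}{\alpha_{ij}}(\ketbra{i}{j} \otimes A_{(ij)})$, so by \cref{def:block-encoding-repeat} the circuit realizes $(\alpha_{ij}, m+1, \epsilon_{ij})\mathrm{-BE}(\ketbra{i}{j} \otimes A_{(ij)})$. The only real pitfalls to guard against are (i) keeping the parities of the \emph{zero}-controls straight (the flips happen when the control is $\ket{0}$, and everything hinges on qubit~1 being initialized and returned to $\ket{0}$ in the accepted branch), and (ii) verifying that the unselected branch $b \neq j$ is indeed rejected by the $\bra{0}_1$ projector after the final $(c_0\text{-}X)^i$ leaves it unchanged — neither is deep, but both must be tracked cleanly to land on $\ketbra{i}{j}$ rather than some other rank-one block.
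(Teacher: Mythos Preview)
Your proof is correct and follows essentially the same approach as the paper: a direct verification of the block-encoding definition by propagating computational-basis states through the four stages of the circuit and checking that the surviving branch after the $\bra{0}_1\bra{0}^{\otimes m}$ projection is $\delta_{bj}\,\ket{i}\otimes\tilde A_{(ij)}\ket{\psi}/\alpha_{ij}$. The paper's version phrases this via matrix elements $(\bra{0}\bra{k}\bra{0}^{\otimes m}\otimes I_n)\,U_{M_{ij}}\,(\ket{0}\ket{\ell}\ket{0}^{\otimes m}\otimes I_n)=\delta_{ik}\delta_{j\ell}\,\tilde A_{(ij)}/\alpha_{ij}$, but the underlying computation is the same.
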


\begin{proof}
    Observe that for any $k,\ell \in [2]$, 
    \begin{align*}
        &(\bra{0} \otimes \bra{k} \otimes \bra{0}^{\otimes m} \otimes I_n) U_{M_{ij}} (\ket{0} \otimes \ket{\ell} \otimes \ket{0}^{\otimes m} \otimes I_n)\\
        \begin{split}
        &=(\bra{0} \otimes \bra{k} \otimes \bra{0}^{\otimes m} \otimes I_n)(c_{0}\text{-}X \otimes I_{m+n})^i(c_{00}\text{-}U_{A_{(ij)}})\\
        &\qquad\quad
        (\mathrm{SWAP} \otimes I_{m+n})(c_{0}\text{-}X \otimes I_{m+n})^j
        (\ket{0} \otimes \ket{\ell} \otimes \ket{0}^{\otimes m} \otimes I_n)
        \end{split}\\
        &=(\bra{0} \otimes \bra{i \oplus k} \otimes \bra{0}^{\otimes m} \otimes I_n)(c_{00}\text{-}U_{A_{(ij)}})
        (\ket{j \oplus \ell} \otimes \ket{0} \otimes \ket{0}^{\otimes m} \otimes I_n)\\
        &= \braket{0 | j \oplus \ell} \braket{i \oplus k | 0} (\bra{0}^{\otimes m} \otimes I_n) U_{A_{(ij)}} (\ket{0}^{\otimes m} \otimes I_n)\\
        &= \delta_{i,k} \delta_{j,\ell} \tilde{A}_{(ij)} / \alpha_{ij} \,,
    \end{align*}
    where ``$\oplus$'' denotes addition modulo $2$.
    Since the final result is $\epsilon$-close to $\delta_{i,k} \delta_{j,\ell} A_{(ij)} / \alpha_{ij}$ in spectral norm, we are done. 
\end{proof}

By taking an LCU of the block-encodings $U_{M_{ij}}$ that appear in Lemma~\ref{lem:Mij-block-encodings}, we can form a block-encoding for the system in Eq.~\eqref{eq:block-system-general}.

\begin{lem}[Block-encoding of a general partitioned matrix (Eq.~\eqref{eq:block-system-general})]
\label{lem:BE-of-partitioned-matrix}
    Let $A_{(ij)} \in \bbC^{2^n \times 2^n}$ for $i,j \in [2]$, and $U_{M_{ij}} \in (\alpha_{ij},m+1,\epsilon_{ij})\mathrm{-BE}(\ketbra{i}{j} \otimes A_{(ij)})$ as in Lemma~\ref{lem:Mij-block-encodings}. Define $\betab := \begin{bmatrix}
        \alpha_{00} & \alpha_{01} & \alpha_{10} & \alpha_{11}
    \end{bmatrix}^\dag$ and suppose that we have a $(\sum_{i,j \in [2]} \alpha_{ij},2,\epsilon_p)$-state-preparation-pair for $\betab$. Then 
    \begin{align}
        \LCU_{\epsilon_p}\left((U_{M_{ij}})_{i,j \in [2]}, \betab \right) \in 
        \left(\sum_{i,j \in [2]} \alpha_{ij}, m + 3, \epsilon_p + \sum_{i,j \in [2]} \epsilon_{ij} \right)\mathrm{-}BE\left(\Ab\right)
        \,.
    \end{align}
\end{lem}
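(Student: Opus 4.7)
My plan is to view $\Ab$ as a sum over four summands and invoke the linear combination of unitaries (Lemma~\ref{lem:linear-combination-of-block-encodings}). Specifically, I would first rewrite
\begin{equation*}
    \Ab = \sum_{i,j \in [2]} \ketbra{i}{j} \otimes A_{(ij)},
\end{equation*}
so that each summand is exactly the matrix that $U_{M_{ij}}$ block-encodes (with subnormalization $\alpha_{ij}$) by Lemma~\ref{lem:Mij-block-encodings}. The obvious temptation is to apply Lemma~\ref{lem:linear-combination-of-block-encodings} directly with trivial LCU coefficients $y_{ij} \equiv 1$, but the stated LCU lemma assumes the $U_j$'s share a common subnormalization, whereas here the $\alpha_{ij}$'s can all differ. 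This is the one nontrivial bookkeeping step in the argument.

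To handle this, I would follow the remark immediately after Lemma~\ref{lem:linear-combination-of-block-encodings} and absorb the subnormalizations into the coefficient vector: take $\betab = (\alpha_{00}, \alpha_{01}, \alpha_{10}, \alpha_{11})^\dag$ and use the assumed $(\|\betab\|_1, 2, \epsilon_p)$-state-preparation-pair $(P_L, P_R)$. Expanding the LCU circuit $\widetilde{W} = (P_L^\dag \otimes I)\cdot \textsc{sel} \cdot (P_R \otimes I)$ in the same way as the proof of Lemma~\ref{lem:linear-combination-of-block-encodings}, the selected block satisfies
\begin{equation*}
    (\bra{0}^{\otimes(m+3)} \otimes I)\widetilde{W}(\ket{0}^{\otimes(m+3)} \otimes I)
    = \sum_{ij} c_{ij}^* d_{ij} \cdot \frac{1}{\alpha_{ij}}\tilde{M}_{ij},
\end{equation*}
where $\tilde{M}_{ij}$ is the matrix $U_{M_{ij}}$ actually encodes (within $\epsilon_{ij}$ of $\ketbra{i}{j}\otimes A_{(ij)}$). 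Multiplying by $\|\betab\|_1$ and using $\|\betab\|_1 c_{ij}^* d_{ij} \approx \alpha_{ij}$ shows that $\|\betab\|_1\cdot(\bra{0}^{\otimes(m+3)} \otimes I)\widetilde{W}(\ket{0}^{\otimes(m+3)} \otimes I) \approx \sum_{ij} \tilde{M}_{ij} \approx \Ab$, which confirms the claimed subnormalization.

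For the error, I would apply the triangle inequality twice, exactly as in the proof of Lemma~\ref{lem:linear-combination-of-block-encodings}: once to peel off the block-encoding errors $\sum_{ij}\epsilon_{ij}$ (using $\|\tilde M_{ij} - \ketbra{i}{j}\otimes A_{(ij)}\| \le \epsilon_{ij}$), and once to peel off the state-preparation error, which contributes at most $\sum_{ij} |\alpha_{ij} - \|\betab\|_1 c_{ij}^* d_{ij}|\cdot \|\tilde M_{ij}\|/\alpha_{ij} \le \epsilon_p$ since $\|\tilde M_{ij}\| \le \alpha_{ij}$ (the operator norm of any block of a unitary is at most $1$, so the unnormalized block has norm at most $\alpha_{ij}$). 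Summing gives the total error $\epsilon_p + \sum_{ij}\epsilon_{ij}$. The ancilla count is $m+3$, coming from the $m+1$ ancilla inside each $U_{M_{ij}}$ together with the $2$ new qubits on which $(P_L,P_R)$ act. The main obstacle is simply keeping the normalizations straight, but once the coefficient vector is chosen as $\betab$ (rather than the naive $(1,1,1,1)^\dag$), the proof reduces to the standard LCU argument.
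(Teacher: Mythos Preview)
Your proposal is correct and follows exactly the approach the paper takes: the paper's proof is the one-liner ``Use Lemma~\ref{lem:linear-combination-of-block-encodings} together with Lemma~\ref{lem:Mij-block-encodings},'' and you have simply unpacked that citation, including the bookkeeping for the distinct subnormalizations $\alpha_{ij}$ (handled via the remark after Lemma~\ref{lem:linear-combination-of-block-encodings}) and the error terms.
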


\begin{proof}
    Use Lemma.~\ref{lem:linear-combination-of-block-encodings} together with Lemma.~\ref{lem:Mij-block-encodings}.
\end{proof}

We now turn our attention to block-encoding $M := \begin{bmatrix}
    A & B\\
    B^\dag & 0
\end{bmatrix}$ given access to $U_A \in (\alpha,a,\epsilon_a)\mathrm{-BE}(A)$ and $U_B\in (\beta,b,\epsilon_b)\mathrm{-BE}(B)$.
Using an ancilla qubit, we will block-encode an enlarged system $\tilde{M}$ using the LCU 
\begin{align}
    \tilde{M} &:= \begin{bmatrix}
        A & B & 0 & 0\\
        B^\dag & 0 & I_n & 0\\
        0 & I_n & 0 & I_n\\
        0 & 0 & I_n & I_n
    \end{bmatrix}\\
    &= 
    \begin{bmatrix}
        A & 0 & 0 & 0\\
        0 & I_n & 0 & 0\\
        0 & 0 & I_n & 0\\
        0 & 0 & 0 & I_n
    \end{bmatrix}
    \begin{bmatrix}
        I_n & 0 & 0 & 0\\
        0 & 0 & I_n & 0\\
        0 & I_n & 0 & 0\\
        0 & 0 & 0 & I_n
    \end{bmatrix}
    + 
    \begin{bmatrix}
        0 & I_n & 0 & 0\\
        I_n & 0 & 0 & 0\\
        0 & 0 & 0 & I_n\\
        0 & 0 & I_n & 0
    \end{bmatrix}
    \begin{bmatrix}
        B^\dag & 0 & 0 & 0\\
        0 & B & 0 & 0\\
        0 & 0 & I_n & 0\\
        0 & 0 & 0 & I_n
    \end{bmatrix}\\
    &= \underbrace{ (c_{00}\text{-}A) (\mathrm{SWAP} \otimes I_n) }_{\Dc} + \underbrace{ (I \otimes X \otimes I_n)(c_{00}\text{-}B^\dag)(c_{01}\text{-}B) }_{\Oc}
\end{align}
where we've omitted the subnormalization factors. For the first term $\Dc$, Lemma~\ref{lem:Mij-block-encodings} gives a $U_{\Dc} \in (\alpha,a+1,\epsilon_a)\mathrm{-BE}(\Dc)$. For the second term $\Oc$, by modifying the proof of the Lemma to account for the product of block-encodings $(c_{00}\text{-}U_B^\dag)(c_{01}\text{-}U_B)$, we have the $(\beta,b+1,\epsilon_b)$-block-encoding $U_\Oc$ given by 
\tikzexternalenable
\begin{align}
    \begin{quantikz}
        \lstick{$\ket{0}$} && \octrl{1} & \octrl{1} & & \\
        && \octrl{1} & \ctrl{1} & \targ{} & \\
        \lstick{$\ket{0}^{\otimes b}$} & \qwbundle{b} & \gate[2]{U_B^\dag} & \gate[2]{U_B} & &\\
        & \qwbundle{n} & & & &
    \end{quantikz}
\end{align}
\tikzexternaldisable
By letting $m := \max\{a, b\}$ and appending ancilla qubits to the top of the circuits that form $U_\Dc$ or $U_\Oc$, we have $U_\Dc \in (\alpha,m+1,\epsilon_a)\mathrm{-BE}(\Dc)$ and $U_\Oc \in (\beta,m+1,\epsilon_b)\mathrm{-BE}(\Oc)$. Then following Lemma~\ref{lem:BE-of-partitioned-matrix}, taking the LCU using an $(\alpha + \beta, 1, \epsilon_p)$-state-preparation-pair for $\betab := \begin{bmatrix}
    \alpha & \beta
\end{bmatrix}^\dag$ yields 
\begin{align}
    \LCU_{\epsilon_p}\left( (U_\Dc, U_\Oc) , \betab \right) \in 
    \left(\alpha + \beta, m + 2, \epsilon_p + \alpha + \beta \right)\mathrm{-}BE\left(\tilde{M}\right)
    \,,
\end{align}
which is a slightly more efficient result than the $\left(\alpha + 2\beta, m + 3, \epsilon_p + \alpha + 2\beta \right)$-block-encoding given in the lemma.

\subsection{Preparing Partitioned Vectors}
\label{subsec:partitioned-vectors}

Suppose that we have access to two state preparation unitaries $U_0,U_1$ that satisfy 
\begin{align}
    U_0 \ket{0}^{\otimes n} &= \ket{f_0}
    \quad \text{and} \quad
    U_1 \ket{0}^{\otimes n} = \ket{f_1} \,.
\end{align}
Then the vector $\ket{\fb} := \frac{1}{\sqrt{2}}\left(\ket{0} \otimes \ket{f_0} + \ket{1} \otimes \ket{f_1} \right)$ can be prepared using the LCU circuit, but without any post-selection, i.e., 
\tikzexternalenable
\begin{align}
\begin{quantikz}
    \lstick{$\ket{0}$} && \gate{H} & \octrl{1} & \ctrl{1} & \rstick[2]{$\ket{\fb}$} \\
    \lstick{$\ket{0}^{\otimes n}$} & \qwbundle{n} && \gate{U_0} & \gate{U_1} &
\end{quantikz}
\,.
\end{align}
\tikzexternaldisable

If we wish to access the superposition of the vectors $\ket{f_0}$ and $\ket{f_1}$, we can use Lemma~\ref{lem:BE-of-partitioned-matrix} to obtain a $(2,1)$-block-encoding of $\begin{bmatrix}
    U_0 & U_1\\
    0 & 0
\end{bmatrix}$, which yields the state $\frac{1}{\sqrt{2}}\left(\ket{0} + \ket{1}\right) \otimes \frac{1}{\sqrt{2}} \left( \ket{f_0} + \ket{f_1} \right)$ when successfully applied to the state $\frac{1}{\sqrt{2}}\left(\ket{0} + \ket{1}\right)\otimes \ket{0}^{\otimes n}$ (after which we may throw away the first qubit). Normally, the addition of quantum states from different Hilbert spaces is forbidden \cite{alvarez2015forbidden}. Using LCU, however, we can access the ``forbidden quantum adder'', albeit with a success probability of $\frac{1}{2}$.

\section{Incorporating Dirichlet Boundary Conditions Using Projectors}
\label{sec:projector-bcs}

In this section, we give an alternative method for enforcing Dirichlet boundary conditions to the algorithm detailed in Section~\ref{subsec:assembling-Lagrange-multipliers}. The method herein incorporates the boundary conditions directly into the system of equations by modifying the linear operator, and thus more closely resembles the classical method for enforcing boundary conditions~\cite{ciarlet2023finite}.

In Section~\ref{subsec:weak-formulation-modified-Poisson}, we show that at the end of the assembly procedure, we have a linear system of the form 
\begin{align}
    \Lc \ub &= \bb
    \,,
    \label{eq:assembled-linear-system}
\end{align}
where $\Lc$ is a linear operator.
In classical implementations of the finite element method, Dirichlet boundary conditions are typically enforced strongly by deleting rows in the final assembled system of equations (Eq.~\eqref{eq:assembled-linear-system}) and replacing each of them with the known value of the solution. That is, to enforce the condition $x_i = \bar{x}_i$, the $i$-th row of $\Lc$ is replaced with the row vector $\eb_i^\dagger$ and the $i$-th entry of $\bb$ is replaced with the known value $\bar{x}_i$. 
This same procedure may be done on a quantum computer by introducing a projection operator $\bbP_\mathrm{int}$ that indicates the interior of the domain (defined in Eq.~\eqref{eq:interior-dof-projector}). 
The operator $\bbP_\mathrm{bd} := (I - \bbP_\mathrm{int})$ is then a projection onto the boundary degrees of freedom, and the known values can be incorporated into any vector $\bar{\ub}$ that satisfies 
\begin{align}
    (I - \bbP_\mathrm{int}) (\ub - \bar{\ub}) = \zerob
    \,.
    \label{eq:Dirichlet-bc-projection-condition}
\end{align}
Left-multiplying Eq.~\eqref{eq:assembled-linear-system} by $\bbP_\mathrm{int}$ and incorporating Eq.~\eqref{eq:Dirichlet-bc-projection-condition}, we can rearrange the assembled system as
\begin{align}
    \underbrace{ 
    \left[\bbP_\mathrm{int} \Lc \bbP_\mathrm{int} + (I - \bbP_\mathrm{int}) \right]
    }_{\Lc_\mathrm{Dirich.}} \ub &= \underbrace{ 
    \bbP_\mathrm{int}\bb - (\Lc - I)(I - \bbP_\mathrm{int})\bar{\ub}
    }_{\bb_\mathrm{Dirich.}} 
    \,.
    \label{eq:Dirichlet-incorporated-assembled-system}
\end{align}
The operator $\Lc_\mathrm{Dirich.}$ defined above is modified so that the fixed degrees of freedom no longer interact with the system, as the forcing vector $\bb_\mathrm{Dirich.}$ on the right-hand side has been modified to include the effects of the fixed boundary conditions. Note that although Eq.~\eqref{eq:Dirichlet-incorporated-assembled-system} can be simplified (using Eq.~\eqref{eq:Dirichlet-bc-projection-condition}) to 
$ (\bbP_\mathrm{int} \Lc \bbP_\mathrm{int}) \ub = \bbP_\mathrm{int}\bb - \Lc (I - \bbP_\mathrm{int})\bar{\ub}$, this renders the operator on the left-hand side of the equation rank-deficient. Block-encoding the projection operator $\bbP_\mathrm{int}$, and then taking the product and LCU of block-encodings to form the system in Eq.~\eqref{eq:Dirichlet-incorporated-assembled-system} enforces any Dirichlet boundary conditions in the FEM problem. 
Finally, we note that Eqs.~\eqref{eq:Dirichlet-incorporated-assembled-system} and~\eqref{eq:FEM-block-system} are equivalent.

\end{appendices}

\end{document}